\tikzset{%
    symbol/.style={%
        draw=none,
        every to/.append style={%
            edge node={node [sloped, allow upside down, auto=false]{$#1$}}}
    }
}
\tikzset{%
    symbol/.style={%
        draw=none,
        every to/.append style={%
            edge node={node [sloped, allow upside down, auto=false]{$#1$}}}
    }
}
\newtheorem{thm}{Theorem}
\newtheorem{lem}{Lemma}
\newtheorem{cor}{Corollary}
\newtheorem{prop}{Proposition}
\newtheorem{defn}{Definition}
\newcommand{\setcomp}[2]{\left\{ \, #1 \; \middle| \; #2 \, \right\}} 
\newcommand{\sem}[2]{\left\llbracket #1 \right\rrbracket^{#2}}
\newcommand{\myeq}{\stackrel{\mathclap{\mbox{\tiny def}}}{=}}
\newcommand{\ctxt}[2]{#1\mathbin{|}#2}
\newcommand{\ibox}[1]{\mathsf{box\;}#1}
\newcommand{\letbox}[3]{\mathsf{let\;box\;} #1 = #2 \mathsf{\;in\;} #3}
\newcommand{\bars}[1]{\left\lvert #1 \right\rvert}
\newcommand{\rel}[1]{\mathrel{R}_{#1}}
\begin{document}

\title{Modalities, Cohesion, and Information Flow}

\author{G. A. Kavvos}
\orcid{0000-0001-7953-7975}
\affiliation{
  \position{Postdoctoral Research Associate}
  \department{Department of Mathematics and Computer Science}
  \institution{Wesleyan University}
  \streetaddress{265 Church Street}
  \city{Middletown}
  \state{Connecticut}
  \postcode{06459}
  \country{United States of America}
}
\email{gkavvos@wesleyan.edu}

\begin{abstract}
  It is informally understood that the purpose of modal type
  constructors in programming calculi is to control the flow of
  information between types. In order to lend rigorous support to
  this idea, we study the category of classified sets, a variant
  of a denotational semantics for information flow proposed by
  Abadi et al. We use classified sets to prove multiple
  noninterference theorems for modalities of a monadic and
  comonadic flavour. The common machinery behind our theorems
  stems from the the fact that classified sets are a (weak) model
  of Lawvere's theory of axiomatic cohesion. In the process, we
  show how cohesion can be used for reasoning about multi-modal
  settings. This leads to the conclusion that cohesion is a
  particularly useful setting for the study of both information
  flow, but also modalities in type theory and programming
  languages at large.
\end{abstract}

\begin{CCSXML}
<ccs2012>
<concept>
<concept_id>10003752.10003790.10003793</concept_id>
<concept_desc>Theory of computation~Modal and temporal logics</concept_desc>
<concept_significance>500</concept_significance>
</concept>
<concept>
<concept_id>10003752.10003790.10011740</concept_id>
<concept_desc>Theory of computation~Type theory</concept_desc>
<concept_significance>500</concept_significance>
</concept>
<concept>
<concept_id>10003752.10010124.10010131.10010133</concept_id>
<concept_desc>Theory of computation~Denotational semantics</concept_desc>
<concept_significance>500</concept_significance>
</concept>
<concept>
<concept_id>10003752.10010124.10010131.10010137</concept_id>
<concept_desc>Theory of computation~Categorical semantics</concept_desc>
<concept_significance>500</concept_significance>
</concept>
<concept>
<concept_id>10003752.10010124.10010125.10010130</concept_id>
<concept_desc>Theory of computation~Type structures</concept_desc>
<concept_significance>300</concept_significance>
</concept>
<concept>
<concept_id>10011007.10011006.10011008.10011009.10011012</concept_id>
<concept_desc>Software and its engineering~Functional languages</concept_desc>
<concept_significance>500</concept_significance>
</concept>
<concept>
<concept_id>10011007.10011006.10011008</concept_id>
<concept_desc>Software and its engineering~General programming languages</concept_desc>
<concept_significance>300</concept_significance>
</concept>
</ccs2012>
\end{CCSXML}

\ccsdesc[500]{Theory of computation~Modal and temporal logics}
\ccsdesc[500]{Theory of computation~Type theory}
\ccsdesc[500]{Theory of computation~Denotational semantics}
\ccsdesc[500]{Theory of computation~Categorical semantics}
\ccsdesc[300]{Theory of computation~Type structures}
\ccsdesc[500]{Software and its engineering~Functional languages}
\ccsdesc[300]{Software and its engineering~General programming languages}

\keywords{information flow, information flow control, type
systems, modal type systems, cohesion, modal type theory,
modalities, noninterference, category theory}

\maketitle

\section{Introduction}

Taming the flow of information within a computer system has been a
problem of significant interest since the early days of Computer
Science; see, for example, the models of Bell and LaPadula
\cite{Bell1996, Rushby1986}, and the influential work of
\citet{Denning1976}, which was the first to introduce the use of
\emph{lattices} for modelling secure information flow. The
objective of these models is usually to express the property that
data cannot flow in or out of certain regions of a computer
system, thus achieving a certain form of confidentiality or
integrity.

A modern way of achieving the above objective is to make it
\emph{language-based}. That is: to enrich a programming language
with features that specify or control the flow of data, so that
the programs we write are correct---or rather, secure---by design.
One way of doing so is through the use of \emph{type systems}
which annotate variables or program expressions with security
levels. It then suffices to prove that the type system ensures
some form of \emph{noninterference property}, which invariably
states that, in a well-typed program, data cannot flow contrary to
our wishes, e.g. from a type labelled as being of high security to
one of low security. Several type systems of this form have been
proposed, but all seem to be in some sense `equivalent;' see e.g.
the recent work of \citet{Rajani2018}.

Many of theses type systems feature some form of \emph{modality},
which is broadly construed as a unary type constructor of some
sort: see e.g. \cite{Abadi1999, Miyamoto2004, Shikuma2008}. The
rules that govern the behaviour of a modality almost always
silently endow it with certain information flow properties.  This
is intuitively well-known both by modal type theorists, as well as
practitioners who use modal type systems in programming calculi.

Nevertheless, the implicit properties of these systems have not
been subjected to a detailed treatment. Given the recent
resurgence of interest in modal type theory---as exemplified by
cohesive homotopy type theory \cite{Shulman2018}, and guarded type
theory \cite{Clouston2016} on the theoretical side, but also
calculi for functional reactive programming
\cite{Krishnaswami2013}, effects \cite{Curien2016} or coeffects
\cite{Petricek2014} on the more application-driven end---we
believe that there is a need for a more universal approach to
modal type theory. There have been major recent advances on the
syntactical side, particularly through the fibrational framework
of \citet{Licata2017}. However, we are still lacking a
`Swiss-army-knife model' that can help us understand and prove
properties about the information flow of these modalities. This is
the subject of the present paper.

For this purpose we introduce a refinement of the \emph{dependency
category} of \citet{Abadi1999}, which we call \emph{the category
of classified sets}. These are sets equipped with indexed logical
relations that encode \emph{indistinguishability}, and thus a form
of \emph{data hiding}. \citet{Abadi1999} used a variant of this
model to prove noninterference theorems for various information
flow calculi that they translated to their \emph{dependency core
calculus}, 

We will present a different, significantly more `high tech'
approach. First, we notice that the relations of
indistinguishability that classified sets carry must be respected:
that is, if $x \rel{} y$ then $f(x) \rel{} f(y)$ for any morphism
$f : X \rightarrow Y$ of our model. Thus, if $x \rel{} y$, the
`points' $x$ and $y$ of $X$ can be considered to be `arbitrarily
close' to each other in a `space.' So close, in fact, as to not be
distinguishable. This analogy is rife with topological intuition.
We will see that these relations describe a sort of
\emph{cohesion} between points. It so happens that an axiomatic
approach to this idea has been developed by \citet{Lawvere2007},
and that classified sets form a weak model of this theory. 

Once this fact is established, we can show that many of the
information flow properties that we wish to establish follow
directly from this abstract framework of cohesion. From that
point, we want to (a) identify appropriate structure in the
category of classified sets for modelling a host of modal type
theories that are used in information flow control, and (b) use
this structure to prove noninterference theorems for these type
theories.

This paper proceeds as follows. In \S\ref{sec:defns} we introduce
the category of classified sets, and show that it is a finitely
complete and cocomplete, bicartesian closed category, and hence a
model of the simply typed $\lambda$-calculus.  Then, in
\S\ref{sec:coh} we will present the rudiments of \emph{axiomatic
cohesion}, and show that classified sets are pre-cohesive relative
to ordinary sets. This will introduce some modal operations, and
lead us to prove some basic noninterference theorems in
\S\ref{sec:noninterference1}. 

Then, in \S\ref{sec:coh2}, we present a levelled view of cohesion.
Classified sets are defined parametrically in a set of security
levels $\mathcal{L}$. If we add a new set $\pi$ of security
levels, then the new category of sets---which is classified over
$\mathcal{L} \cup \pi$---is pre-cohesive over the category of sets
classified over $\mathcal{L}$. This generates some more modal
operators, which are now `level-sensitive.' These are used in
\S\ref{sec:noninterference2} to prove another set of
noninterference theorems. We make some concluding remarks in
\S\ref{sec:conc}.

Category theory is used extensively throughout the paper. The main
tool is that of \emph{adjunctions}, and their close relationship
to (co)monads, both of which are beautifully covered in \cite[\S
9-10]{Awodey2010}. We only use one advanced concept, namely that
of (co)reflective subcategories, which correspond to
\emph{idempotent} monads and comonads; this is covered in \cite[\S
IV.3]{MacLane1978}, \cite[Vol. II, \S 4.2.4]{Borceux1994}, or the
nLab
wiki.\footnote{\url{https://ncatlab.org/nlab/show/reflective+subcategory}}

\section{Classified Sets}
  \label{sec:defns}

\setlength{\epigraphwidth}{0.7\textwidth}

\epigraph{Reynolds integrated these two strands of thought and
formulated a general principle of relational parametricity that is
applicable to a wide range of contexts for capturing the notion of
``information hiding'' or ``abstraction.'' Unfortunately, we
believe that the magnitude of this achievement has not been
sufficiently recognized.}{C. Hermida, U. S. Reddy, and E. P.
Robinson \cite{Hermida2014}}

Let $\mathcal{L}$ be a set of labels, which we call \emph{security
levels}. We assume precisely nothing about $\mathcal{L}$, so our
theory is curiously independent of its structure (finite,
infinite, partial order, lattice, etc.).

\begin{defn}
  A \emph{classified set $S$} over $\mathcal{L}$ (or: a set $S$
  classified over $\mathcal{L}$) consists of
  \begin{enumerate}
      \item
        an ordinary carrier set $\bars{S}$, and
      \item
        a family of reflexive relations $(R_\ell)_{\ell \in
        \mathcal{L}}$ on $\bars{S}$, one for each level $\ell \in
        \mathcal{L}$.
    \end{enumerate}
\end{defn}

\noindent We will---more often than not---write $x \in S$ to mean
$x \in \bars{S}$. The underlying intuition pertaining to a
classified set is that each level $\ell \in \mathcal{L}$ is to be
understood as a \emph{security clearance}, and the relation
$\rel{\ell}$ models \emph{indistinguishability} for users at that
clearance. That is: if $x \rel{\ell} y$, then a user with
clearance $\ell$ must \emph{not} be able to distinguish between
$x$ and $y$. Reflexivity models the simple fact that $x$ should be
indistinguishable to itself. In logical relations, reflexivity is
a theorem; but since not everything is defined inductively here,
it must become an explicit requirement.\footnote{\citet{Abadi1999}
did not require their relations to be reflexive, which is a key
property in showing pre-cohesion in \S\ref{sec:coh}.}

The kind of \emph{functions} admissible in our mathematical
universe shall be precisely those that map indistinguishable
inputs to indistinguishable outputs.

\begin{defn} 
  Let $S$ and $S'$ be sets classified over $\mathcal{L}$. A
  \emph{morphism of classified sets} $f : S \rightarrow S'$ is a
  function $f : \bars{S} \rightarrow \bars{S'}$ such that $x
  \rel{\ell} y$ implies $f(x) \rel{\ell} f(y)$ for all $\ell \in
  \mathcal{L}$.
\end{defn}

\noindent Classified sets over $\mathcal{L}$ and their morphisms
constitute a category, which we denote as
$\textbf{CSet}_\mathcal{L}$.

\subsection{Limits and Colimits}

We move on to the examination of what kind of limits and colimits
exist in classified sets, which tells us which kinds of data type
we are able to construct.

We let $\mathbf{1}$ be the classified set with a singleton carrier
set $\{\ast\}$, and $\ast \rel{\ell} \ast$ for all $\ell \in
\mathcal{L}$.

\begin{prop}
  $\mathbf{1}$ is a terminal object in
  $\textbf{CSet}_\mathcal{L}$.
\end{prop}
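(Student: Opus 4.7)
The plan is to exhibit a unique candidate morphism and then verify the two obligations: that it actually belongs to $\textbf{CSet}_\mathcal{L}$, and that it is forced.

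First I would define $!_S : S \to \mathbf{1}$ on carriers by $!_S(x) \myeq \ast$ for every $x \in \bars{S}$. Since $\bars{\mathbf{1}} = \{\ast\}$ is a singleton, this is the only function $\bars{S} \to \bars{\mathbf{1}}$ in $\textbf{Set}$, which will handle uniqueness for free once we confirm that every morphism in $\textbf{CSet}_\mathcal{L}$ is in particular an ordinary function between underlying carriers.

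Next I would check that $!_S$ respects indistinguishability at every level $\ell \in \mathcal{L}$. Suppose $x \rel{\ell} y$ in $S$; then $!_S(x) = \ast$ and $!_S(y) = \ast$, and the defining clause of $\mathbf{1}$ states precisely that $\ast \rel{\ell} \ast$ for every $\ell$. So the required implication holds. This is the only nontrivial step, and it is exactly where the reflexivity-at-$\ast$ clause in the definition of $\mathbf{1}$ is used; without it, terminality could fail.

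For uniqueness, given any other morphism $g : S \to \mathbf{1}$, its underlying function must land in $\{\ast\}$, hence $g(x) = \ast = {!_S}(x)$ for all $x \in \bars{S}$, so $g$ and $!_S$ agree as functions on carriers, and therefore as morphisms of classified sets. I do not anticipate any obstacle: the argument is essentially the usual proof that a singleton is terminal in $\textbf{Set}$, lifted to $\textbf{CSet}_\mathcal{L}$ using the reflexivity built into $\mathbf{1}$.
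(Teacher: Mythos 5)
Your proposal is correct and follows exactly the argument the paper sketches in the remark after the proposition: the constant map to $\ast$ is the only candidate, and it is a morphism precisely because the defining clause $\ast \rel{\ell} \ast$ of $\mathbf{1}$ makes the target relation hold trivially. Nothing is missing.
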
 That is: there is a unique function from a classified
set $X$ to $\mathbf{1}$; it maps everything to $\ast$, hence
collapsing all related pairs to one element, which cannot be
distinguished from itself.

Similarly, we let the classified set $\mathbf{0}$ be the set whose
carrier is the empty set, and all of whose relations are empty.
Since there are no relations to preserve, it is evident that the
unique empty function from $\mathbf{0}$ to any $\bars{A}$ is a
morphism, so

\begin{prop}
  $\mathbf{0}$ is an initial object in
  $\textbf{CSet}_\mathcal{L}$.
\end{prop}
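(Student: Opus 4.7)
The plan is to exhibit, for an arbitrary classified set $A$ over $\mathcal{L}$, a unique morphism $\mathbf{0} \to A$ in $\textbf{CSet}_\mathcal{L}$. I would proceed in two short steps: existence, then uniqueness, with both steps reducing to standard facts about the empty set once we have verified that the morphism condition is vacuously satisfied.

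For existence, I would take the empty function $!_A : \bars{\mathbf{0}} = \emptyset \to \bars{A}$, which is the unique set-theoretic function from the empty carrier. To promote it to a morphism of classified sets, I must check that for every $\ell \in \mathcal{L}$, whenever $x \rel{\ell} y$ in $\mathbf{0}$, we have $!_A(x) \rel{\ell} !_A(y)$ in $A$. Since every relation on $\mathbf{0}$ is the empty relation by construction, there are no such pairs $(x,y)$, and the implication holds vacuously. Thus $!_A$ is a bona fide morphism in $\textbf{CSet}_\mathcal{L}$.

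For uniqueness, suppose $g : \mathbf{0} \to A$ is any morphism. As a function on carriers, $g : \emptyset \to \bars{A}$ must coincide with $!_A$, because the empty function is the unique function out of the empty set in $\textbf{Set}$. Hence $g = !_A$ as morphisms of classified sets. Combining existence and uniqueness yields the claim.

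There is no real obstacle here, since the reflexivity requirement on the relations of a classified set is vacuous on an empty carrier, so no extra conditions need to be checked for $\mathbf{0}$ to qualify as a classified set in the first place; every verification reduces to a vacuous quantification over the empty set.
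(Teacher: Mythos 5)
Your proposal is correct and matches the paper's argument exactly: the paper likewise observes that the empty function out of $\mathbf{0}$ vacuously preserves the (empty) relations and is unique because the empty function is the unique function out of $\emptyset$ in $\textbf{Set}$. Your write-up merely spells out the existence and uniqueness steps that the paper compresses into one sentence.
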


For classified sets $A$ and $B$, we let $\bars{A \times B} \myeq
\bars{A} \times \bars{B}$. Given any $\ell \in \mathcal{L}$, we
define \[
  (a_1, b_1) \rel{\ell} (a_2, b_2) 
    \Longleftrightarrow
  a_1 \rel{\ell} a_2 \wedge b_1 \rel{\ell} b_2
\] Hence, two pairs are indistinguishable exactly when they are so
componentwise. If, for example, the second components are
distinguishable, then so are the pairs; but this does not entail
that we can distinguish the first components! In this way, we can
`classify' pairs without resorting to more complicated sets of
labels, e.g. $\mathcal{L} \times \mathcal{L}$.

The standard set-theoretic projections preserve $\rel{\ell}$, as
does the standard set-theoretic product morphism $\langle f, g
\rangle : \bars{C} \rightarrow \bars{A} \times \bars{B}$ for any
$f : C \rightarrow A$ and $g : C \rightarrow B$. Hence,
\begin{prop}
  The classified set $A \times B$ is the categorical product of
  $A$ and $B$ in $\textbf{CSet}_\mathcal{L}$.
\end{prop}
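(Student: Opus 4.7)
The plan is to follow the standard template for verifying a categorical product, but with the extra obligation of checking that every structure map respects the indistinguishability relations at every level $\ell \in \mathcal{L}$. Since the carrier is defined to be the ordinary set-theoretic product $\bars{A} \times \bars{B}$, the underlying universal property in $\textbf{Set}$ is free; the real content is showing that the relevant set functions lift to morphisms of classified sets.

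First I would check that the projections $\pi_1 : A \times B \to A$ and $\pi_2 : A \times B \to B$ are morphisms. Suppose $(a_1,b_1) \rel{\ell} (a_2,b_2)$; by the very definition of the product relation this unfolds to $a_1 \rel{\ell} a_2$ and $b_1 \rel{\ell} b_2$, so $\pi_1$ and $\pi_2$ preserve $\rel{\ell}$ for every $\ell$. Next, for the universal property, given any classified set $C$ with morphisms $f : C \to A$ and $g : C \to B$, I would take the set-theoretic pairing $\langle f, g \rangle : \bars{C} \to \bars{A} \times \bars{B}$, defined by $c \mapsto (f(c), g(c))$, and verify it is a morphism: if $c_1 \rel{\ell} c_2$ in $C$, then since $f$ and $g$ are morphisms we get $f(c_1) \rel{\ell} f(c_2)$ and $g(c_1) \rel{\ell} g(c_2)$, which is exactly the definition of $(f(c_1), g(c_1)) \rel{\ell} (f(c_2), g(c_2))$ in $A \times B$.

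The commutativity of the two triangles $\pi_1 \circ \langle f, g \rangle = f$ and $\pi_2 \circ \langle f, g \rangle = g$ is inherited directly from $\textbf{Set}$, as is uniqueness: any morphism $h : C \to A \times B$ satisfying $\pi_1 \circ h = f$ and $\pi_2 \circ h = g$ must equal $\langle f, g \rangle$ at the level of underlying functions, and hence as a morphism of classified sets (since a morphism of classified sets is determined by its underlying function). No step is a genuine obstacle; the only thing to watch is that the product relation was chosen precisely so the bi-implication in the definition of $\rel{\ell}$ on $A \times B$ makes both directions of the verification—projections preserving indistinguishability and pairing preserving it—trivial unfoldings of the definition.
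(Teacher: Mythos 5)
Your proposal is correct and follows exactly the route the paper takes: check that the set-theoretic projections and the pairing $\langle f, g \rangle$ preserve each $\rel{\ell}$, and inherit the universal property from $\textbf{Set}$. The paper states this in a single sentence; you have merely unfolded the same verification in detail.
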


\noindent A similar story applies to coproducts: given $A$ and
$B$, we define \[
  \bars{A + B} \myeq \bars{A} + \bars{B}
    = \setcomp{(0, a)}{a \in \bars{A}} \cup
      \setcomp{(1, b)}{b \in \bars{B}}
\] and, naturally, we define $\rel{\ell}$ to be \[
  (0, a) \rel{\ell} (0, a')
    \Leftrightarrow a \rel{\ell} a',
    \quad
  (1, b) \rel{\ell} (1, b')
    \Leftrightarrow b \rel{\ell} b'
\] and $\lnot\left((i, x) \rel{\ell} (j, y)\right)$ for $i \neq
j$.  The injections $\bars{A} \rightarrow \bars{A} + \bars{B}$ and
$\bars{B} \rightarrow \bars{A} + \bars{B}$ clearly preserve
$\rel{\ell}$, as does the set-theoretic coproduct morphism, so
\begin{prop}
  The classified set $A + B$ is the categorical coproduct of
  $A$ and $B$ in $\textbf{CSet}_\mathcal{L}$.
\end{prop}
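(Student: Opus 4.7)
The plan is to lean on the fact that the underlying carrier $\bars{A+B}$ is already the set-theoretic coproduct, so the set-theoretic injections and the mediating function are uniquely determined at the level of functions; the only real work is to check that the relational structure behaves correctly. I would carry this out in three short steps.

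First, I would confirm that $A+B$ is a well-defined classified set by checking reflexivity of each $\rel{\ell}$: for $(0,a)$, the condition $(0,a) \rel{\ell} (0,a)$ reduces to $a \rel{\ell} a$, which holds because $A$ is classified, and the $(1,b)$ case is symmetric. The author has already noted that the injections preserve $\rel{\ell}$, which is again immediate from the definition of the relation on $A+B$.

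Second, for the universal property, given $f : A \rightarrow C$ and $g : B \rightarrow C$, the set-theoretic coproduct yields a unique function $[f,g] : \bars{A+B} \rightarrow \bars{C}$, so existence and uniqueness of the mediating function are automatic, as are the two commuting triangles $[f,g] \circ \iota_1 = f$ and $[f,g] \circ \iota_2 = g$. To see that $[f,g]$ is actually a morphism of $\textbf{CSet}_\mathcal{L}$, I would verify that it preserves $\rel{\ell}$ by a small case analysis: assume $(i,x) \rel{\ell} (j,y)$; by the definition of the relation on $A+B$ this forces $i = j$ with $x \rel{\ell} y$ in the relevant component, and then, say when $i = j = 0$, one has $[f,g](i,x) = f(x) \rel{\ell} f(y) = [f,g](j,y)$ because $f$ is itself a morphism of classified sets. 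The $i = j = 1$ case is symmetric.

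The main (and very minor) obstacle is recognising that the definition of $\rel{\ell}$ on $A+B$ crucially includes the negative clause $\lnot((i,x) \rel{\ell} (j,y))$ for $i \neq j$. Without this clause, the mediating morphism $[f,g]$ would be forced to relate $f(a)$ and $g(b)$ in $C$ whenever $(0,a) \rel{\ell} (1,b)$, and there is no way to guarantee this from $f$ and $g$ alone; the chosen definition of indistinguishability on the coproduct is precisely what makes the mediating morphism automatically relation-preserving. Apart from this observation, the verification is a routine case analysis riding on the set-theoretic coproduct.
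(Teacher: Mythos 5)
Your proof is correct and follows essentially the same route as the paper, which simply observes that the injections and the set-theoretic copairing preserve each $\rel{\ell}$; you have merely filled in the routine case analysis (including the role of the negative clause for $i \neq j$) that the paper leaves implicit. No issues.
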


In a similar fashion, if we are given two parallel arrows
$\begin{tikzcd} A \arrow[r, "f", shift left] \arrow[r, "g", swap,
shift right] & B \end{tikzcd}$, we can see that the set $E \myeq
\setcomp{a \in A}{f(a) = g(a)}$ equipped with
$\rel{\ell}\restriction_E$, the relations $\rel{\ell}$ restricted
to $E$, is a classified set, that the inclusion $E \hookrightarrow
A$ is trivially a morphism, and that

\begin{prop}
  $E$ is the equaliser of $f: A \rightarrow B$ and $g : A
  \rightarrow B$.
\end{prop}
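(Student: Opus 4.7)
The plan is to follow the usual equaliser construction in \textbf{Set} and check that every step lifts to \textbf{CSet}$_\mathcal{L}$ with little extra work, since all the structure on $E$ is inherited by restriction. The key observation is that restricting a reflexive relation to a subset yields a reflexive relation, so $E$ is genuinely a classified set, and the inclusion $i : E \hookrightarrow A$ is automatically a morphism because $\rel{\ell}\restriction_E \subseteq \rel{\ell}$. By definition of $E$, $f \circ i = g \circ i$, so $E$ is a cone.

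Next I would check the universal property. Given any classified set $Z$ and morphism $h : Z \rightarrow A$ with $f \circ h = g \circ h$, for every $z \in Z$ we have $f(h(z)) = g(h(z))$, so $h(z) \in E$. This forces the unique candidate $u : Z \rightarrow E$ to be the corestriction $u(z) \myeq h(z)$, which makes $i \circ u = h$ at the level of underlying sets. Uniqueness is then immediate because $i$ is injective on carriers.

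The only thing left to verify---and this is the one place the classified structure actually intervenes---is that $u$ is itself a morphism of classified sets. Suppose $z \rel{\ell}^Z z'$. Since $h$ is a morphism, $h(z) \rel{\ell}^A h(z')$; and since $h(z), h(z') \in E$, this is exactly $u(z) \rel{\ell}\restriction_E u(z')$ by definition of the restricted relation. So $u$ preserves each $\rel{\ell}$, and we are done.

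There is no real obstacle here: the construction is dictated by the forgetful functor $\bars{\cdot} : \textbf{CSet}_\mathcal{L} \rightarrow \textbf{Set}$, and the only subtlety is confirming that endowing the set-theoretic equaliser with the restricted relations produces a classified set whose inclusion is a morphism and whose mediating arrow inherits the relation-preservation property of $h$. In line with the earlier limits computed in this section, this shows that limits in \textbf{CSet}$_\mathcal{L}$ are created by $\bars{\cdot}$, with the classified structure obtained pointwise by restriction.
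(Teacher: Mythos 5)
Your proof is correct and follows exactly the route the paper takes (the paper merely asserts the universal property after defining $E$ with the restricted relations and noting the inclusion is a morphism; you supply the routine verification). The key point you rightly isolate---that the mediating map $u$ preserves each $\rel{\ell}$ because the relation on $E$ is the restriction of that on $A$---is the only place the classified structure matters, and your handling of it is correct.
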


Constructing coequalisers is slightly more complicated. Recall
that given two ordinary functions $\begin{tikzcd} A \arrow[r, "f",
shift left] \arrow[r, "g", swap, shift right] & B \end{tikzcd}$,
their coequaliser is the set $B$ quotiented by the equivalence
relation $\sim_{f, g}$, which is the least equivalence relation
such that $(f(a), g(a)) \in \mathop{\sim_{f, g}}$.  The elements
of $B/\mathop{\sim_{f, g}}$ are then equivalence classes $[b]$ of
elements we wish to `lump together.' Suppose now that $A$ and $B$
are classified; how should we classify $B/\mathop{\sim_{f, g}}$?
We may consider two of its equivalence classes indistinguishable
whenever it happens that two elements, one from each equivalence
class, are `lumped together' by $\rel{\ell}$. So we define \[
  [b] \rel{\ell} [b']
    \Longleftrightarrow
  \exists x \in [b].\ \exists y \in [b'].\ x \rel{\ell} y
\] and thus turn $B/\mathop{\sim_{f, g}}$ into a classified set.
The quotient map $B \rightarrow B/\mathop{\sim_{f, g}}$ is then
automatically a morphism of classified sets, and it is not hard to
show that 

\begin{prop} 
  $B/{\sim_{f, g}}$ is the coequaliser of $\begin{tikzcd} A
  \arrow[r, "f", shift left] \arrow[r, "g", swap, shift right] & B
  \end{tikzcd}$ $\mathbf{CSet}_\mathcal{L}$.
\end{prop}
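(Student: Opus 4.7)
The plan is to verify in turn that $B/\mathord{\sim_{f,g}}$ (with the displayed relation) is a legitimate classified set, that the quotient map $q : B \to B/\mathord{\sim_{f,g}}$ is a morphism coequalising $f$ and $g$, and finally that it enjoys the required universal property. Reflexivity of $\rel{\ell}$ on $B/\mathord{\sim_{f,g}}$ is immediate: for any class $[b]$, the witnesses $x = y = b$ together with reflexivity of $\rel{\ell}$ on $B$ give $[b] \rel{\ell} [b]$. The quotient map $q$ is a morphism for the same reason: if $b \rel{\ell} b'$ then $b$ and $b'$ themselves serve as the existential witnesses, so $[b] \rel{\ell} [b']$. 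That $q \circ f = q \circ g$ is the usual set-theoretic fact that $f(a) \sim_{f,g} g(a)$.

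For the universal property, suppose $h : B \to C$ is a morphism in $\mathbf{CSet}_\mathcal{L}$ with $h \circ f = h \circ g$. On carriers, the set-theoretic coequaliser supplies a unique function $\bar h : \bars{B/\mathord{\sim_{f,g}}} \to \bars{C}$ with $\bar h([b]) = h(b)$; well-definedness follows because $h$ (viewed as a mere function) respects the equivalence generated by the pairs $(f(a), g(a))$. Uniqueness of $\bar h$ as a function, and hence as a morphism, is inherited from the set-theoretic universal property because $q$ is surjective on carriers.

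The step that actually uses the definition of $\rel{\ell}$ on the quotient is showing that $\bar h$ preserves $\rel{\ell}$. Suppose $[b] \rel{\ell} [b']$; unfolding the definition, pick $x \in [b]$ and $y \in [b']$ with $x \rel{\ell} y$. Since $h$ is a morphism of classified sets, $h(x) \rel{\ell} h(y)$ in $C$. But $x \sim_{f,g} b$ and $h$ collapses $\sim_{f,g}$-related elements, so $h(x) = h(b) = \bar h([b])$, and similarly $h(y) = \bar h([b'])$. Therefore $\bar h([b]) \rel{\ell} \bar h([b'])$ as required.

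The only genuine subtlety — and what I would flag as the main obstacle — is checking that the existential definition of $\rel{\ell}$ on equivalence classes is simultaneously (i) \emph{permissive enough} to make the quotient map a morphism, and (ii) \emph{restrictive enough} to make the mediating map $\bar h$ a morphism for every coequalising $h$. Both pieces fall out directly once one notices that an existential witness $x \rel{\ell} y$ in $B$ maps, under any morphism $h$ with $h \circ f = h \circ g$, to a witness relating precisely $\bar h([b])$ and $\bar h([b'])$; this is why the least such relation is the right choice, just as $\sim_{f,g}$ is the least equivalence relation identifying $f(a)$ with $g(a)$.
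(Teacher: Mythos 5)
Your proof is correct and fills in exactly the verification the paper leaves implicit (the paper only defines the quotient relation and asserts the rest is "not hard to show"): reflexivity of the quotient relation, morphism-hood of the quotient map, and the key point that the mediating map $\bar h$ preserves $\rel{\ell}$ because any witness $x \rel{\ell} y$ is sent by $h$ to a pair relating $\bar h([b])$ and $\bar h([b'])$, since $h$ is constant on $\sim_{f,g}$-classes. This is the same approach the paper intends, carried out in full.
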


\noindent In short, we have the following theorem:

\begin{thm}
  $\mathbf{CSet}_\mathcal{L}$ is finitely complete and finitely
  cocomplete.
\end{thm}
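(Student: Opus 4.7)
The plan is to invoke the standard categorical result that a category is finitely complete if and only if it has a terminal object, binary products, and equalisers of parallel pairs; dually, it is finitely cocomplete if and only if it has an initial object, binary coproducts, and coequalisers. This characterisation is entirely formal---any finite limit can be constructed as the equaliser of an appropriate pair of maps between finite products, and symmetrically for colimits---so once all six building blocks are in place, the theorem follows immediately.

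Fortunately, the preceding six propositions have supplied exactly these ingredients in $\mathbf{CSet}_\mathcal{L}$: the terminal object $\mathbf{1}$, the initial object $\mathbf{0}$, binary products $A \times B$, binary coproducts $A + B$, equalisers $E$ of parallel pairs, and coequalisers $B / {\sim_{f,g}}$ of parallel pairs. Thus my proof would simply consist of one or two sentences citing the above characterisation (with a reference to, say, Mac Lane) and pointing to the preceding propositions.

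If one wished to be more explicit, the construction of an arbitrary finite limit proceeds as follows: given a diagram $D : \mathcal{J} \to \mathbf{CSet}_\mathcal{L}$ with $\mathcal{J}$ finite, form the product $\prod_{j \in \mathrm{Ob}(\mathcal{J})} D(j)$ and take the equaliser of the two evident morphisms into $\prod_{u : j \to j' \in \mathrm{Mor}(\mathcal{J})} D(j')$, one given componentwise by $D(u) \circ \pi_j$ and the other by $\pi_{j'}$. Since products and equalisers exist and the index category is finite, this yields a limit cone. Dually for colimits, using coproducts and coequalisers.

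There is no real obstacle here; the entire content of the theorem has already been distributed across the preceding propositions, and what remains is a purely formal appeal to a well-known completeness criterion. The only substantive work was verifying that the chosen relational structures on products, coproducts, equalisers, and coequalisers interact correctly with the morphism condition, which was carried out case by case above.
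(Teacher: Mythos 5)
Your proposal is correct and matches the paper exactly: the theorem there is stated as an immediate summary of the preceding propositions (terminal and initial objects, binary products and coproducts, equalisers and coequalisers), implicitly invoking the same standard characterisation of finite (co)completeness that you cite. No further comment is needed.
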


\subsection{Exponentials}

In a manner identical to that of logical relations, we are able
endow the set of morphisms from a classified set to another with
an indistinguishability relation. The idea is that two morphisms
are indistinguishable if they map indistinguishable inputs to
indistinguishable outputs. Note that this furnishes our theory
with an \emph{extensional} view of functions, where they are
understood to be indistinguishable precisely when their
input-output behaviour is.

Given classified sets $A$ and $B$ over $\mathcal{L}$, we define
the classified set $B^A$ by \begin{align*}
  \bars{B^A} &\myeq \text{Hom}_{\textbf{CSet}_\mathcal{L}}(A, B) \\
  f \rel{\ell} g &\Longleftrightarrow \forall a \rel{\ell} a'.\
  f(a) \rel{\ell} g(a')
\end{align*}

\noindent This is the usual definition of logical relations at
function types. We can then define a function $\textsf{ev} : B^A
\times A \rightarrow B$ by $(f, a) \mapsto f(a)$, and the
definition of $B^A$ makes it a morphism. From that point, it is
trivial to show that

\begin{prop}
  $B^A$ is the exponential of $A$ and $B$.
\end{prop}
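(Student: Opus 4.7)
The plan is to verify the three obligations for $B^A$ to be an exponential: that $B^A$ is actually a classified set (i.e., $\rel{\ell}$ is reflexive), that $\textsf{ev}$ is a morphism, and that the universal property holds with uniqueness. Since the underlying set and evaluation map coincide with the set-theoretic exponential, the universal property at the level of carrier sets is free; the entire content is to check that everything respects the relations $\rel{\ell}$.

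First I would verify reflexivity of $\rel{\ell}$ on $\bars{B^A}$. The condition $f \rel{\ell} f$ unfolds to: for all $a \rel{\ell} a'$, $f(a) \rel{\ell} f(a')$. But this is precisely the condition that $f$ be a morphism in $\mathbf{CSet}_\mathcal{L}$, so it holds tautologically for $f \in \bars{B^A}$. Next, to check that $\textsf{ev} : B^A \times A \to B$ preserves $\rel{\ell}$, suppose $(f,a) \rel{\ell} (g, a')$ in the product $B^A \times A$; by definition this unfolds to $f \rel{\ell} g$ and $a \rel{\ell} a'$, and then the very definition of $\rel{\ell}$ on $\bars{B^A}$ gives $f(a) \rel{\ell} g(a')$, which is what $\textsf{ev}$ preserving $\rel{\ell}$ requires.

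For the universal property, given a morphism $h : C \times A \to B$, define $\bar{h} : \bars{C} \to \bars{B^A}$ by $\bar{h}(c)(a) \myeq h(c, a)$, as in ordinary set theory. There are then two checks. First, $\bar{h}(c)$ must lie in $\bars{B^A}$, i.e., be a morphism $A \to B$: given $a \rel{\ell} a'$, reflexivity in $C$ yields $c \rel{\ell} c$, so $(c, a) \rel{\ell} (c, a')$ in $C \times A$, whence $h(c,a) \rel{\ell} h(c, a')$ because $h$ is a morphism. Second, $\bar{h}$ itself must be a morphism $C \to B^A$: given $c \rel{\ell} c'$, we must show $\bar{h}(c) \rel{\ell} \bar{h}(c')$, i.e., for all $a \rel{\ell} a'$ that $h(c, a) \rel{\ell} h(c', a')$; this again follows because $(c, a) \rel{\ell} (c', a')$ in the product and $h$ preserves $\rel{\ell}$. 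The equation $\textsf{ev} \circ (\bar{h} \times \mathrm{id}_A) = h$ and the uniqueness of $\bar{h}$ are inherited from the corresponding statements in $\mathbf{Set}$.

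The only mildly delicate step is the first half of the universal-property check: showing $\bar{h}(c)$ is a morphism uses reflexivity of $\rel{\ell}$ on $C$ in an essential way to manufacture the pair $(c, a) \rel{\ell} (c, a')$ from the single relation $a \rel{\ell} a'$. This is precisely the reason the paper insists on reflexivity as part of the definition of a classified set---without it, the exponential construction would not land in the category.
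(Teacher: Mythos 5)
Your proof is correct and follows the same route the paper intends: the paper merely defines $\textsf{ev}$, observes it is a morphism by the definition of $\rel{\ell}$ on $B^A$, and declares the rest trivial, while you supply exactly the omitted checks (reflexivity of the relation on the hom-set, morphism-hood of $\textsf{ev}$ and of the transpose, and uniqueness inherited from $\mathbf{Set}$). Your closing remark that reflexivity of $C$'s relations is what makes the curried map land in the hom-set is accurate and nicely complements the paper's footnote on why reflexivity is required of classified sets.
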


\noindent Hence,

\begin{thm}
  $\mathbf{CSet}_\mathcal{L}$ is a bicartesian closed category.
\end{thm}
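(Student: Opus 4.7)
The plan is to treat this theorem as a summary corollary rather than a fresh proof. Unpacking the definition, a bicartesian closed category requires finite products, finite coproducts, and exponentials. Finite products and coproducts are already in hand via the preceding finite (co)completeness theorem: $\mathbf{1}$ and $A \times B$ furnish the cartesian structure, while $\mathbf{0}$ and $A + B$ furnish the cocartesian structure. The immediately preceding proposition supplies the exponential $B^A$ together with evaluation $\textsf{ev} : B^A \times A \rightarrow B$. So the theorem should be presented essentially as an assembly step.

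The one point worth explicitly verifying, which the proposition on exponentials invites us to take on faith, is that the exponential adjunction lives in $\mathbf{CSet}_\mathcal{L}$ rather than merely in $\mathbf{Set}$; that is, that currying and uncurrying preserve $\rel{\ell}$. I would spell this out once, in the forward direction: given $f : C \times A \rightarrow B$ and $c_1 \rel{\ell} c_2$, one must show that the curried functions $a \mapsto f(c_1, a)$ and $a \mapsto f(c_2, a)$ are related in $B^A$. Picking any $a_1 \rel{\ell} a_2$, the product definition of $\rel{\ell}$ on $C \times A$ yields $(c_1, a_1) \rel{\ell} (c_2, a_2)$, and morphismhood of $f$ then gives $f(c_1, a_1) \rel{\ell} f(c_2, a_2)$, which is exactly the definition of relatedness in $B^A$. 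The uncurrying direction is symmetric and leans on reflexivity of $\rel{\ell}$ to feed in the unpaired argument; naturality in the other variables is equally mechanical.

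The main obstacle, such as it is, is conceptual rather than technical: one must be comfortable that the logical-relations style definition of $\rel{\ell}$ on $B^A$ meshes coherently with the componentwise definition on $\times$, and that the explicit reflexivity requirement on classified sets is what makes the uncurrying step go through. Once this compatibility is noted, no further computation is required, and I would close the argument by simply invoking the two preceding theorems.
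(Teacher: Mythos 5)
Your proposal is correct and matches the paper's approach: the theorem is indeed just an assembly of the preceding finite (co)completeness theorem and the exponential proposition, whose "trivial" currying/uncurrying verification you spell out accurately. (One small note: reflexivity of $\rel{\ell}$ is needed already in the currying direction, to see that each fixed $c$ yields a function $a \mapsto f(c,a)$ that is itself a morphism $A \rightarrow B$ and hence an element of $\bars{B^A}$.)
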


Thus the category $\mathbf{CSet}_\mathcal{L}$ is rich enough to
model the simply typed $\lambda$-calculus with coproducts. In the
following sections we will also show that there is enough
structure to model a multitude of \emph{modal operators} on types.

\section{Cohesion}
  \label{sec:coh}

The modal structure on classified sets is closely related to---or,
more precisely, induced by---Lawvere's \emph{axiomatic cohesion}.
But what is axiomatic cohesion? It is a theory developed by
\citet{Lawvere2007} as an attempt to capture the very broad idea
of \emph{mathematical spaces} that are endowed with some kind of
\emph{cohesion}, i.e. the idea that some points are `very close to
each other' or `stuck together.' The prototypical example is that
of \emph{topological spaces}, which are sets of points equipped
with a \emph{topology}, i.e. a set of subsets of these points.
These are called the \emph{open sets}, and the choice of which
subsets are open endows the underlying set with notions of
continuity, connectedness, convergence etc. 

If we write $\textbf{Top}$ for the category of topological spaces,
there is an obvious forgetful functor $U : \textbf{Top}
\longrightarrow \textbf{Set}$ which `forgets' the cohesive
structure of a topological space, and returns the underlying set
of points. Conversely, given any set there are two canonical ways
to construct a topological space. The first is to specify that
\emph{no points are stuck together}. This is achieved by the
\emph{discrete topology}, where every subset is an open set. It
forms a functor, \[
  \Updelta : \textbf{Set} \longrightarrow \textbf{Top}
\] that is the identity on functions: every function on sets is
trivially a continuous function between the same sets seen as
discrete spaces, so $\Updelta$ is full and faithful.

The other way is to specify that \emph{all points are stuck
together}, and is achieved by endowing the set with the
\emph{codiscrete topology}, where the only open sets are the empty
set and the entire space. It forms another functor, \[
  \nabla : \textbf{Set} \longrightarrow \textbf{Top}
\] that is the identity on functions: every function on sets is
trivially a continuous function between the same sets seen as
codiscrete spaces.

The relationship between these functors is simple: they form a
\emph{string of adjoints}: \[
  \Updelta \dashv U \dashv \nabla
\] This categorifies the following two simple observations: every
function $X \rightarrow \nabla Y$ from a topological space into a
codiscrete space is continuous (as everything is collapsed into a
single block of points); and every function $\Updelta X
\rightarrow Y$ from a discrete space into a topological space $Y$
is continuous (as there is no cohesion to preserve in a discrete
space).

But this is not the whole story. Consider a continuous function $X
\rightarrow \Updelta Y$ into the discrete space with points $Y$.
Since it is continuous, it must preserve cohesion. Namely, it must
map points of $X$ that are `stuck together' to points that are
`stuck together' in $\Updelta Y$. But points are only `stuck' to
themselves in the discrete space $\Updelta Y$, so in fact it must
map all points `stuck' to each other in $X$ to a single point of
$Y$. Thus, if we could somehow reduce $X$ down to a set $C(X)$
where points `stuck' together are collapsed to a single point, the
continuous function $X \rightarrow \Updelta Y$ would define an
ordinary function $C(X) \rightarrow Y$. Such a functor $C$
\emph{does} exist, and maps the topological space $X$ to its set
$C(X)$ of \emph{connected components}. It is evidently left
adjoint to $\Updelta$. 

We are very close to showing that topological spaces are
\emph{cohesive relative to sets}. We will, however, not use this
full notion in this paper, as the weaker notion of
\emph{pre-cohesion}, due to \citet{Lawvere2015}, is more than
sufficient for our purposes:

\begin{defn}
  In a situation of the form \[
    \begin{tikzcd}
      \mathcal{E}
        \arrow[d, shift right=45pt, "C"{swap}, ""{name=A}]
        \arrow[d, shift left=15pt, "U", ""{name=B}]
      \\
      \mathcal{S}
        \arrow[u, shift left=15pt, "\Updelta"{swap}, ""{name=C}]
        \arrow[u, shift right=45pt, "\nabla"{swap}, ""{name=D}]
          \arrow[from=A, to=C, symbol=\dashv]
          \arrow[from=C, to=B, symbol=\dashv]
          \arrow[from=B, to=D, symbol=\dashv]
    \end{tikzcd}
  \] where $\mathcal{E}$ and $\mathcal{S}$ are extensive
  categories, we call \emph{$\mathcal{E}$ pre-cohesive relative to
  $\mathcal{S}$} if 
    \begin{enumerate}
      \item
        $\Updelta, \nabla : \mathcal{S} \longrightarrow \mathcal{E}$
        are full and faithful;
      \item
        $C : \mathcal{E} \longrightarrow \mathcal{S}$ preserves finite
        products; and
      \item
        the \emph{Nullstellensatz} holds: the counit
        $\textsf{Id}_\mathcal{E} \Rightarrow \Updelta C$ is an
        epimorphism; or, equivalently, the unit $\Updelta U
        \Rightarrow \textsf{Id}_\mathcal{E}$ is a monomorphism.
    \end{enumerate}
\end{defn}

The second requirement essentially expresses that a connected
component of a product space is exactly a connected component in
each of the two components. The third requirement, the
\emph{nullstellensatz}, has many equivalent forms, and essentially
requires that the `quotient' map that maps a point to its
connected component is a epimorphic, i.e. a kind of abstract
surjection (in other words: no connected component is empty).

The reason that a setting of pre-cohesion is of direct interest to
modal type theory is that it automatically induces \emph{three
modalities} on the category $\mathcal{E}$ by composing each pair
of adjoints. The first one, \[
  \Box \myeq \Updelta U : \mathcal{E} \longrightarrow \mathcal{E}
\] is a comonad. Intuitively, $\Box$ takes a cohesive space,
strips its points of their cohesive structure, and gives them the
discrete structure: it `unsticks' all points. The second one, \[
  \blacklozenge \myeq \nabla U : \mathcal{E} \longrightarrow
  \mathcal{E}
\] is a monad, which does the opposite: it `sticks' all the points
of a cohesive space together. Finally, \[
  \int \myeq \Updelta C : \mathcal{E} \longrightarrow \mathcal{E}
\] is a monad. Intuitively, $\int$ collapses each connected
component into a single point, and then presents that set of
connected components as a discrete space.

Summarising, the above setting endows these functors with the
following properties: 
\begin{cor}[Fundamental Corollary of Pre-Cohesion] \hfill
  \label{cor:coh}
  \begin{enumerate}
    \item
      $U : \mathcal{E} \longrightarrow \mathcal{S}$ preserves
      limits and colimits.
    \item
      $\Updelta : \mathcal{S} \longrightarrow \mathcal{E}$
      preserves limits and colimits.
    \item
      $\nabla : \mathcal{S} \longrightarrow \mathcal{E}$ preserves
      limits.
    \item
      $C : \mathcal{E} \longrightarrow \mathcal{S}$ preserves
      products and colimits.
    \item
      $U\Updelta \cong \textsf{Id}_\mathcal{S} : \mathcal{S}
      \longrightarrow \mathcal{S}$
    \item
      $U\nabla \cong \textsf{Id}_\mathcal{S} : \mathcal{S}
      \longrightarrow \mathcal{S}$
    \item
      $\Box \myeq \Updelta U : \mathcal{E} \longrightarrow
      \mathcal{E}$ is an idempotent comonad. It is \emph{exact},
      i.e. preserves finite limits and colimits.
    \item
      $\blacklozenge \myeq \nabla U : \mathcal{E} \longrightarrow
      \mathcal{E}$ is an idempotent monad. It is \emph{left
      exact}, i.e. preserves finite limits.
    \item
      $\int \myeq \Updelta C : \mathcal{E} \longrightarrow
      \mathcal{E}$ is an idempotent monad. It preserves products
      and colimits.
    \item
      $\int \dashv \Box \dashv \blacklozenge$
  \end{enumerate}
\end{cor}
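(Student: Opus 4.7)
The plan is to deduce each of the ten items from three well-known facts about adjunctions: (i) left adjoints preserve colimits and right adjoints preserve limits; (ii) in an adjunction $F \dashv G$, the unit is invertible precisely when $F$ is fully faithful, and the counit is invertible precisely when $G$ is fully faithful; (iii) composites of adjunctions are adjunctions, in the sense that if $L_1 \dashv R_1$ and $L_2 \dashv R_2$ (with composites defined), then $L_1 L_2 \dashv R_2 R_1$. Together with the hypothesis that $C$ preserves finite products, these three facts will do all the work.

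For items 1--4 I would read off which of the four functors is a left adjoint, a right adjoint, or both, and invoke (i). The functor $U$ is simultaneously right adjoint to $\Updelta$ and left adjoint to $\nabla$, so it preserves all limits and colimits; by the same token $\Updelta$ is both right adjoint (to $C$) and left adjoint (to $U$), hence also preserves both. The functor $\nabla$ is only a right adjoint, so (i) gives limit preservation only. The functor $C$ is a left adjoint, so it preserves colimits; preservation of finite products is by hypothesis. For items 5--6 I would apply (ii): since $\Updelta \dashv U$ and $\Updelta$ is fully faithful, the unit $\textsf{Id}_\mathcal{S} \Rightarrow U\Updelta$ is an isomorphism, giving $U\Updelta \cong \textsf{Id}_\mathcal{S}$; symmetrically, $U \dashv \nabla$ with $\nabla$ fully faithful gives $U\nabla \cong \textsf{Id}_\mathcal{S}$. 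As a third byproduct, $C \dashv \Updelta$ with $\Updelta$ fully faithful yields $C\Updelta \cong \textsf{Id}_\mathcal{S}$, which I will use in the next step.

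For items 7--9, the monad/comonad structure on each composite is the standard one: $LR$ is a comonad and $RL$ is a monad for any adjunction $L \dashv R$. Idempotency in each case reduces to the three isomorphisms just recorded; for instance $\Box\Box = \Updelta U \Updelta U \cong \Updelta U = \Box$, and similarly for $\blacklozenge$ and $\int$. Preservation properties follow by composing those of the factors: $\Box = \Updelta U$ is exact because both $\Updelta$ and $U$ preserve all finite limits and colimits; $\blacklozenge = \nabla U$ is left exact because both $\nabla$ and $U$ preserve finite limits; and $\int = \Updelta C$ preserves products (since $C$ preserves finite products and $\Updelta$ preserves all limits) and colimits (since both factors do). Finally, for item 10 I would apply (iii) twice: from $C \dashv \Updelta$ and $\Updelta \dashv U$ I read off $\Updelta C \dashv \Updelta U$, i.e.\ $\int \dashv \Box$; from $\Updelta \dashv U$ and $U \dashv \nabla$ I read off $\Updelta U \dashv \nabla U$, i.e.\ $\Box \dashv \blacklozenge$.

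There is no substantial obstacle; the whole corollary is a bookkeeping exercise organised around the triple string of adjunctions and the three fully-faithful isomorphisms. The only point that deserves care is that $\int$ is \emph{not} claimed to be left exact, only to preserve products and colimits: this is precisely because the pre-cohesion hypothesis only asks $C$ to preserve finite products rather than all finite limits, so equalisers cannot be propagated through $\int$ in general. Notably, the Nullstellensatz condition plays no role here; it is reserved for the noninterference results in the following sections.
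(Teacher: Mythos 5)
Your proposal is correct and follows essentially the same route as the paper: items (1)--(4) from left/right adjointness plus the product-preservation hypothesis on $C$, items (5)--(6) from full faithfulness of $\Updelta$ and $\nabla$ (the paper unwinds this via the Yoneda argument that underlies the unit/counit-isomorphism criterion you cite), items (7)--(9) from the fact that adjunctions with a fully faithful adjoint generate idempotent (co)monads together with the preservation properties already established, and item (10) by composing the adjunctions in the string $C \dashv \Updelta \dashv U \dashv \nabla$. Your closing observations---that $\int$ is only claimed to preserve products rather than all finite limits, and that the Nullstellensatz plays no role here---are accurate.
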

\begin{proof}
  (1)-(4) follow by the fact each of these functors is a left or
  right adjoint, or by some assumption. (5) and (6) follow by the
  Yoneda lemma and the fact $\Updelta$/$\nabla$ are f.f.; e.g. for
  any $D \in \mathcal{S}$ \[
    \text{Hom}_\mathcal{S}(D, U\Updelta A)
      \cong
    \text{Hom}_\mathcal{E}(\Updelta D, \Updelta A)
      \cong
    \text{Hom}_\mathcal{S}(D, A)
  \] (7)-(9) follow from (1)-(4) and the fact $\Updelta$/$\nabla$
  are full and faithful and thus generate idempotent (co)monads
  (see e.g. \cite[\S 4.3.2]{Borceux1994}). (10) follows from $C
  \dashv \Updelta \dashv U \dashv \nabla$.
\end{proof}

The rest of the section is devoted to showing that 

\begin{thm}
  \label{thm:coh}
  $\mathbf{CSet}_\mathcal{L}$ is pre-cohesive relative to
  $\mathbf{Set}$.
\end{thm}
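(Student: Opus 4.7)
The plan is to construct the four functors in the diagram, verify the triple adjunction $C \dashv \Updelta \dashv U \dashv \nabla$, and then check the three pre-cohesion axioms. Let $U : \textbf{CSet}_\mathcal{L} \to \textbf{Set}$ be the evident forgetful functor $S \mapsto \bars{S}$. For $X \in \textbf{Set}$, let the \emph{discrete} classified set $\Updelta X$ equip $X$ with the diagonal relation $\rel{\ell} \,=\, \{(x,x) : x \in X\}$ for every $\ell$, and let the \emph{codiscrete} classified set $\nabla X$ equip $X$ with the total relation $\rel{\ell} \,=\, X \times X$. For $S \in \textbf{CSet}_\mathcal{L}$, let $\sim_S$ be the least equivalence relation on $\bars{S}$ containing $\bigcup_{\ell \in \mathcal{L}} \rel{\ell}$, and set $C(S) = \bars{S}/{\sim_S}$. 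Functoriality is routine in each case; reflexivity of every $\rel{\ell}$ is exactly what makes $\Updelta$ land in $\textbf{CSet}_\mathcal{L}$, and morphism-preservation of $\rel{\ell}$ is what makes $C$ respect $\sim$.

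I would then exhibit the three adjunctions at the level of hom-sets. For $\Updelta \dashv U$, any function $f : X \to \bars{S}$ is automatically a morphism $\Updelta X \to S$: if $x = y$ in $\Updelta X$ then $f(x) \rel{\ell} f(y)$ by reflexivity of $\rel{\ell}$ at $S$. Symmetrically, $U \dashv \nabla$ holds because the total relation on $\nabla X$ makes every function $\bars{S} \to X$ a morphism $S \to \nabla X$. The adjunction $C \dashv \Updelta$ is the universal property of the quotient: a morphism $S \to \Updelta Y$ is precisely a function $\bars{S} \to Y$ that collapses every $\rel{\ell}$, hence factors uniquely through $C(S)$.

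For the pre-cohesion conditions, full-faithfulness of $\Updelta$ and $\nabla$ follows by inspection: every set-theoretic function trivially preserves the diagonal and the total relation, so the hom-sets on both sides agree. The nullstellensatz is witnessed by the canonical map $\bars{S} \twoheadrightarrow C(S)$, which is surjective on carriers and hence epi in $\textbf{CSet}_\mathcal{L}$ (parallel morphisms out of $C(S)$ are determined by their action on the image of this map). Both $\textbf{Set}$ and $\textbf{CSet}_\mathcal{L}$ are extensive, the latter since coproducts and pullbacks are computed on underlying carriers with the evident relational structure, inheriting the disjointness and stability properties from $\textbf{Set}$.

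The main obstacle, and the only step that requires genuine work, is showing that $C$ preserves finite products, i.e. that the comparison $C(A \times B) \to C(A) \times C(B)$ given by $[(a,b)] \mapsto ([a],[b])$ is a bijection. Surjectivity and preservation of the terminal object are immediate. Injectivity is where the reflexivity assumption flagged in the footnote becomes essential: given $a_1 \sim_A a_2$ and $b_1 \sim_B b_2$, witnessed by zigzags $a_1 = c_0, c_1, \ldots, c_n = a_2$ with labels $\ell_i$ (each step either $c_{i-1} \rel{\ell_i} c_i$ or its converse), and analogously for the $b$'s, I would lift each step to $(c_{i-1}, b_1) \rel{\ell_i} (c_i, b_1)$ using $b_1 \rel{\ell_i} b_1$ by reflexivity. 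This produces a chain from $(a_1, b_1)$ to $(a_2, b_1)$ in $A \times B$, and a symmetric argument walking the second coordinate chains $(a_2, b_1)$ to $(a_2, b_2)$. Concatenation yields $(a_1, b_1) \sim_{A \times B} (a_2, b_2)$, establishing injectivity and completing the proof.
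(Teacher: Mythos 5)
Your proposal is correct and follows essentially the same route as the paper: the same four functors, the same hom-set verifications of $C \dashv \Updelta \dashv U \dashv \nabla$, and the same reflexivity-based zigzag argument for product preservation of $C$ (which the paper leaves implicit but you rightly identify as the only step needing real work). The only cosmetic difference is that you witness the Nullstellensatz via surjectivity of the quotient map $\bars{S} \twoheadrightarrow C(S)$ (the epi form), whereas the paper observes that the counit $\Updelta U X \rightarrow X$ is the identity on carriers and hence monic; the definition states these are equivalent, so either suffices.
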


A variant of this fact (namely that reversible reflexive graphs
are cohesive relative to sets) is already present in
\cite{Lawvere2007}. There is an evident forgetful functor from
classified sets to sets: \begin{align*}
  U : \textbf{CSet}_\mathcal{L}&\longrightarrow \textbf{Set} \\
      X             &\longmapsto \bars{X}
\end{align*} which forget all the relations $\rel{\ell}$. We can
then return to classified sets using the functor \[
  \Updelta : \textbf{Set} \longrightarrow
  \textbf{CSet}_\mathcal{L}
\] which adds to the set $X$ the diagonal relation $x \rel{\ell}
x$ at each level $\ell \in \mathcal{L}$. This is the \emph{finest}
equality that can be supported by this setting, in that each
element is only indistinguishable to itself. In that sense,
$\Updelta X$ is a classified set with carrier $X$ that is
\emph{completely transparent}, in a manner reminiscent of the
discrete topology on a set. Any function $f : X \rightarrow Y$ is
trivially a function $f : \Updelta X \rightarrow \Updelta Y$, as
the diagonal relation is trivially preserved; thus $\Updelta$ is
indeed a functor. Moreover, it is easy to see that it is full and
faithful, and that

\begin{prop}
  $\Updelta : \textbf{Set} \longrightarrow
  \textbf{CSet}_\mathcal{L}$ is left adjoint to the forgetful
  functor.
\end{prop}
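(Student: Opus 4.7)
The plan is to exhibit the natural bijection
\[
  \text{Hom}_{\textbf{CSet}_\mathcal{L}}(\Updelta X, S)
    \;\cong\;
  \text{Hom}_{\textbf{Set}}(X, US)
\]
directly, and observe that it is the identity on underlying functions. Given any set function $f : X \rightarrow \bars{S} = US$, I claim that the same function is automatically a morphism $\Updelta X \rightarrow S$. The only pairs related by $\rel{\ell}$ in $\Updelta X$ are the diagonal pairs $(x, x)$, and reflexivity of the relations on $S$ gives $f(x) \rel{\ell} f(x)$ immediately. Conversely, every morphism $\Updelta X \rightarrow S$ is by definition a function $X \rightarrow \bars{S}$. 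Hence the hom-sets coincide as sets, and naturality in $X$ and $S$ is trivial because the bijection is literally the identity.

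Equivalently, and perhaps more illuminatingly, I would exhibit the unit and counit. The unit $\eta_X : X \rightarrow U\Updelta X$ is the identity function (note $U\Updelta X = X$ on underlying sets). The counit $\varepsilon_S : \Updelta U S \rightarrow S$ is again the identity on carriers; it is a morphism of classified sets precisely because $\Updelta US$ carries only the diagonal relation and $S$ has reflexive relations, so $x \rel{\ell} x$ in $\Updelta U S$ is sent to $x \rel{\ell} x$ in $S$. The triangle identities then hold on the nose, since every map in sight is a set-theoretic identity.

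The only subtle point is the appeal to reflexivity of the relations on $S$, which is exactly why reflexivity was built into the definition of a classified set (cf.\ the earlier footnote); without it, the counit $\varepsilon_S$ would fail to be a morphism. There is no real obstacle here: the proof is essentially a one-line verification once one notes that $\Updelta X$ carries the \emph{finest} relation compatible with reflexivity, so that preservation of relations imposes no constraint on maps out of it.
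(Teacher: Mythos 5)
Your proof is correct and is essentially the same argument as the paper's: both identify $\text{Hom}_{\textbf{CSet}_\mathcal{L}}(\Updelta X, S)$ with $\text{Hom}_{\textbf{Set}}(X, US)$ via the identity on underlying functions, with reflexivity of the relations on $S$ doing the (one-line) work. Your additional unit/counit formulation and the explicit remark about why reflexivity is needed are welcome elaborations but not a different route.
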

\begin{proof}
  Any morphism $f : \Updelta X \rightarrow Y$ is a function $f : X
  \rightarrow \bars{Y} = UY$. Conversely, any function $f : X
  \rightarrow UY$ can be seen as a morphism $f : \Updelta X
  \rightarrow Y$ of classified sets, as it trivially preserves the
  diagonal relation. Naturality is trivial.
\end{proof}

We then define \[
  \nabla : \textbf{Set} \longrightarrow \textbf{CSet}_\mathcal{L}
\] to map a set $X$ to itself, but equipped with the
\emph{complete relation} $\rel{\ell} \myeq \bars{X} \times
\bars{X}$ at each $\ell \in \mathcal{L}$. That is: no element of
$\nabla X$ is distinguishable from any other. Thus $\nabla X$ is
the classified set with carrier $X$ that is \emph{maximally
opaque}. This reminds us of the codiscrete topology on $X$.
Again, rather trivially, any function $f : X \rightarrow Y$ can be
seen as a morphism $f : \nabla X \rightarrow \nabla Y$, as it
evidently preserves the complete relation on $X$. It is not hard
to see that $\nabla$ is a full and faithful functor, and that

\begin{prop}
  $\nabla : \textbf{Set} \longrightarrow
  \textbf{CSet}_\mathcal{L}$ is right adjoint to the forgetful
  functor.
\end{prop}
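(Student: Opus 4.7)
The plan is to exhibit the adjunction $U \dashv \nabla$ by giving a natural bijection at the level of underlying functions, mirroring the proof given for $\Updelta \dashv U$ but dualising which side is trivial. Concretely, I would define the bijection
\[
  \Phi_{X, Y} : \text{Hom}_{\textbf{CSet}_\mathcal{L}}(X, \nabla Y)
    \longrightarrow
  \text{Hom}_{\textbf{Set}}(UX, Y)
\]
by sending a morphism $f : X \to \nabla Y$ of classified sets to its underlying set-theoretic function $\bars{X} \to \bars{\nabla Y} = Y$. This is clearly well-defined.

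For the inverse, I would take any function $g : \bars{X} \to Y$ and observe that, viewed as a function $\bars{X} \to \bars{\nabla Y}$, it automatically preserves each relation $\rel{\ell}$: indeed, in $\nabla Y$ the relation $\rel{\ell}$ is the complete relation $\bars{Y} \times \bars{Y}$, so for \emph{any} pair $(x, x')$ with $x \rel{\ell} x'$ in $X$ we have $g(x) \rel{\ell} g(x')$ in $\nabla Y$ trivially. Hence $g$ defines a morphism $X \to \nabla Y$ in $\textbf{CSet}_\mathcal{L}$, and this construction is evidently a two-sided inverse to $\Phi_{X, Y}$.

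Naturality in $X$ and $Y$ is immediate, since both directions of the bijection act as the identity on underlying functions: pre- and post-composition with morphisms $X' \to X$ in $\textbf{CSet}_\mathcal{L}$ and $Y \to Y'$ in $\textbf{Set}$ commute with $\Phi$ on the nose. I do not anticipate a genuine obstacle in this argument; the only point worth emphasising is precisely the one that makes $\nabla$ the ``right'' adjoint rather than the ``left'' one: the complete relation on $\nabla Y$ is \emph{maximally permissive}, so preserving it is a vacuous condition, and the set of morphisms into $\nabla Y$ is therefore as large as possible, namely the set of all functions of the underlying carriers.
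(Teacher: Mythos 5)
Your proposal is correct and matches the paper's own proof: both establish the bijection $\text{Hom}_{\textbf{CSet}_\mathcal{L}}(X, \nabla Y) \cong \text{Hom}_{\textbf{Set}}(UX, Y)$ by observing that any underlying function trivially preserves relations into $\nabla Y$ because the latter carries the complete relation, with naturality immediate since both directions are identities on underlying functions. No gaps.
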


\begin{proof}
  Any function $f : \bars{X} \rightarrow Y$ can be seen as a
  morphism $f : X \rightarrow \nabla Y$: it trivially preserves
  all the related pairs in $X$---no matter what they are---for
  $\nabla Y$ relates all elements of $Y$. Conversely, any
  morphism $f : X \rightarrow \nabla Y$ is simply a function $f :
  \bars{X} \rightarrow \bars{\nabla Y} = Y$. Naturality is again
  trivial.
\end{proof}

We now move on to connected components. Suppose we have a morphism
$f : X \rightarrow \Updelta Y$. Then, as each $\rel{\ell}$ in
$\Updelta Y$ is simply reflexivity, we have that for any $\ell \in
\mathcal{L}$, \[
  x \rel{\ell} x' \text{ (in $X$)}\
    \Longrightarrow
  f(x) \rel{\ell} f(x') \text{ (in $\Updelta Y$)}\
    \Longrightarrow
  f(x) = f(x')
\] That is: $f$ collapses related elements of $X$ that are related
at some---any!---level to a single element in $Y$. We cannot
phrase this in terms of quotients yet, for $\rel{\ell}$ need not
be an equivalence relation. So, let us define the relation
$R^\star \subseteq \bars{X} \times \bars{X}$ to be the reflexive,
symmetric, transitive closure of $\bigcup_{\ell \in \mathcal{L}}
\rel{\ell}$, i.e. the \emph{least} equivalence relation containing
all the $R_\ell$'s. We can now define $\bars{X}/R^\star$. This
extends to a functor \[
  C : \textbf{CSet}_\mathcal{L} \longrightarrow \textbf{Set}
\] by letting $C(X) \myeq \bars{X} / R^\star$, and defining $Cf
\myeq f^\star : \bars{X} / R^\star \rightarrow \bars{Y} /
R^\star$, where \[
  f^\star([x]) \myeq [f(x)]
\] $f^\star$ is well-defined: if $x \mathrel{R^\star} x'$, then
there is a (possibly empty) sequence $x_0, \dots, x_n$ of elements
of $X$ and a sequence of levels $\ell_0, \dots, \ell_{n+1}$ such
that \[
  x \rel{\ell_0} x_0 \rel{\ell_1}^{-1} x_1 \dots
                     \rel{\ell_{n-1}}^{-1} x_{n-1} 
                     \rel{\ell_n} x_n
                     \rel{\ell_{n+1}} x'
\] But $f$ preserves all of these, mapping inverses to inverses,
so \[
  f(x) \rel{\ell_0} f(x_0) \rel{\ell_1}^{-1} f(x_1) \dots
                   \rel{\ell_{n-1}}^{-1} f(x_{n-1})
                   \rel{\ell_n}^{-1} f(x_n) \rel{\ell_{n+1}} f(x')
\] and hence $[f(x)] = [f(x')]$, so the choice of equivalence
class representative is immaterial. We hence obtain a functor that
gives us for each classified set $X$ its set of connected
components, i.e. the set of equivalence classes
$\bars{X}/R^\star$. Any two elements in the same equivalence class
are indistinguishable at some level $\ell \in \mathcal{L}$.

Let us return to the function $f : X \rightarrow \Updelta Y$.
Define the relation \[
  x \sim_f x'
    \quad\myeq\quad
  f(x) = f(x')
\] This is clearly an equivalence relation, and we know that
$\bigcup_{\ell \in \mathcal{L}} \rel{\ell}\ \subseteq\
\mathrel{\sim_f}$.  But $R^\star$ is the least equivalence
relation such that the above is true, so $R^\star \subseteq\
\mathrel{\sim_f}$.  Hence, $Uf : \bars{X} \rightarrow Y$, which
respects $\bigcup_{\ell} \rel{\ell}$, can be uniquely factored as
\[
  \begin{tikzcd}
    \bars{X}
      \arrow[r, "\eta_X"]
      \arrow[dr, "Uf", swap]
    & \bars{X} / R^\star 
      \arrow[d, "\hat{f}", dotted] \\
    & Y
  \end{tikzcd}
\] where $\eta_X$ is the quotient map, and $\hat{f}([x]) \myeq
f(x)$. So we can uniquely take $f : X \rightarrow \Updelta Y$ to
$\hat{f} : CX \rightarrow Y$, and

\begin{prop}
  $C : \mathbf{CSet}_\mathcal{L} \rightarrow \mathbf{Set}$ is left
  adjoint to $\Updelta : \mathbf{Set} \rightarrow
  \mathbf{CSet}_\mathcal{L}$.
\end{prop}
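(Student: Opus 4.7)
The plan is to exhibit a natural bijection
\[
  \text{Hom}_{\mathbf{CSet}_\mathcal{L}}(X, \Updelta Y)
    \;\cong\;
  \text{Hom}_{\mathbf{Set}}(CX, Y)
\]
for $X \in \mathbf{CSet}_\mathcal{L}$ and $Y \in \mathbf{Set}$. The forward direction $f \mapsto \hat f$ is essentially already constructed in the paragraphs preceding the proposition: a morphism $f : X \to \Updelta Y$ satisfies $\bigcup_\ell \rel{\ell} \subseteq\ \mathrel{\sim_f}$, and hence, since $R^\star$ is the \emph{least} equivalence relation containing every $\rel{\ell}$, also $R^\star \subseteq\ \mathrel{\sim_f}$; this yields the unique factorisation $Uf = \hat f \circ \eta_X$ with $\hat f([x]) \myeq f(x)$.

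What is left is to build the inverse and to check mutual inversion and naturality. Given $g : CX \to Y$ in $\mathbf{Set}$, I would define $\tilde g : X \to \Updelta Y$ on carriers by $\tilde g(x) \myeq g([x])$, i.e.\ $\tilde g = g \circ \eta_X$. To verify this is a morphism of classified sets, I would take $x \rel{\ell} x'$ in $X$ and observe that then $x \mathrel{R^\star} x'$, so $[x] = [x']$, so $\tilde g(x) = \tilde g(x')$; this trivially implies $\tilde g(x) \rel{\ell} \tilde g(x')$ in $\Updelta Y$, where every $\rel{\ell}$ is just the diagonal. This is the one spot where the assumption of reflexivity on $\rel{\ell}$ (and the accompanying fact that $\Updelta Y$'s relations \emph{contain} the diagonal) is used.

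Mutual inversion is then immediate on elements: $\widehat{\tilde g}([x]) = \tilde g(x) = g([x])$ and $\widetilde{\hat f}(x) = \hat f([x]) = f(x)$. For naturality in $Y$, I would check, for $h : Y \to Y'$, that $\widehat{\Updelta h \circ f}([x]) = h(f(x)) = (h \circ \hat f)([x])$. For naturality in $X$, given $k : X' \to X$ in $\mathbf{CSet}_\mathcal{L}$, I would unfold $Ck$ to $(Ck)([x']) = [k(x')]$ and compute $\widehat{f \circ k}([x']) = f(k(x')) = \hat f([k(x')]) = (\hat f \circ Ck)([x'])$.

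I do not anticipate a real obstacle: all the substantive work is already done in the discussion leading up to the proposition, namely the verification that $C$ is well-defined on morphisms and the construction of the factorisation through $\eta_X$. The only conceptually delicate point is the one flagged in the second paragraph, where passing from $\tilde g(x) = \tilde g(x')$ to $\tilde g(x) \rel{\ell} \tilde g(x')$ relies on reflexivity of the relations in $\Updelta Y$; this is precisely why reflexivity was built into the definition of a classified set in the first place.
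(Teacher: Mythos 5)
Your proposal is correct and follows essentially the same route as the paper: the substantive content is the observation that any morphism $f : X \rightarrow \Updelta Y$ satisfies $R^\star \subseteq\ \sim_f$ and hence factors uniquely through the quotient map $\eta_X$, which the paper presents as a universal-arrow argument and you repackage as an explicit natural hom-set bijection with its inverse $g \mapsto g \circ \eta_X$. The remaining checks (mutual inversion, naturality) are routine and correctly carried out, so there is nothing to add.
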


It is easy to see that the canonical map $C(X \times Y)
\rightarrow CX \times CY$ is $[(x, y)] \mapsto \left([x],
[y]\right)$, and that, due to the behaviour of the relations on
the product, it has an inverse. Moreover, in the case of
classified sets, the \emph{nullstellensatz} is trivial: applying
the hom-set isomorphism of $\Updelta \dashv U$ to the identity
$id_{UX} : UX \rightarrow UX$, which is really the identity
function $id_{\bars{X}} : \bars{X} \rightarrow \bars{X}$, yields
the unit $\Updelta U X \rightarrow X$, which is again the identity
function on $\Updelta U X = \Updelta \bars{X}$, which is injective
and hence monic.

Thus, Theorem \ref{thm:coh} holds.

\section{Noninterference I: Monads and Comonads}
  \label{sec:noninterference1}

The cohesive structure of classified sets that we have developed
so far is enough to show some basic information flow properties
for monadic and comonadic calculi. In this section, we will state
and prove noninterference properties for (a) Moggi's monadic
metalanguage, and (b) the Davies-Pfenning calculus. Both of these
properties follow from an axiom of cohesion, which we call
\emph{codiscrete contractibility}, and which we discuss first.

Central to our noninterference proofs will be two canonically
constructed objects; they will be classified sets over the carrier
\[
  \mathbb{B} \myeq \{\texttt{tt}, \texttt{ff}\}
\] of booleans. The first one is the `discrete' booleans $\Updelta
\mathbb{B}$, in which if $b \rel{\ell} b'$ it follows that $b =
b'$: these are the booleans that are visible to
everyone.\footnote{They are referred to as the `low security
booleans' by \citet{Abadi1999}, and denoted \emph{boolL}.} The
second one will be the `codiscrete' booleans $\nabla \mathbb{B}$,
in which $b \rel{\ell} b'$ for all pairs of booleans $(b, b')$,
which are invisible at all levels. Note that \begin{align*}
  \mathbb{B} &\cong \mathbf{1} + \mathbf{1} \\ 
  \Updelta\mathbb{B} &\cong \Updelta\mathbf{1} + \Updelta\mathbf{1}
  \cong \mathbf{1} + \mathbf{1} \\
  \nabla\mathbb{B} 
  &\cong \nabla U \Updelta \mathbb{B}
    \cong \blacklozenge(\mathbf{1} + \mathbf{1})
\end{align*} These all follow from the fundamental corollary
(Corollary \ref{cor:coh}). The first isomorphism is by definition.
The second holds because $\Updelta$ preserves isomorphisms,
colimits ($+$) and limits $(\mathbf{1})$. The final one is
obtained by using $U\Updelta \cong \textsf{Id}$, then the second
one, and then the definition $\blacklozenge \myeq \nabla U$.

\subsection{Contractible codiscreteness and information flow}

We are now ready to discuss a certain axiom that pre-cohesions may
or may not satisfy. This axiom expresses a very intuitive
property: it says that when all points are stuck together they
constitute at most one connected component. Surprisingly, this
single axiom is the main point of connection between cohesion and
information flow: we will use it to prove results which are at the
core of our noninterference theorems, both here and in
\S\ref{sec:noninterference2}.

We have carefully described the string of adjoints $C \dashv
\Updelta \dashv U \dashv \nabla$ that expresses the idea of
pre-cohesion. In particular, $\nabla X$ is intuitively understood
to be the object with point-set $X$ equipped with maximal
cohesion, i.e. `everything stuck together.' A property that is
should follow from this intuition is that if one were to look at
the connected components of $\nabla X$, one would find \emph{at
most one}---or none, if $X$ is the empty set. Unfortunately, this
is not something that readily follows from pre-cohesion, but we
can explicitly ask for it.

\begin{defn}
  If $\mathcal{E}$ is pre-cohesive relative to $\mathcal{S}$ with
  $C \dashv \Updelta \dashv U \dashv \nabla$, then this
  pre-cohesion satisfies \emph{contractible codiscreteness} if
  codiscrete objects have at most a single connected component.
  That is: if for any $X \in \mathcal{S}$, $C(\nabla X)$ is a
  subobject of the terminal object $\mathbf{1}$, i.e. the unique
  morphism \[
    C(\nabla X) \rightarrow \mathbf{1}
  \] is monic.
\end{defn}

\noindent \citet{Lawvere2015} consider this concept in toposes,
and call it \emph{connected codiscreteness}.  The name for our
more general setting is due to \citet{Shulman2018}.

It is easy to see that 
\begin{prop}
  The pre-cohesion of $\textbf{CSet}_\mathcal{L}$ relative to
  $\textbf{Set}$ satisfies contractible codiscreteness.
\end{prop}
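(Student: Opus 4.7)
The plan is to directly unfold the definitions of $\nabla$ and $C$ given in the proof of Theorem~\ref{thm:coh}. Recall that $\nabla X$ is the classified set with carrier $X$ whose relation $\rel{\ell}$ at every level is the total relation $\bars{\nabla X} \times \bars{\nabla X} = X \times X$. Assuming $\mathcal{L}$ is nonempty, I would observe that $\bigcup_{\ell \in \mathcal{L}} \rel{\ell}$, computed on $\nabla X$, is therefore already the total relation $X \times X$, which is automatically an equivalence relation. Its reflexive, symmetric, transitive closure $R^\star$ then coincides with $X \times X$, so that $C(\nabla X) = \bars{\nabla X}/R^\star$ consists of a single equivalence class when $X$ is nonempty, and is empty when $X$ is empty.

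In either case the underlying set of $C(\nabla X)$ has at most one element, so the canonical function $C(\nabla X) \to \mathbf{1}$ in $\textbf{Set}$ is injective and hence monic, which is precisely what contractible codiscreteness demands. There is no real obstacle here: the whole argument rests on the observation that the codiscrete relations on $\nabla X$ are already so coarse that quotienting by $R^\star$ can only yield a single connected component. The one subtlety worth flagging is the degenerate case $\mathcal{L} = \emptyset$, where $\nabla$ would fail to identify anything and the property would collapse for $X$ with more than one element; this can be ruled out by assumption or handled as a trivial separate case.
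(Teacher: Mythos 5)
Your proof is correct and takes essentially the same route as the paper, which simply observes that equipping $X$ with the complete relation at every level makes $R^\star$ the total relation, so the quotient $C(\nabla X)$ has at most one element and the map to $\mathbf{1}$ is monic. Your caveat about $\mathcal{L} = \emptyset$ is a sharp and genuine observation that the paper glosses over at this point --- it only makes the analogous non-emptiness hypothesis explicit later, in the levelled setting of \S\ref{sec:noninterference2}, where codiscrete contractibility is claimed only for $\pi \neq \emptyset$.
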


\noindent This is a fancy way of stating the following obvious
fact: if we start with a set $X$, equip it with the complete
relation at each level $\ell \in \mathcal{L}$, and we then take
the connected components of that, there will be at most one. In
fact, if $X = \emptyset$ there will be none, and otherwise there
will be exactly one.

The axiom of contractible codiscreteness is very useful, as it
allows one to prove \emph{abstract noninterference properties}. We
will prove such a property presently, but we first have to discuss
its slightly thorny interaction with non-emptiness in categorical
terms.

Suppose the object $X$ is \emph{non-empty}, i.e. there is a point
$x : \mathbf{1} \rightarrow X$. This is a sufficient condition to
show that $C(\nabla X)$ is actually `contractible,' i.e. isomorphic
to the terminal object.

\begin{prop}[Contractibility of non-empty codiscrete spaces]
  \label{prop:cc}
  Let $\mathcal{E}$ be pre-cohesive over $\mathcal{S}$ in a way
  that satisfies contractible codiscreteness. If $X \in
  \mathcal{S}$ is non-empty, then \[
    C(\nabla X) \cong \mathbf{1}
  \]
\end{prop}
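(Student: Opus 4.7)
The plan is to strengthen contractible codiscreteness (which gives $C(\nabla X) \rightarrowtail \mathbf{1}$) by exhibiting $\mathbf{1} \rightarrow C(\nabla X)$; combined, these will force an isomorphism. The main ingredients will be (i) that $\nabla$ and $C$ preserve terminal objects, and (ii) the standard fact that any split epimorphism that is also monic is an isomorphism.

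First, I would transport the point of $X$ across the adjoint string. Given $x : \mathbf{1} \rightarrow X$ in $\mathcal{S}$, apply $\nabla$ to get $\nabla x : \nabla \mathbf{1} \rightarrow \nabla X$. Since $\nabla$ is a right adjoint to $U$, it preserves all limits (Corollary \ref{cor:coh}(3)), and in particular the terminal object, so $\nabla \mathbf{1} \cong \mathbf{1}$ in $\mathcal{E}$. Composing, we obtain a point $p : \mathbf{1} \rightarrow \nabla X$ of the codiscrete object.

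Next, apply $C$ to $p$ to obtain $Cp : C(\mathbf{1}) \rightarrow C(\nabla X)$ in $\mathcal{S}$. Since $C$ preserves finite products (clause (2) of the definition of pre-cohesion, restated as Corollary \ref{cor:coh}(4)), and the terminal object is the empty product, $C(\mathbf{1}) \cong \mathbf{1}$. This yields a point $\hat p : \mathbf{1} \rightarrow C(\nabla X)$. By terminality, the composite $\mathbf{1} \xrightarrow{\hat p} C(\nabla X) \rightarrow \mathbf{1}$ must be the identity on $\mathbf{1}$, so $\hat p$ is a section of the canonical map $e : C(\nabla X) \rightarrow \mathbf{1}$, making $e$ a split epimorphism.

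Finally, I would combine this with contractible codiscreteness, which tells us $e$ is also a monomorphism. A morphism that is simultaneously a split epi and a mono is an isomorphism: if $e \hat p = \mathrm{id}_{\mathbf{1}}$, then $e(\hat p\, e) = (e\hat p)e = e = e\, \mathrm{id}$, and cancelling $e$ on the left (using that $e$ is monic) gives $\hat p\, e = \mathrm{id}_{C(\nabla X)}$. Hence $C(\nabla X) \cong \mathbf{1}$, as desired. The only subtlety worth flagging is the distinction between $C(\nabla X)$ being a subterminal object (given by contractible codiscreteness) and being actually isomorphic to $\mathbf{1}$; the role of the non-emptiness hypothesis is precisely to supply the splitting that promotes the subterminal inclusion to an isomorphism.
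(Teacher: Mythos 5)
Your proof is correct and follows essentially the same route as the paper's: transport the point $x : \mathbf{1} \rightarrow X$ through $C\nabla$ using preservation of (empty) products to get a point of $C(\nabla X)$, then use monicity of $C(\nabla X) \rightarrow \mathbf{1}$ to conclude both composites are identities. Your write-up merely spells out the split-epi-plus-mono argument that the paper leaves implicit.
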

\begin{proof}
  Let $x : \mathbf{1} \rightarrow X$. By applying $C\nabla$ to it,
  and then using preservation of products (which we have by Lemma
  \ref{cor:coh}), we obtain a point $\mathbf{1} \rightarrow
  C\nabla X$, which is to say that $C\nabla X$ is non-empty too.
  The composite $\mathbf{1} \rightarrow C\nabla X \rightarrow
  \mathbf{1}$ is trivially the identity. Moreover, the composite
  $C\nabla X \rightarrow \mathbf{1} \rightarrow C\nabla X$ is the
  identity on $C\nabla X$: as $C\nabla X \rightarrow \mathbf{1}$
  is monic, any two morphisms into $C\nabla X$ are equal.
\end{proof}

\noindent We can now state and prove the following.

\begin{prop}
  \label{prop:cocoint}
  Let $\mathcal{E}$ be pre-cohesive over $\mathcal{S}$, in a way
  that satisfies contractible codiscreteness. Then
  \begin{enumerate}
    \item
      Morphisms $\blacklozenge A \rightarrow \Updelta B$ for
      non-empty $A \in \mathcal{E}$ and $B \in \mathcal{S}$
      correspond to points $\mathbf{1} \rightarrow B$.
    \item 
      Morphisms $\nabla A \rightarrow \Box B$ for non-empty $A \in
      \mathcal{S}$ and $B \in \mathcal{E}$ correspond to points
      $\mathbf{1} \rightarrow UB$.
  \end{enumerate}
\end{prop}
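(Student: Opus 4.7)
The plan is to reduce both parts to the adjunction $C \dashv \Updelta$ together with Proposition~\ref{prop:cc}. Since in each part the right-hand side of the hom-set is an object of the form $\Updelta(-)$, the natural first move is to transpose across $C \dashv \Updelta$, after which the left argument becomes $C$ applied to a codiscrete object. Proposition~\ref{prop:cc} then collapses that codiscrete object to $\mathbf{1}$ whenever the underlying set is non-empty, leaving a hom-set out of $\mathbf{1}$, i.e.\ a set of points.

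Concretely, for part (2) I would write
\[
  \text{Hom}_\mathcal{E}(\nabla A, \Box B)
    = \text{Hom}_\mathcal{E}(\nabla A, \Updelta U B)
    \cong \text{Hom}_\mathcal{S}(C\nabla A, UB),
\]
and then use the non-emptiness of $A$ together with Proposition~\ref{prop:cc} to obtain $C\nabla A \cong \mathbf{1}$, yielding $\text{Hom}_\mathcal{S}(\mathbf{1}, UB)$, which is exactly the set of points of $UB$.

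For part (1) the computation is
\[
  \text{Hom}_\mathcal{E}(\blacklozenge A, \Updelta B)
    = \text{Hom}_\mathcal{E}(\nabla U A, \Updelta B)
    \cong \text{Hom}_\mathcal{S}(C\nabla UA, B),
\]
and the additional step here is to verify that $UA$ is non-empty before invoking Proposition~\ref{prop:cc}. This is immediate: $U$ is a right adjoint (to $\nabla$) and hence preserves the terminal object, so the given point $\mathbf{1} \to A$ pushes forward to $\mathbf{1} \cong U\mathbf{1} \to UA$. Then $C\nabla UA \cong \mathbf{1}$, and the hom-set collapses to $\text{Hom}_\mathcal{S}(\mathbf{1}, B)$, the set of points of $B$.

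The main (indeed only) obstacle is this non-emptiness bookkeeping in part (1); after that the argument is just a pair of adjunction transpositions followed by a substitution of $\mathbf{1}$ for $C\nabla(-)$. No naturality in $A$ or $B$ is claimed, so the proposition is a pointwise bijection of hom-sets for fixed non-empty arguments, which is precisely what the two transpositions above deliver.
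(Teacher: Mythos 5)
Your proof is correct and follows essentially the same route as the paper's: transpose across $C \dashv \Updelta$ and then apply Proposition~\ref{prop:cc}, with your explicit check that $UA$ is non-empty being a detail the paper leaves implicit. One small slip: $U$ is right adjoint to $\Updelta$ (it is \emph{left} adjoint to $\nabla$), but either that adjunction, or simply $U\mathbf{1} \cong \mathbf{1}$ from Corollary~\ref{cor:coh}, gives the needed point of $UA$.
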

\begin{proof} \hfill
  \begin{enumerate}
    \item
      The following isomorphisms hold naturally: \begin{align*}
        \mathcal{E}(\blacklozenge A, \Updelta B)
          &\cong \mathcal{E}(\nabla U A, \Updelta B) 
          & \text{ by definition of $\blacklozenge$} \\
          &\cong \mathcal{S}(C \nabla U A, B) 
          & \text{ as $C \dashv \Updelta$} \\
          &\cong \mathcal{S}(\mathbf{1}, B) 
          & \text{ by Prop. \ref{prop:cc}}
      \end{align*}
    \item
      The following isomorphisms hold naturally: \begin{align*}
        \mathcal{E}(\nabla A, \Box B)
          &\cong \mathcal{E}(\nabla A, \Updelta U B) 
          & \text{ by definition of $\Box$} \\
          &\cong \mathcal{S}(C \nabla A, U B) 
          & \text{ as $C \dashv \Updelta$} \\
          &\cong \mathcal{S}(\mathbf{1}, U B) 
          & \text{ by Prop. \ref{prop:cc}}
      \end{align*}
  \end{enumerate}
\end{proof}

In the concrete case of $\textbf{CSet}_\mathcal{L}$ the above
lemma says that if $A$ is non-empty then morphisms of type
$\blacklozenge A \rightarrow \Updelta B$ and $\Updelta A
\rightarrow \Box B$ are \emph{constant functions}. Indeed, the
first isomorphism in either proof is an identity, and the second
one collapses all elements of $\nabla U A$ or $\nabla A$ to a
single element.

A special case of (1), `manually' proven for the particular case
of $\textbf{CSet}_\mathcal{L}$ and $B = \mathbb{B}$, forms the
central reasoning involved in the noninterference proofs of
\citet{Abadi1999}. Our result generalises this to any setting of
pre-cohesion, and also yields the heretofore unnoticed dual (2).

We will now apply (1) and (2) to construct two noninterference
proofs.

\subsection{Moggi's monadic metalanguage}
  \label{sec:moggi}

\citet{Moggi1991} introduced the \emph{monadic metalanguage}, a
typed $\lambda$-calculus which, for each type $A$ features a type
of \emph{computations} $TA$. The `result' of a computation $M :
TA$ is a `value' of type $A$. Indeed, each `value' is a trivial
computation, as can be witnessed by the introduction rule: \[
  \begin{prooftree}
      \Gamma \vdash M : A
    \justifies
      \Gamma \vdash [M] : TA
  \end{prooftree}
\] The idea is that types of the form $TA$ encapsulate various
`notions of computations,' nowadays referred to as \emph{effects}
(recursion, nondeterminism, operations on the store, etc.). Very
little is assumed of the type constructor. To quote Moggi:
\begin{quote}
  ``Rather than focusing on a specific $T$, we want to find the
  general properties common to all notions of computation.''
\end{quote}

Moggi then showed in \cite[\S 3]{Moggi1991} that there is a
categorical semantics of this language in which $T$ is interpreted
by a \emph{strong monad} on a CCC $\mathcal{C}$. The category
$\mathcal{C}$ is then the \emph{category of values}, whereas the
Kleisli category $\mathcal{C}_T$ of the monad $T : \mathcal{C}
\longrightarrow \mathcal{C}$ is the \emph{category of programs}.

The key information flow property enjoyed by monads is that once
something is `inserted' into the monad, then it cannot flow out.
The monad identifies a region of the language which is
\emph{impure}, in that evaluating terms within the region causes
\emph{effects}. If the results of those effects were to flow
`outside' the monad, the outcome would be the loss of referential
transparency. This information flow property is evident if one
looks at the elimination rule for $T$: \[
  \begin{prooftree}
      \Gamma \vdash M : TA
    \quad
      \Gamma , x : A \vdash N : TB
    \justifies
      \Gamma \vdash \textsf{let } x = M \textsf{ in } N : TB
  \end{prooftree}
\] If we have a computation $M : TA$ that yields a result of type
$A$, the $\textsf{let}$ construct allows us to substitute it for a
variable of type $A$, but only as long as this to be used within
\emph{another} computation $N : TB$. Thus values of type $TA$
cannot `escape' the scope of $T$.

It is not hard to show that $\mathbf{CSet}_\mathcal{L}$ is a model
of Moggi's monadic metalanguage (with booleans), with $T$ being
interpreted by $\blacklozenge$, and $\mathsf{Bool}$ by
$\Updelta\mathbb{B} \cong \mathbf{1} + \mathbf{1}$: the only thing
that remains to be shown is that $\blacklozenge$ is a strong
monad, which is true by a more general result that we cover later
(Prop. \ref{prop:strong}). If we use the standard categorical
interpretation, as defined by \cite[\S 3]{Moggi1991}, a term $x :
TA \vdash M : \textsf{Bool}$ is interpreted as a morphism
$\blacklozenge\sem{A}{} \rightarrow \Updelta\mathbb{B}$ in
$\mathbf{CSet}_\mathcal{L}$. Hence, by Prop. \ref{prop:cocoint} we
know it is interpreted by a constant function in the model.

Nevertheless, this is not a noninterference result yet, for it
tells us something about a particular model, viz.
$\mathbf{CSet}_\mathcal{L}$, but nothing about the equational
theory of Moggi's calculus. To infer something about it from one
of its models, some \emph{completeness property} of that model
must be established. We will use the simplest one of all, which is
known as \emph{adequacy}. It is the most basic form of
completeness, and is commonly used in the study of calculi with
recursion, e.g. PCF \cite{Plotkin1977, Streicher2006}. A model of
a calculus is adequate exactly when it is complete at \emph{ground
types}. What we take to be a ground type is up to us, but the
usual choices are simple base types (booleans, flat naturals,
etc.) For this paper, we use the term to refer to types defined by
a collection of constants, along with elimination and computation
rules. For example, $\textsf{Bool}$ is a ground type: it may be
presented by the constants \[
  \Gamma \vdash \texttt{tt} : \textsf{Bool}
    \qquad
  \Gamma \vdash \texttt{ff} : \textsf{Bool}
\] along with the the elimination rule \[
  \begin{prooftree}
      \Gamma \vdash M : \textsf{Bool}
        \quad
      \Gamma \vdash E_0 : C
        \quad
      \Gamma \vdash E_1 : C
    \justifies
      \Gamma \vdash \textsf{if } M \textsf{ then } E_1 \textsf{
      else } E_0 : C
  \end{prooftree}
\] and the two computation rules \begin{align*}
  \textsf{if } \texttt{tt} \textsf{ then } E_1 \textsf{ else } E_0 &= E_1 \\
  \textsf{if } \texttt{ff} \textsf{ then } E_1 \textsf{ else } E_0 &= E_0 
\end{align*} 

Suppose that a (non-dependent, possibly modal) type theory
satisfies \emph{canonicity at all ground types}, i.e. for every
such ground type $G$ and every \emph{closed} term $\vdash M : G$
there is a unique constant $\textsf{c} : G$ such that $\vdash M =
c : G$. For example, canonicity for $\textsf{Bool}$ requires that
every closed term $\vdash M : \textsf{Bool}$ is equal to either
$\texttt{tt}$ or $\texttt{ff}$. In strongly normalising
programming calculi, canonicity is a corollary of \emph{progress
and preservation theorems} \citep[\S8.3, \S9]{Pierce2002}: the
normalisation of a well-typed closed term will reach a
\emph{canonical form}. In type theories, canonicity is a corollary
of \emph{confluence and strong normalisation}: each closed term
can be reduced to a unique normal form, and each closed normal
form of type $A$ can only correspond an introduction rules of type
$A$, which for ground types are simply constants.

Suppose now that we interpret the type theory in a category in the
standard way---e.g. as in \cite{Crole1993, Abramsky2011a}---so
that (a) closed terms $\vdash M : G$ are interpreted as points
$\mathbf{1} \rightarrow \sem{G}{}$, (b) the interpretation is
sound, in that $\vdash M = N : A$ implies $\sem{M}{} = \sem{N}{}$,
and (c) the interpretation of each ground type $G$ is
\emph{injective}, in that distinct constants have distinct
interpretations. Then, the interpretation is automatically
adequate:

\begin{lem}[Adequacy]
  \label{lem:adeq}
  Suppose that a type theory satisfies canonicity at ground types,
  and has a sound categorical interpretation which is injective at
  every ground type $G$, in the sense that \[
    \sem{\vdash \textsf{c}_i : G}{}
      =
    \sem{\vdash \textsf{c}_j : G}{}
      : \mathbf{1} \rightarrow \sem{G}{}
      \quad\Longrightarrow\quad
    \textsf{c}_i \equiv \textsf{c}_j
  \] Then this interpretation is \emph{adequate for $G$}, in the
  sense that \[
    \sem{\vdash M : G}{} = \sem{\vdash \textsf{c}_i : G}{}
      \quad\Longrightarrow\quad
    \vdash M = \textsf{c}_i : G
  \]
\end{lem}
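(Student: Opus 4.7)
The plan is to chain together the three hypotheses—canonicity, soundness, and injectivity—in what is essentially a one-line argument. The intuition is that canonicity pushes the syntactic term down to a constant, soundness reflects this reduction to the semantic side, and injectivity lets us transfer the resulting semantic equality back to a syntactic identification between constants.

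More concretely, suppose we are given a closed term $\vdash M : G$ and a constant $\textsf{c}_i$ with $\sem{\vdash M : G}{} = \sem{\vdash \textsf{c}_i : G}{}$. First, I would invoke canonicity at $G$: this yields a unique constant, call it $\textsf{c}_j$, such that $\vdash M = \textsf{c}_j : G$ holds in the type theory. Next, I would apply soundness of the interpretation to this provable equality, obtaining $\sem{\vdash M : G}{} = \sem{\vdash \textsf{c}_j : G}{}$ as morphisms $\mathbf{1} \rightarrow \sem{G}{}$. Combining this with the hypothesis, we get
\[
  \sem{\vdash \textsf{c}_i : G}{} = \sem{\vdash \textsf{c}_j : G}{} : \mathbf{1} \rightarrow \sem{G}{}.
\]
Now the assumed injectivity of the interpretation at $G$ kicks in and forces $\textsf{c}_i \equiv \textsf{c}_j$ as constants. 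Substituting back, we conclude $\vdash M = \textsf{c}_i : G$, as required.

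There is no real obstacle here: the lemma is essentially a bookkeeping exercise that makes precise how the three standard ingredients of a denotational argument—canonicity on the syntactic side, soundness bridging syntax to semantics, and injectivity controlling the semantic interpretation of constants—combine to give adequacy. If anything deserves comment, it is the observation that adequacy for a ground type $G$ does not require a full definability or full abstraction result; it only requires that the (finitely many, or at least distinguishable) constants of $G$ have pairwise distinct denotations, which is why ground types like $\textsf{Bool}$ are so convenient for this kind of argument and will suffice for the noninterference theorems to follow.
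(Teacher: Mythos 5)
Your proof is correct and follows exactly the same route as the paper's: canonicity produces a constant $\textsf{c}_j$ with $\vdash M = \textsf{c}_j : G$, soundness transfers this to the model, and injectivity at $G$ forces $\textsf{c}_i \equiv \textsf{c}_j$, hence $\vdash M = \textsf{c}_i : G$. Nothing to add.
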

\begin{proof}
  By canonicity, for any $\vdash M : G$ we have $\vdash M = c_j :
  G$ for some constant $c_j$ of $G$. But then
  $\sem{\textsf{c}_i}{} = \sem{M}{} = \sem{\textsf{c}_j}{}$ by
  soundness, so $\textsf{c}_i \equiv \textsf{c}_j$, and hence
  $\vdash M = \textsf{c}_i : G$.
\end{proof}

\noindent This lemma applies to the standard interpretation of
Moggi's monadic metalanguage into any CCC with a strong monad.  It
is known that the straightforward extension of Moggi's calculus
with coproducts and a unit type (which together subsume
$\textsf{Bool}$) is confluent, strongly normalising, and has a
sound interpretation into any biCCC with a strong monad: this was
shown by \citet{Benton1998}. Thus, it satisfies canonicity at
$\textsf{Bool}$. Hence, the above lemma still applies, and we
obtain

\begin{thm}[Noninterference for Moggi]
  Let $A$ be a non-empty type, which is to say there exists a
  closed term of type $A$. If $x : TA \vdash M : \textsf{Bool}$
  then for any $\vdash E, E' : TA$ we have \[
    \vdash M[E/x] = M[E'/x] : \textsf{Bool}
  \]
\end{thm}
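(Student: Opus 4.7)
The plan is to apply the adequacy lemma in the obvious way, using Proposition \ref{prop:cocoint}(1) as the semantic core of the argument. First, I would fix a sound interpretation $\sem{-}{}$ of Moggi's metalanguage into $\mathbf{CSet}_\mathcal{L}$ with $T$ interpreted by $\blacklozenge$ and $\textsf{Bool}$ by $\Updelta\mathbb{B}$; soundness, together with strong normalisation and confluence (and hence canonicity at $\textsf{Bool}$), was already cited from \cite{Benton1998}, and injectivity at $\textsf{Bool}$ is immediate because $\sem{\texttt{tt}}{}$ and $\sem{\texttt{ff}}{}$ are distinct points of $\Updelta\mathbb{B} \cong \mathbf{1} + \mathbf{1}$. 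Hence Lemma \ref{lem:adeq} applies at $G = \textsf{Bool}$.

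Next, the judgement $x : TA \vdash M : \textsf{Bool}$ is interpreted as a morphism $\sem{M}{} : \blacklozenge\sem{A}{} \rightarrow \Updelta\mathbb{B}$ in $\mathbf{CSet}_\mathcal{L}$. The assumption that $A$ is non-empty gives a closed term $\vdash a : A$, which is interpreted as a point $\sem{a}{} : \mathbf{1} \rightarrow \sem{A}{}$, so $\sem{A}{}$ is non-empty in $\mathcal{E} = \mathbf{CSet}_\mathcal{L}$. We are therefore in precisely the situation covered by Proposition \ref{prop:cocoint}(1), which tells us that $\sem{M}{}$ corresponds to a single point $\mathbf{1} \rightarrow \mathbb{B}$; in particular, $\sem{M}{}$ is a constant function.

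Now, by the standard interpretation of substitution, the denotations of the closed terms $\vdash M[E/x] : \textsf{Bool}$ and $\vdash M[E'/x] : \textsf{Bool}$ are obtained by precomposition:
\[
  \sem{M[E/x]}{} = \sem{M}{} \circ \sem{E}{}, \qquad
  \sem{M[E'/x]}{} = \sem{M}{} \circ \sem{E'}{},
\]
where $\sem{E}{}, \sem{E'}{} : \mathbf{1} \rightarrow \blacklozenge\sem{A}{}$. Since $\sem{M}{}$ is constant, these two composites are equal as points $\mathbf{1} \rightarrow \Updelta\mathbb{B}$. Applying adequacy (Lemma \ref{lem:adeq}) at $\textsf{Bool}$ to each of $M[E/x]$ and $M[E'/x]$, canonicity gives constants $\textsf{c}_i, \textsf{c}_j \in \{\texttt{tt}, \texttt{ff}\}$ with $\vdash M[E/x] = \textsf{c}_i$ and $\vdash M[E'/x] = \textsf{c}_j$; soundness and the equality of denotations force $\sem{\textsf{c}_i}{} = \sem{\textsf{c}_j}{}$, and injectivity at $\textsf{Bool}$ yields $\textsf{c}_i \equiv \textsf{c}_j$. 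Hence $\vdash M[E/x] = M[E'/x] : \textsf{Bool}$ by transitivity.

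The only genuine point requiring care is that the categorical premise ``$\sem{A}{}$ is non-empty'' really does follow from the syntactic hypothesis on $A$; this is immediate once one commits to interpreting closed terms as global points, which is part of the standard interpretation assumed. Everything else is bookkeeping: the hard work has already been done in Proposition \ref{prop:cocoint}(1), which is where cohesion (specifically, contractible codiscreteness) does all the semantic lifting, and in the existing meta-theory of the monadic metalanguage cited from \cite{Benton1998}, which supplies canonicity.
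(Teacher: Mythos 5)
Your proposal is correct and follows essentially the same route as the paper: interpret $M$ as a morphism $\blacklozenge\sem{A}{} \rightarrow \Updelta\mathbb{B}$, use non-emptiness of $A$ to invoke Proposition \ref{prop:cocoint}(1) for constancy, equate the denotations of $M[E/x]$ and $M[E'/x]$ via substitution-as-composition, and conclude by adequacy (Lemma \ref{lem:adeq}). The only difference is that you unpack the final adequacy step in slightly more detail than the paper does, which is harmless.
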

\begin{proof}
  The interpretation of $\textsf{Bool}$ as $\Updelta\mathbb{B}
  \cong \mathbf{1} + \mathbf{1}$ in $\mathbf{CSet}_\mathcal{L}$
  satisfies the assumptions of Lemma \ref{lem:adeq}. Hence, the
  interpretation is adequate for it. We have that \[
    \sem{ \vdash M[F/x] : \textsf{Bool}}{} 
    = \sem{x : TA \vdash M : \textsf{Bool}}{}
        \circ \sem{\vdash F : TA}{} : 
          \mathbf{1} \rightarrow \Updelta\mathbb{B}
  \] for any $\vdash F : TA$. But as $A$ is non-empty, we have
  that $\sem{A}{}$ is non-empty, so by Proposition
  \ref{prop:cocoint}(1), $\sem{x : TA \vdash M : \textsf{Bool}}{}
  : \blacklozenge\sem{A}{} \rightarrow \Updelta\mathbb{B}$ is a
  constant function, so \[
    \sem{M[E/x]}{} 
      = \sem{x : TA \vdash M : \textsf{Bool}}{} \circ \sem{E}{}
      = \sem{x : TA \vdash M : \textsf{Bool}}{} \circ \sem{E'}{}
      = \sem{M[E'/x]}{}
  \] for any $\vdash E, E' : TA$. By adequacy, it follows that
  $\vdash M[E/x] = M[E'/x] : \textsf{Bool}$.
\end{proof}

\subsection{The Davies-Pfenning calculus}
  \label{sec:dp}

Davies and Pfenning introduced a comonadic modal type theory as a
type system for binding-time analysis in \citep{Davies1996,
Davies2001a}. The idea was that data could not arbitrarily flow
from type $A$ to type $\Box A$. This is clearly reflected in the
structure of the type system: each typing judgement comes with two
contexts, and has the shape \[
  \ctxt{\Updelta}{\Gamma} \vdash M : A
\] where $\Updelta$ are the `modal' variables, and $\Gamma$ are
normal variables. A `modal' variable can be used as a normal
variable, as witnessed by the rule $\ctxt{\Updelta, u : A}{\Gamma}
\vdash u : A$. However, when introducing terms of type $\Box A$,
we can only use `modal' variables from $\Updelta$: \[
  \begin{prooftree}
      \ctxt{\Updelta}{\cdot} \vdash M : A
    \justifies
      \ctxt{\Updelta}{\Gamma} \vdash \ibox{M} : \Box A
  \end{prooftree}
\] It is not hard to define a non-trivial morphism $\textsf{Bool}
\rightarrow \Box \textsf{Bool}$: \[
  b : \textsf{Bool} \vdash \textsf{ if } b \textsf{ then }
  \ibox{\texttt{tt}} \textsf{ else } \ibox{\texttt{ff}} :
  \Box\textsf{Bool}
\] However, it is impossible to pass from $A \rightarrow B$ to
$\Box(A \rightarrow B)$ in general. This property was used by
Davies and Pfenning to separate things that were available
\emph{statically}, by isolating them under the box, from things
that are available \emph{dynamically} (and thus cannot always be
used for metaprogramming). For example, one can always use a
boolean constant for metaprogramming, as shown above, but one
cannot use `live' piece of code, a function $A \rightarrow B$.
However, if one has prudently arranged to have a copy of the
source code of a function $A \rightarrow B$ available, that would
be of type $\Box(A \rightarrow B)$, and one can then use it for
metaprogramming.

Terms of type $\Box A$ are used by substituting for a variable in
the `modal' context. This yields the following elimination rule:
\[
  \begin{prooftree}
      \ctxt{\Updelta}{\Gamma} \vdash M : \Box A
        \quad
      \ctxt{\Updelta, u : A}{\Gamma} \vdash N : B
    \justifies
      \ctxt{\Updelta}{\Gamma} \vdash \letbox{u}{M}{N} : B
  \end{prooftree}
\] along with the reduction $\letbox{u}{\ibox{M}}{N} \rightarrow
N[M/u]$.

Even though the information flow properties of the Davies-Pfenning
calculus are intuitive, it is not at all evident how to express
them as a noninterference theorem: whilst in Moggi's calculus the
monad $T$ blocked all information flow out of it, the
Davies-Pfenning calculus allows \emph{some} flow into the modal
types, as witnessed by the non-trivial morphism $\textsf{Bool}
\rightarrow \Box\textsf{Bool}$ constructed above.

The way out of this impasse is to consider the dual of the
statement used to prove noninterference for the metalanguage,
namely Proposition \ref{prop:cocoint}(2). In the place of
booleans, we will use the \emph{codiscrete booleans}
$\textsf{Bool}_\nabla$.  These are introduced explicitly by \[
    \ctxt{\Updelta}{\Gamma} \vdash 
      \texttt{tt} : \textsf{Bool}_\nabla
  \qquad
    \ctxt{\Updelta}{\Gamma} \vdash 
      \texttt{ff} : \textsf{Bool}_\nabla
\] and the elimination rule \[
  \begin{prooftree}
      \Gamma \vdash M : \textsf{Bool}_\nabla
        \quad
      \Gamma \vdash E_0 : C
        \quad
      \Gamma \vdash E_1 : C
        \quad
      \text{$C$ is codiscrete}
    \justifies
      \Gamma \vdash \textsf{if } M \textsf{ then } E_1 \textsf{
      else } E_0 : C
  \end{prooftree}
\] with the same computation rules as before. The side condition
that $C$ is codiscrete is defined by (I) $\textsf{unit}$ (if we
have it) and $\textsf{Bool}_\nabla$ are codiscrete; (II) if $A$
and $B$ are codiscrete, then so is $A \times B$; (III) if $B$ is
codiscrete, then so is $A \rightarrow B$. We will discover the
roots of this definition in \S\ref{sec:refl}. For now, let us say
that the intended meaning is that, when interpreted in
$\mathsf{CSet}_\mathcal{L}$, the object $\sem{C}{}$ will be
isomorphic to $\blacklozenge \sem{C}{}$. 

Now, let us note that the Davies-Pfenning calculus satisfies
confluence, and strong normalisation \citep{Kavvos2017b,
Kavvos2017c}, and hence canonicity, with canonical forms at type
$\Box A$ of the form $\ibox{M}$, with $M$ is a canonical form at
type $A$. It is straightforward to re-establish canonicity after
the addition of codiscrete booleans. The Davies-Pfenning calculus
also has a standard interpretation in any CCC with a
product-preserving comonad on it: see e.g. \citep{Hofmann1999,
Kavvos2017b}. we may again show adequacy:

\begin{lem}[Davies-Pfenning Adequacy]
  \label{lem:dpadeq}
  Suppose we have a categorical model for the Davies-Pfenning
  calculus, as well as a ground type $G$ that satisfies canonicity
  along with an injective interpretation in that model, so that \[
    \sem{ \vdash \textsf{c}_i : G}{}
      =
    \sem{ \vdash \textsf{c}_j : G}{}
      : \mathbf{1} \rightarrow \sem{G}{}
      \quad\Longrightarrow\quad
    \textsf{c}_i \equiv \textsf{c}_j
  \] Then this interpretation is \emph{adequate for $\Box G$}, in
  the sense that \[
    \sem{\ \vdash M : \Box G}{}
    = \sem{\ \vdash \ibox{\textsf{c}_i} : \Box G}{}
      \quad\Longrightarrow\quad
    \vdash M = \ibox{\textsf{c}_i} : \Box G
  \]
\end{lem}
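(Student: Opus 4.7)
The strategy mirrors Lemma \ref{lem:adeq}, but with the twist that injectivity is assumed only at $G$, not at $\Box G$. First I would invoke canonicity of the Davies-Pfenning calculus at $\Box G$: since the calculus is confluent and strongly normalising, and the canonical forms of $\Box G$ are precisely terms $\ibox{N}$ with $N$ a canonical form at $G$, the canonicity hypothesis at $G$ forces each such $N$ to be some constant $\textsf{c}_k$. Thus any closed $\vdash M : \Box G$ satisfies $\vdash M = \ibox{\textsf{c}_j} : \Box G$ for some constant $\textsf{c}_j$, and by soundness combined with the hypothesis we obtain $\sem{\ibox{\textsf{c}_j}}{} = \sem{M}{} = \sem{\ibox{\textsf{c}_i}}{}$ as morphisms $\mathbf{1} \to \sem{\Box G}{}$. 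It then remains to conclude $\textsf{c}_i \equiv \textsf{c}_j$, for then $\vdash M = \ibox{\textsf{c}_j} = \ibox{\textsf{c}_i} : \Box G$.

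To close this gap I would exploit the computation rule $\letbox{u}{\ibox{N}}{u} = N$, which gives $\vdash \letbox{u}{\ibox{\textsf{c}_k}}{u} = \textsf{c}_k : G$ for either constant. In the standard categorical interpretation, $\sem{\vdash \letbox{u}{\ibox{\textsf{c}_k}}{u} : G}{}$ is obtained from $\sem{\vdash \ibox{\textsf{c}_k}}{}$ by a functorial post-composition --- essentially the counit $\epsilon_{\sem{G}{}} : \Box \sem{G}{} \to \sem{G}{}$ --- which depends only on $G$ and not on $\textsf{c}_k$. Hence the equation $\sem{\ibox{\textsf{c}_i}}{} = \sem{\ibox{\textsf{c}_j}}{}$ propagates, by soundness of the $\beta$-rule, to $\sem{\textsf{c}_i}{} = \sem{\textsf{c}_j}{}$, and the injectivity hypothesis at $G$ delivers $\textsf{c}_i \equiv \textsf{c}_j$.

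The main obstacle is not conceptual but bookkeeping: one must verify that the standard interpretation of $\letbox{u}{\ibox{\textsf{c}}}{u}$ at the empty context indeed factors through $\sem{\ibox{\textsf{c}}}{}$ by a post-composition that is independent of $\textsf{c}$. This is a routine unfolding of the interpretation given in \cite{Hofmann1999, Kavvos2017b}, but it is the only step that genuinely uses comonadic structure, and it is the place where one has to be careful to ensure that the counit-based extraction does not accidentally introduce any dependence on the particular constant being boxed.
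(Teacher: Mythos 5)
Your proposal is correct and follows essentially the same route as the paper: invoke canonicity to reduce $M$ to some $\ibox{\textsf{c}_j}$, then strip the box by post-composing with the counit $\epsilon_{\sem{G}{}}$ (which is independent of the constant) to reduce to injectivity at $G$. The only cosmetic difference is that the paper justifies the counit step semantically, via the identity $\epsilon \circ f^{\ast} = f$ for the co-Kleisli extension interpreting $\ibox{\textsf{c}}$, whereas you route it through soundness of the syntactic $\beta$-rule for $\letbox{u}{\ibox{\textsf{c}}}{u}$ --- both hinge on exactly the same fact.
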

\begin{proof}
  The canonical forms of type $\Box G$ are precisely those of the
  form $\ibox{\textsf{c}_i}$. We therefore have that $\vdash M =
  \ibox{\textsf{c}_i} : \Box G$ for some $\textsf{c}_i$. But then,
  if $\sem{M}{} = \sem{\ibox{\textsf{c}_j}}{}$ we have---writing
  $(-)^\ast$ for the co-Kleisli extension and using
  soundness---that \[
    \sem{\textsf{c}_j}{\ast}
    = \sem{\ibox{\textsf{c}_j}}{}
    = \sem{M}{}
    = \sem{\ibox{\textsf{c}_i}}{}
    = \sem{\textsf{c}_i}{\ast}
  \] Post-composing with the counit (see \cite[Prop. 4, \S
  7.3.2]{Kavvos2017c}) we obtain $\sem{\textsf{c}_j}{} =
  \sem{\textsf{c}_i}{}$, and hence $\textsf{c}_i \equiv
  \textsf{c}_j$.
\end{proof}

By Corollary \ref{cor:coh}, we know that $\Box :
\mathbf{CSet}_\mathcal{L} \longrightarrow
\mathbf{CSet}_\mathcal{L}$ is a product-preserving comonad, and
thus a model of the Davies-Pfenning calculus. Moreover, we can
interpret $\textsf{Bool}_\nabla$ by $\nabla\mathbb{B}$. This is a
little harder to see: the reason is that the adjunction $U \dashv
\nabla$ is a \emph{reflection}, as $\nabla$ is full and faithful.
Thus, writing $\eta_A : A \rightarrow \nabla U A$ for the unit, we
have a universal property;\footnote{This is derived from the fact
$\eta_A$ is a universal arrow and $\nabla$ is full and faithful.}
it is that any $f : A \rightarrow \nabla X$ can be uniquely
factorised through $\eta_A$: \[
  \begin{tikzcd}
    A
      \arrow[r, "\eta_A"]
      \arrow[dr, "f", swap]
    & \nabla U A
      \arrow[d, "\hat{f}", dotted]
    \\
    & \nabla X
  \end{tikzcd}
\] Specialising this to $A = \Updelta\mathbb{B} \cong \mathbf{1} +
\mathbf{1}$, we have that any $f : \mathbf{1} + \mathbf{1}
\rightarrow \nabla X$ can be uniquely factorised through $\nabla U
\Updelta\mathbb{B} \cong \nabla\mathbb{B}$: \[
  \begin{tikzcd}
    \mathbf{1} + \mathbf{1} 
      \arrow[r, "\eta_A"]
      \arrow[dr, "f", swap]
    & \nabla \mathbb{B}
      \arrow[d, "\hat{f}", dotted]
    \\
    & \nabla X
  \end{tikzcd}
\] This is exactly the elimination rule for
$\textsf{Bool}_\nabla$: to define a function out of it one must
tell what it does on the two constants (the components of
$\mathbf{1} + \mathbf{1}$) by giving a term for each in a
codiscrete type $C$, which is then of the right form $\nabla X$,
as $\sem{C}{} \cong \blacklozenge \sem{C}{} = \nabla (U C)$.

It is straightforward to show that the above interpretation is
sound, and satisfies the assumptions of Lemma \ref{lem:dpadeq}. We
then have

\begin{thm}[Noninterference for Davies-Pfenning]
  Let $\ctxt{\cdot}{x : \textsf{Bool}_\nabla} \vdash M : \Box G$
  for any ground type $G$.  Then, for any $\vdash E, E' :
  \textsf{Bool}_\nabla$, we have \[
    \vdash M[E/x] = M[E'/x] : \Box G
  \]
\end{thm}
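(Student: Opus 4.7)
The plan is to mirror the strategy used for the Moggi noninterference theorem, but now leveraging the dual statement, namely Proposition \ref{prop:cocoint}(2), with $\textsf{Bool}_\nabla$ taking over the role that $\textsf{Bool}$ played before. First, I would interpret the judgement $\ctxt{\cdot}{x : \textsf{Bool}_\nabla} \vdash M : \Box G$ in the model, using the interpretation of $\textsf{Bool}_\nabla$ as $\nabla\mathbb{B}$ and the comonad $\Box$ for the modality. This produces a morphism $\sem{M}{} : \nabla\mathbb{B} \rightarrow \Box\sem{G}{}$ in $\mathbf{CSet}_\mathcal{L}$. The set $\mathbb{B}$ is manifestly non-empty, so Proposition \ref{prop:cocoint}(2) applies: $\sem{M}{}$ corresponds to a single point $\mathbf{1} \rightarrow U\sem{G}{}$.

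Unwinding the natural isomorphism in the proof of Proposition \ref{prop:cocoint}(2), this correspondence exhibits $\sem{M}{}$ as a factorisation through the terminal object, i.e. $\sem{M}{} = q \circ {!}$ for the unique map $! : \nabla\mathbb{B} \rightarrow \mathbf{1}$ and some $q : \mathbf{1} \rightarrow \Box\sem{G}{}$. Consequently, for \emph{any} two closed terms $\vdash E, E' : \textsf{Bool}_\nabla$ and their interpretations $\sem{E}{}, \sem{E'}{} : \mathbf{1} \rightarrow \nabla\mathbb{B}$, we compute $\sem{M}{} \circ \sem{E}{} = q \circ {!} \circ \sem{E}{} = q = q \circ {!} \circ \sem{E'}{} = \sem{M}{} \circ \sem{E'}{}$, because both composites collapse through the unique map $\mathbf{1} \rightarrow \mathbf{1}$. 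By soundness of the interpretation (applied to the substitution lemma), $\sem{\vdash M[E/x] : \Box G}{} = \sem{\vdash M[E'/x] : \Box G}{}$.

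The final step is to invoke adequacy. Canonicity for the Davies-Pfenning calculus (preserved by the addition of $\textsf{Bool}_\nabla$) guarantees that both closed terms $M[E/x]$ and $M[E'/x]$ are provably equal to canonical forms $\ibox{\textsf{c}_i}$ and $\ibox{\textsf{c}_j}$ for some constants of $G$. Because their semantic interpretations coincide, Lemma \ref{lem:dpadeq} instantiated at $G$ forces $\ibox{\textsf{c}_i} \equiv \ibox{\textsf{c}_j}$, and hence $\vdash M[E/x] = M[E'/x] : \Box G$ by transitivity of definitional equality.

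The main obstacle, or at least the step requiring the most care, is the semantic one: correctly exhibiting $\sem{M}{}$ as constant requires that the interpretation of the elimination rule for $\textsf{Bool}_\nabla$ genuinely realises the universal factorisation through the reflection unit $\mathbf{1} + \mathbf{1} \rightarrow \nabla\mathbb{B}$, so that substitution on the syntactic side is mirrored by pre-composition on the semantic side. Once this is in place, the cohesive input—Proposition \ref{prop:cocoint}(2)—does essentially all of the heavy lifting, and the rest is bookkeeping built from canonicity, soundness, and adequacy.
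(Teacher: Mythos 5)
Your proposal is correct and follows essentially the same route as the paper: interpret $M$ as a morphism $\nabla\mathbb{B} \rightarrow \Box\sem{G}{}$, invoke Proposition \ref{prop:cocoint}(2) with non-emptiness of $\mathbb{B}$ to conclude constancy, and then combine soundness with the Davies--Pfenning adequacy lemma to transfer the semantic equality back to the syntax. The only difference is that you unwind the hom-set isomorphism to exhibit the factorisation through $\mathbf{1}$ explicitly, which the paper leaves implicit in the phrase ``constant function.''
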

\begin{proof}
  As $\mathbb{B}$ is non-empty, we have by Proposition
  \ref{prop:cocoint}(2) that \[
    f \myeq \sem{\ctxt{\cdot}{x : \textsf{Bool}_\nabla} 
      \vdash M : \Box G}{} : 
        \nabla \mathbb{B} \rightarrow \Box \sem{G}{}
  \] is a constant function in $\mathbf{CSet}_\mathcal{L}$, so \[
      \sem{M[E/x]}{} 
    = 
      f  \circ \sem{E}{}
    = 
      f  \circ \sem{E'}{}
    =
      \sem{M[E'/x]}{} 
  \] for any $\vdash E, E' : \textsf{Bool}_\nabla$. By adequacy,
  it follows that $\vdash M[E/x] = M[E'/x] : \Box G$.
\end{proof}

\section{Levelled Cohesion}
  \label{sec:coh2}

Whilst useful for discussing the properties of monadic and
comonadic modalities, the results presented above are not
particularly interesting in terms of information flow. For each
classified set $X$, the set $\Box X$ is a version of it where
everything is visible, whereas $\blacklozenge X$ is a version of
it where nothing is. Might we be able to use the same ideas about
pre-cohesion to make something similar, yet far more expressive?
The key lies in relating pre-cohesion to the set of labels
$\mathcal{L}$, of which we have not yet made any use.

Given a subset $\pi \subseteq \mathcal{L}$ of labels, there is a
`partially forgetful' functor, \[
  U_\pi : \textbf{CSet}_\mathcal{L} \longrightarrow
  \textbf{CSet}_{\mathcal{L} - \pi}
\] which maps classified sets over $\mathcal{L}$ to classified
sets over $\mathcal{L} - \pi$ by forgetting $\rel{\ell}$ for $\ell
\in \pi$. As before, there are two ways to define a set classified
over $\mathcal{L}$ when given a set classified over $\mathcal{L} -
\pi$.

The first one is given by the functor \[
  \Updelta_\pi : \textbf{CSet}_{\mathcal{L} - \pi} \longrightarrow
  \textbf{CSet}_\mathcal{L}
\] $\Updelta_\pi X$ has the same carrier as $X$, and the same
relations $\rel{\ell}$ for $\ell\not\in\pi$. But if $\ell \in
\pi$, then $\rel{\ell}$ of $\Updelta_\pi X$ is the diagonal
relation. We let $\Updelta_\pi$ be the identity on morphisms: all
the relations not in $\pi$ are preserved by $\Updelta_\pi f$, and
the rest are diagonal so they are also trivially preserved. So for
each classified set $X$ over $\mathcal{L} - \pi$, we have a
classified set $\Updelta_\pi X$ over $\mathcal{L}$ which is
\emph{transparent at $\pi$}. It is easy to see that $\Updelta_\pi$
is full and faithful, and that
\begin{prop}
  $\Updelta_\pi \dashv U_\pi$
\end{prop}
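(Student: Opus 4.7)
The plan is to mimic the proof of $\Updelta \dashv U$ from earlier in the section: exhibit a natural bijection
\[
  \textbf{CSet}_\mathcal{L}(\Updelta_\pi X, Y)
    \;\cong\;
  \textbf{CSet}_{\mathcal{L} - \pi}(X, U_\pi Y)
\]
which is simply the identity on underlying carrier functions. The content of the proof is to verify that a function $f : \bars{X} \to \bars{Y}$ satisfies the preservation condition on the left if and only if it satisfies the preservation condition on the right.

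First I would unfold the two hom-sets. A morphism $f : \Updelta_\pi X \to Y$ in $\textbf{CSet}_\mathcal{L}$ is a function $\bars{X} \to \bars{Y}$ that preserves $\rel{\ell}$ for every $\ell \in \mathcal{L}$, where the relations on $\Updelta_\pi X$ agree with those on $X$ for $\ell \notin \pi$ and are the diagonal for $\ell \in \pi$. A morphism $g : X \to U_\pi Y$ in $\textbf{CSet}_{\mathcal{L} - \pi}$ is a function $\bars{X} \to \bars{Y}$ that preserves $\rel{\ell}$ for every $\ell \in \mathcal{L} - \pi$.

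The key observation is that preservation of the diagonal is automatic: any function $h$ satisfies $h(x) \rel{\ell} h(x)$ whenever $\rel{\ell}$ is reflexive on the target, which it is by definition of a classified set. Hence the conditions for $\ell \in \pi$ on the left are vacuous, and the remaining conditions (for $\ell \in \mathcal{L} - \pi$) coincide exactly with those on the right. Thus the identity map on carrier functions is a well-defined bijection in both directions.

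Finally, naturality is immediate because the bijection is literally the identity on underlying set-functions: pre- and post-composition with morphisms of classified sets acts the same way on both sides of the bijection. I do not expect any step to be a serious obstacle; the only thing that must be noted (and which is the real reason the adjunction works) is the use of reflexivity of the relations in $Y$ to discharge the $\ell \in \pi$ preservation conditions for free.
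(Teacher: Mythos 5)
Your proof is correct and follows essentially the same route as the paper, which treats the hom-set bijection $\textbf{CSet}_\mathcal{L}(\Updelta_\pi X, Y) \cong \textbf{CSet}_{\mathcal{L}-\pi}(X, U_\pi Y)$ as the identity on underlying functions, exactly as in the earlier proof of $\Updelta \dashv U$. Your explicit observation that reflexivity of the relations on $Y$ discharges the preservation conditions at $\ell \in \pi$ is precisely the point the paper relies on (and the reason it requires reflexivity in the definition of a classified set), so there is nothing to add.
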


The second one is given by the functor \[
  \nabla_\pi : \textbf{CSet}_{\mathcal{L} - \pi} \longrightarrow
  \textbf{CSet}_\mathcal{L}
\] which, this time, adds the complete relation as $\rel{\ell}$
for each $\ell\in\pi$, and is also the identity on morphisms.
Thus, for each classified set $X$ over $\mathcal{L} - \pi$, we
have a classified set $\nabla_\pi X$ over $\mathcal{L}$ which is
\emph{opaque at $\pi$}. It is also easy to see that it
$\nabla_\pi$ is full and faithful, and that \begin{prop} $U_\pi
\dashv \nabla_\pi$
\end{prop}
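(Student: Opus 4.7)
The plan is to establish the adjunction by exhibiting a natural bijection of hom-sets
\[
  \text{Hom}_{\textbf{CSet}_{\mathcal{L}-\pi}}(U_\pi X, Y)
    \cong
  \text{Hom}_{\textbf{CSet}_\mathcal{L}}(X, \nabla_\pi Y)
\]
in direct analogy with the earlier unlevelled adjunction $U \dashv \nabla$ and dually with the just-established $\Updelta_\pi \dashv U_\pi$. Since both $U_\pi$ and $\nabla_\pi$ act as the identity on underlying carriers and on morphisms, elements of either hom-set are simply set-theoretic functions $\bars{X} \rightarrow \bars{Y}$ subject to a relation-preservation condition, and the crux of the argument is to observe that these two conditions coincide.

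First I would unpack what a morphism on each side amounts to. A morphism $f : U_\pi X \rightarrow Y$ is a function $\bars{X} \rightarrow \bars{Y}$ that preserves $\rel{\ell}$ for each $\ell \in \mathcal{L} - \pi$. A morphism $g : X \rightarrow \nabla_\pi Y$ is a function $\bars{X} \rightarrow \bars{Y}$ required to preserve $\rel{\ell}$ at every $\ell \in \mathcal{L}$; however, for $\ell \in \pi$ the relation on $\nabla_\pi Y$ is the complete relation $\bars{Y} \times \bars{Y}$, so preservation at those levels is vacuous. Hence both conditions reduce to the same requirement: preservation of $\rel{\ell}$ for each $\ell \in \mathcal{L} - \pi$.

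The proposed bijection is therefore the identity on underlying functions, and naturality in both $X$ and $Y$ is immediate, since composition in each category is set-theoretic function composition and both $U_\pi$ and $\nabla_\pi$ act identically on morphisms. I do not anticipate any real obstacle: the unlevelled $U \dashv \nabla$ is recovered by taking $\pi = \mathcal{L}$, and the present statement is the obvious generalisation of that argument, the only genuine content being the vacuity of preservation at levels $\ell \in \pi$ against the complete relation.
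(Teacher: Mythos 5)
Your proposal is correct and matches the paper's (implicit) argument exactly: the paper gives no explicit proof for this levelled proposition, but it later remarks that the natural isomorphisms exhibiting $\Updelta_\pi \dashv U_\pi \dashv \nabla_\pi$ are identities on hom-sets, which is precisely your observation that preservation at levels $\ell \in \pi$ is vacuous against the complete relation while the conditions at $\ell \in \mathcal{L} - \pi$ coincide on both sides. The only cosmetic caveat is that recovering the unlevelled case at $\pi = \mathcal{L}$ identifies $\mathbf{CSet}_{\emptyset}$ with $\mathbf{Set}$ only up to equivalence, but this does not affect the argument.
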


It remains to consider connected components. Suppose we have a
morphism $f : X \rightarrow \Updelta_\pi Y$. For each $\ell \in
\pi$, the relation $\Updelta_\pi Y$ is reflexivity.  Hence, for
$\ell \in \pi$, \[
  x \rel{\ell} x'\ \text{ (in $X$)}\
    \Longrightarrow
  f(x) \rel{\ell} f(x')\ \text{ (in $\Updelta_\pi Y$)}\
    \Longrightarrow
  f(x) = f(x')
\] Thus, if $x$ and $x'$ are indistinguishable at level $\ell \in
\pi$, then $f$ collapses them to a single element. As before, we
would also like to phrase this in terms of quotients. Let
$\rel{\pi}^\star$ to be the reflexive, symmetric, transitive
closure of $\bigcup_{\ell\in\pi} \rel{\ell}$. The construction of
the quotient set $\bars{X}/R_\pi^\star$ extends to a functor \[
  C_\pi : \textbf{CSet}_\mathcal{L} \longrightarrow
  \textbf{CSet}_{\mathcal{L}-\pi}
\] by letting $C_\pi(X)$ be the classified set with carrier
$\bars{X} / R_\pi^\star$, and, for $\ell \in \mathcal{L} - \pi$,
\[ 
  [b] \rel{\ell} [b']\ \text{ in $C_\pi(X)$ }
    \quad\Longleftrightarrow\quad
  \exists x \in [b].\ \exists y \in [b'].\ x \rel{\ell} y\ \text{
  in $X$ }
\] We let $C_\pi f \myeq f_\pi^\star : \bars{X} / R_\pi^\star
\rightarrow \bars{Y}/R_\pi^\star$, where \[
  f_\pi^\star([x]) \myeq [f(x)]
\] $f_\pi^\star$ is well-defined, as---by the same argument as in
\S\ref{sec:coh}---$f$ preserves $R_\pi^\star$. If $[b] \rel{\ell}
[b']$, there exist $x$ and $y$ such that $x \mathrel{R_\pi^\star}
b$, $y \mathrel{R_\pi^\star} b'$, and $x \rel{\ell} y$. But $f$
preserves all of these relations, so $[f(b)] \rel{\ell} [f(b')]$.
So $f_\pi^\star$ is actually a morphism $C_\pi(X) \rightarrow
C_\pi(Y)$. A similar argument to the one in \S\ref{sec:coh} takes
$f : X \rightarrow \Updelta_\pi Y$ to a unique $\hat{f} : C_\pi X
\rightarrow Y$, and hence


\begin{prop}
  $C_\pi : \mathbf{CSet}_\mathcal{L} \rightarrow
  \mathbf{CSet}_{\mathcal{L}-\pi}$ is left adjoint to
  $\Updelta_\pi : \mathbf{CSet}_{\mathcal{L}-\pi} \rightarrow
  \mathbf{CSet}_\mathcal{L}$.
\end{prop}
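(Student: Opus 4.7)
The plan is to establish a natural bijection between morphisms $f : X \to \Updelta_\pi Y$ in $\mathbf{CSet}_\mathcal{L}$ and morphisms $g : C_\pi X \to Y$ in $\mathbf{CSet}_{\mathcal{L}-\pi}$. The forward direction will send $f$ to the map $\hat{f}$ with $\hat{f}([x]) \myeq f(x)$, which the preceding paragraph has essentially already produced. The inverse direction will send $g$ to $\check{g} \myeq (\Updelta_\pi g) \circ \eta_X$, where $\eta_X : X \to \Updelta_\pi C_\pi X$ is the quotient map $x \mapsto [x]$ read back as a morphism in $\mathbf{CSet}_\mathcal{L}$.

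First I would finish checking that $\hat{f}$ is a morphism in $\mathbf{CSet}_{\mathcal{L}-\pi}$. Its well-definedness as a function on $\bars{X}/R_\pi^\star$ uses exactly the argument recalled in the excerpt: since every $\rel{\ell}$ on $\Updelta_\pi Y$ with $\ell \in \pi$ is the diagonal, $f$ identifies any pair related by some $\rel{\ell}$ with $\ell \in \pi$, and hence by $R_\pi^\star$. For the remaining relations $\ell \in \mathcal{L} - \pi$, suppose $[b] \rel{\ell} [b']$ in $C_\pi X$; by construction there are witnesses $x \in [b]$ and $y \in [b']$ with $x \rel{\ell} y$ in $X$, so $f(x) \rel{\ell} f(y)$ in $\Updelta_\pi Y$, which for such $\ell$ is the same as the relation on $Y$. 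Since $f$ is constant on $R_\pi^\star$-classes we have $f(b) = f(x)$ and $f(b') = f(y)$, so $\hat{f}([b]) \rel{\ell} \hat{f}([b'])$ in $Y$.

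Next I would check that $\eta_X : X \to \Updelta_\pi C_\pi X$ is a morphism in $\mathbf{CSet}_\mathcal{L}$. For $\ell \in \pi$ the target relation is the diagonal, and any $x \rel{\ell} x'$ in $X$ gives $x \mathrel{R_\pi^\star} x'$, hence $[x] = [x']$. For $\ell \in \mathcal{L} - \pi$ the pair $(x,x')$ itself witnesses $[x] \rel{\ell} [x']$ in $C_\pi X$. Mutual inversion of the two assignments is then immediate by evaluation: $\widehat{\check{g}}([x]) = g([x])$ and $\check{\hat{f}}(x) = f(x)$. Naturality in $X$ and $Y$ reduces, through the definition $C_\pi h = h_\pi^\star$ fixed in the excerpt, to the identity $[h(x)] = h_\pi^\star([x])$, which is what $\eta$ being a natural transformation amounts to.

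The main obstacle, and what makes this more than a carbon copy of the argument for $C \dashv \Updelta$ in \S\ref{sec:coh}, is the bookkeeping for the residual relations $\rel{\ell}$ with $\ell \in \mathcal{L} - \pi$: one must check that the existential definition of $\rel{\ell}$ on $C_\pi X$ is exactly fine-grained enough that (a) $\eta_X$ respects it, (b) $\hat{f}$ respects it, and (c) $C_\pi$ is functorial. All three hinge on the fact that $f$ flattens $R_\pi^\star$-classes, so that the choice of representative in the existential witness is immaterial; once this point is isolated the rest of the proof is essentially the $\pi = \mathcal{L}$ argument executed one level of indexing lower.
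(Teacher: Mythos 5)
Your proof is correct and follows essentially the same route as the paper, which simply invokes ``a similar argument to the one in \S 3'': you exhibit the unit $\eta_X : X \rightarrow \Updelta_\pi C_\pi X$ and the unique factorisation $\hat{f}$ of each $f : X \rightarrow \Updelta_\pi Y$ through it. The only difference is that you spell out the detail the paper elides, namely that $\eta_X$ and $\hat{f}$ also preserve the residual relations $\rel{\ell}$ for $\ell \in \mathcal{L} - \pi$, which is exactly the new bookkeeping relative to the unlevelled case.
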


\noindent It is easy to see that this preserves products, and that
the nullstellensatz holds as before (the natural isomorphisms
showing $\Updelta_\pi \dashv U_\pi \dashv \nabla_\pi$ are
identities on the hom-sets). In total:

\begin{thm}
  \label{thm:cohlev}
  $\mathbf{CSet}_\mathcal{L}$ is pre-cohesive relative to
  $\mathbf{CSet}_{\mathcal{L}-\pi}$.
\end{thm}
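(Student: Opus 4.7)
The plan is to verify the four clauses of pre-cohesion, most of which are already in place. The adjoint string $C_\pi \dashv \Updelta_\pi \dashv U_\pi \dashv \nabla_\pi$ has been established by the three preceding propositions, and $\Updelta_\pi$ and $\nabla_\pi$ are evidently full and faithful: both act as the identity on morphisms, so faithfulness is automatic, while fullness reduces to the observation that a function on carriers preserving $\rel{\ell}$ for every $\ell \in \mathcal{L}$ in particular preserves those indexed by $\ell \notin \pi$. Extensivity of both $\mathbf{CSet}_\mathcal{L}$ and $\mathbf{CSet}_{\mathcal{L}-\pi}$ follows because coproducts are disjoint unions on carriers with no cross-component relations added, and pullbacks are set-theoretic pullbacks equipped with componentwise relations; the usual proof that $\mathbf{Set}$ is extensive then adapts directly.

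The main computational step is that $C_\pi$ preserves finite products. Preservation of the terminal is immediate, since the equivalence closure of the relations on a singleton carrier is still a singleton. For binary products, the canonical comparison map
\[
  \varphi : C_\pi(X \times Y) \longrightarrow C_\pi X \times C_\pi Y,
  \qquad
  [(x, y)] \longmapsto ([x], [y])
\]
is a bijection on carriers: surjectivity is obvious, and injectivity amounts to the claim that $x \mathrel{R_\pi^\star} x'$ in $X$ and $y \mathrel{R_\pi^\star} y'$ in $Y$ together imply $(x, y) \mathrel{R_\pi^\star} (x', y')$ in $X \times Y$. Given zigzag witnesses $x = x_0, \ldots, x_n = x'$ and $y = y_0, \ldots, y_m = y'$, one concatenates them by first traversing the $x$-zigzag while holding the second coordinate fixed at $y_0$, then traversing the $y$-zigzag while holding the first coordinate fixed at $x_n$; each step is legal precisely because each $\rel{\ell}$ is reflexive in the pinned coordinate. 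This appeal to reflexivity is the main subtlety of the proof, and is exactly the property that the definition of classified set builds in. Compatibility of $\varphi$ with the residual relations $\rel{\ell}$ for $\ell \notin \pi$ is then immediate from the definition of the $C_\pi$-relations and the componentwise nature of product relations.

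Finally, for the Nullstellensatz, observe that the unit $\eta_X : \Updelta_\pi U_\pi X \rightarrow X$ is the identity on $\bars{X}$: the object $\Updelta_\pi U_\pi X$ shares its carrier with $X$ and differs only in replacing $\rel{\ell}$ for $\ell \in \pi$ by the diagonal. Since morphisms of classified sets are determined by their underlying functions, the injectivity of this identity function immediately yields that $\eta_X$ is monic in $\mathbf{CSet}_\mathcal{L}$. Assembling these observations with the preceding propositions verifies all four clauses of pre-cohesion, completing the proof.
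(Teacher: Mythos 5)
Your proof is correct and follows essentially the same route as the paper: establish the adjoint string $C_\pi \dashv \Updelta_\pi \dashv U_\pi \dashv \nabla_\pi$ with $\Updelta_\pi, \nabla_\pi$ full and faithful, check that $C_\pi$ preserves finite products, and observe that the Nullstellensatz is trivial because the relevant comparison map is the identity on carriers. The paper leaves the product-preservation step as ``easy to see''; your zigzag-concatenation argument, pinning one coordinate via reflexivity, is precisely the detail it omits (and is the reason the paper insists on reflexive relations, as noted in its footnote).
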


It is worth reiterating some of the things that we automatically
learn from this fact:
\begin{cor} For a given $\pi \subseteq \mathcal{L}$,
  \begin{enumerate}
    \item
      $U_\pi : \mathbf{CSet}_\mathcal{L} \longrightarrow
      \mathbf{CSet}_{\mathcal{L} - \pi}$ preserves (co)limits.
    \item
      $\Updelta_\pi : \mathbf{CSet}_{\mathcal{L} - \pi}
      \longrightarrow \mathbf{CSet}_\mathcal{L}$ preserves
      (co)limits.
    \item
      $\nabla_\pi : \mathbf{CSet}_{\mathcal{L} - \pi}
      \longrightarrow \mathbf{CSet}_\mathcal{L}$ preserves limits.
    \item
      $\Box_\pi \myeq \Updelta_\pi U_\pi :
      \textbf{CSet}_\mathcal{L}\longrightarrow
      \textbf{CSet}_\mathcal{L}$ is an idempotent comonad that
      preserves finite (co)limits.
    \item
      $\blacklozenge_\pi \myeq \nabla_\pi U_\pi :
      \textbf{CSet}_\mathcal{L} \longrightarrow
      \textbf{CSet}_\mathcal{L}$ is an idempotent monad that
      preserves finite limits.
    \item
      $\int_\pi \myeq \Updelta_\pi C_\pi :
      \textbf{CSet}_\mathcal{L}\longrightarrow
      \textbf{CSet}_\mathcal{L}$ is an idempotent monad.
      preserves products and colimits.
    \item
      $\int_\pi \dashv \Box_\pi \dashv \blacklozenge_\pi$
  \end{enumerate}
\end{cor}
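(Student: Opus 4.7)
The plan is to recognise that this corollary is, essentially, the Fundamental Corollary of Pre-Cohesion (Corollary \ref{cor:coh}) re-stated for the levelled setting. All the real work has already been done: Theorem \ref{thm:cohlev} established that $\mathbf{CSet}_\mathcal{L}$ is pre-cohesive relative to $\mathbf{CSet}_{\mathcal{L}-\pi}$, and along the way we observed that both $\Updelta_\pi$ and $\nabla_\pi$ are full and faithful. So nothing genuinely new needs to be proved; I just need to point at the corresponding items of Corollary \ref{cor:coh}.

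Concretely, I would proceed item-by-item. Items (1)--(3) are immediate from the string of adjunctions $C_\pi \dashv \Updelta_\pi \dashv U_\pi \dashv \nabla_\pi$: $U_\pi$ is both a left and a right adjoint (hence preserves all (co)limits), $\Updelta_\pi$ is both a left and a right adjoint, while $\nabla_\pi$ is only a right adjoint and hence preserves only limits. Item (7), the chain $\int_\pi \dashv \Box_\pi \dashv \blacklozenge_\pi$, drops out of the same string by composing adjoints, exactly as in the proof of Corollary \ref{cor:coh}(10).

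For items (4)--(6), I would invoke two facts used in proving the original Corollary \ref{cor:coh}. First, because $\Updelta_\pi$ and $\nabla_\pi$ are full and faithful, the adjunctions $\Updelta_\pi \dashv U_\pi$ and $U_\pi \dashv \nabla_\pi$ (and $C_\pi \dashv \Updelta_\pi$) are (co)reflections and thus generate \emph{idempotent} (co)monads in the standard manner of \cite[Vol. II, \S 4.2.4]{Borceux1994}; this makes $\Box_\pi$, $\blacklozenge_\pi$, and $\int_\pi$ idempotent. Second, since (1)--(3) give us (co)limit preservation of the constituent functors, the preservation claims follow by composition: $\Box_\pi = \Updelta_\pi U_\pi$ preserves all finite (co)limits since both factors do; $\blacklozenge_\pi = \nabla_\pi U_\pi$ preserves finite limits since both factors preserve limits; and $\int_\pi = \Updelta_\pi C_\pi$ preserves colimits (both factors are left adjoints) and preserves products (since $C_\pi$ does by pre-cohesion, and $\Updelta_\pi$ preserves all limits).

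There is no real obstacle here; the only thing to be careful about is that Corollary \ref{cor:coh} was stated for an arbitrary pre-cohesive situation between extensive categories, so I simply need to remind the reader that $\mathbf{CSet}_{\mathcal{L}-\pi}$ is itself extensive (bicartesian closed, in fact, by the same reasoning as in \S\ref{sec:defns} applied to the smaller label set $\mathcal{L}-\pi$), whereupon all hypotheses of Corollary \ref{cor:coh} are met and the seven claims transfer verbatim.
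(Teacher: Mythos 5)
Your proposal is correct and matches the paper's own treatment: the corollary is stated there without proof precisely because it is the Fundamental Corollary of Pre-Cohesion (Corollary \ref{cor:coh}) instantiated at the pre-cohesion of $\mathbf{CSet}_\mathcal{L}$ over $\mathbf{CSet}_{\mathcal{L}-\pi}$ established in Theorem \ref{thm:cohlev}. Your item-by-item justification via the adjoint string $C_\pi \dashv \Updelta_\pi \dashv U_\pi \dashv \nabla_\pi$ and the full-and-faithfulness of $\Updelta_\pi$ and $\nabla_\pi$ is exactly the intended argument, and the remark about extensivity of $\mathbf{CSet}_{\mathcal{L}-\pi}$ is a reasonable (if implicit in the paper) bit of due diligence.
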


\noindent As before, we can use this structure to prove abstract
noninterference theorems for $\mathbf{CSet}_\mathcal{L}$, which
will now be more expressive than before. Before we take on that
task, however, we want to discuss two aspects of
$\mathbf{CSet}_\mathcal{L}$. First, we want to identify the
reflective and (co)reflective subcategories identified by
$\Box_\pi$ and $\blacklozenge_\pi$. And, second, we want to prove
once and or all that $\blacklozenge_\pi$ is a strong monad.

\subsection{(Co)reflective subcategories}
  \label{sec:(co)refl}

In cohesive settings we have a string of adjoints $\Updelta \dashv
U \dashv \nabla$, where $\Updelta$ and $\nabla$ are full and
faithful. This is to say that both $\Updelta$ and $\nabla$ are a
kind of `inclusion,' and that---up to equivalence---they exhibit
certain subcategories of $\mathcal{E}$. In particular, $\nabla$
exhibits a \emph{reflective subcategory} of objects $Y$ such that
$\blacklozenge Y \cong Y$, and $\Updelta$ exhibits a
\emph{coreflective subcategory} of $\mathcal{E}$, viz. the full
subcategory consisting of objects $X$ such that $\Box X \cong X$.
Moreover, such adjunctions generate \emph{idempotent (co)monads},
in that their (co)multiplications $\Box \Rightarrow \Box^2$ and
$\blacklozenge^2 \Rightarrow \blacklozenge$ are isomorphisms
\citep[Prop. 4.3.2]{Borceux1994}.

It is illuminating to look at each of these cases for the
pre-cohesion of $\textbf{CSet}_\mathcal{L}$ relative to
$\textbf{CSet}_{\mathcal{L} - \pi}$.

\paragraph{Reflection}
  \label{sec:refl}

If $X$ is a classified set such that $X \cong \blacklozenge_\pi
X$, then $\rel{\ell}$ is the complete relation at all $\ell \in
\pi$. This leads us to the following definition:

\begin{defn}[Protection]
  Let $\pi \subseteq \mathcal{L}$ be a set of labels. A classified
  set $X$ is \emph{protected at $\pi$} just if $R_\ell = \bars{X}
  \times \bars{X}$ for all $\ell \in \pi$.
\end{defn}

\noindent Thus the monad $\blacklozenge_\pi$ induces a reflective
subcategory $\textbf{CSet}_{\mathcal{L}, \text{p}, \pi}$ of
\emph{classified sets protected at $\pi$}. In other words, there
is a functor $\mathcal{R}_\pi : \textbf{CSet}_\mathcal{L}
\longrightarrow \textbf{CSet}_{\mathcal{L}, \text{p}, \pi}$ that
is left adjoint to the inclusion: \[
  \begin{tikzcd}
    \textbf{CSet}_{\mathcal{L}, \text{p}, \pi}
      \arrow[r, bend right, "i"{below}, ""{above, name=B}, hook]
    & \textbf{CSet}_\mathcal{L}
      \arrow[l, bend right, "\mathcal{R}_\pi"{above}, ""{below, name=A}]
        \arrow[from=A, to=B, symbol=\dashv]
  \end{tikzcd}
\] $\mathcal{R}_\pi$ replaces $\rel{\ell}$ with the complete
relation for every $\ell \in \pi$. In addition, we have that
$\blacklozenge_\pi = i \circ \mathcal{R}_\pi$. Finally, it is easy
to see that $\blacklozenge_\pi$ is a \emph{strictly idempotent}
monad: the multiplication is the identity, as $X =
\blacklozenge_\pi X$ whenever $X$ is protected at $\pi$.

Interestingly,
\begin{prop}
  $\mathcal{R}_\pi : \textbf{CSet}_\mathcal{L} \longrightarrow
  \textbf{CSet}_{\mathcal{L}, \text{p}, \pi}$ preserves finite
  products.
\end{prop}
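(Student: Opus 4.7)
The plan is to unfold the explicit description of $\mathcal{R}_\pi$: on objects it keeps the carrier and every $\rel{\ell}$ for $\ell \notin \pi$, and replaces $\rel{\ell}$ by the complete relation $\bars{X} \times \bars{X}$ for each $\ell \in \pi$. Preservation of finite products then reduces to two elementary checks. First, the terminal object $\mathbf{1}$ is already protected at $\pi$, since its sole element is related to itself at every level; hence $\mathcal{R}_\pi(\mathbf{1}) = \mathbf{1}$. Second, for a binary product, both $\mathcal{R}_\pi(A \times B)$ and $\mathcal{R}_\pi(A) \times \mathcal{R}_\pi(B)$ share the carrier $\bars{A} \times \bars{B}$, and I would verify that their relations agree level by level: for $\ell \in \pi$, both relations are the complete relation on $\bars{A} \times \bars{B}$ (the componentwise conjunction of two complete relations is complete), and for $\ell \notin \pi$, neither side alters $\rel{\ell}$, so both reduce to the usual componentwise relation inherited from $A$ and $B$. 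The canonical comparison morphism is therefore literally the identity on carriers.

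As a cross-check I would give the abstract argument. The preceding corollary ensures that $\blacklozenge_\pi = i \circ \mathcal{R}_\pi$ preserves finite limits, and $i$ is a full and faithful right adjoint; as a right adjoint $i$ preserves products, and as a full and faithful functor it reflects isomorphisms. So if the comparison $\mathcal{R}_\pi(A \times B) \to \mathcal{R}_\pi(A) \times \mathcal{R}_\pi(B)$ becomes invertible after applying $i$ — which it does, since $i$ sends it to the comparison $\blacklozenge_\pi(A \times B) \to \blacklozenge_\pi(A) \times \blacklozenge_\pi(B)$ — then it was already invertible in $\textbf{CSet}_{\mathcal{L},\text{p},\pi}$. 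No substantial obstacle arises; the only point requiring mild care is the implicit claim that products in $\textbf{CSet}_{\mathcal{L},\text{p},\pi}$ are computed as in $\textbf{CSet}_\mathcal{L}$, which holds because protection at $\pi$ is closed under products — by exactly the same observation used in the direct proof.
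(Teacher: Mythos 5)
Your proposal is correct, but it reaches the conclusion by a different route than the paper. The paper's entire argument is the observation that $\textbf{CSet}_{\mathcal{L},\text{p},\pi}$ is equivalent to $\textbf{CSet}_{\mathcal{L}-\pi}$, and that under this equivalence $\mathcal{R}_\pi$ is just $U_\pi$ (and $i$ is $\nabla_\pi$); since $U_\pi$ sits in the string $\Updelta_\pi \dashv U_\pi \dashv \nabla_\pi$ it is a right adjoint and preserves all limits, in particular finite products. Your first argument replaces this with a direct element-level verification that $\mathcal{R}_\pi(A\times B)$ and $\mathcal{R}_\pi A \times \mathcal{R}_\pi B$ have literally the same carrier and the same relation at every level, together with the (necessary and correctly flagged) remark that the full subcategory of protected objects is closed under ambient products, so that products there are computed as in $\textbf{CSet}_\mathcal{L}$. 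This is more elementary and makes the canonical comparison visibly an identity, which is slightly stronger than the paper's statement. Your second, abstract argument --- factor $\blacklozenge_\pi = i\circ\mathcal{R}_\pi$, use that $\blacklozenge_\pi$ preserves finite limits and that the fully faithful right adjoint $i$ preserves products and reflects isomorphisms --- is also sound and non-circular (the finite-limit preservation of $\blacklozenge_\pi$ comes from the pre-cohesion theorem, proved earlier), and it has the merit of working for any reflective subcategory whose associated idempotent monad preserves finite products, not just this concrete one. The one small point you elide in the abstract version is that $i$ applied to the comparison must be composed with the canonical isomorphism $i(\mathcal{R}_\pi A \times \mathcal{R}_\pi B) \cong \blacklozenge_\pi A \times \blacklozenge_\pi B$ before it literally becomes the comparison for $\blacklozenge_\pi$; this is harmless. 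In short: three proofs, all valid --- the paper's via the equivalence with $\textbf{CSet}_{\mathcal{L}-\pi}$, yours via direct computation, and yours via reflection of isomorphisms.
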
 

\noindent This is because $\textbf{CSet}_{\mathcal{L}, \text{p},
\pi} \cong \textbf{CSet}_{\mathcal{L} - \pi}$, and up to that
equivalence $\mathcal{R}_\pi$ is just $U_\pi$, and $i$ is just
$\nabla_\pi$. By a theorem of category theory---see e.g. \citep[\S
A4.3.1]{Johnstone2003}---if a reflector preserves finite products,
as $\mathcal{R}_\pi$ does, then the reflective subcategory is an
\emph{exponential ideal}. That is,

\begin{cor}
  If $A$ is a classified set, and $B$ is protected at $\pi$, then
  $B^A$ is protected at $\pi$.
\end{cor}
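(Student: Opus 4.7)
The plan is to obtain this as an immediate consequence of the preceding proposition together with the theorem of Johnstone cited in the paragraph above: a reflective subcategory whose reflector preserves finite products is an \emph{exponential ideal}, meaning that whenever $B$ lies in the subcategory and $A$ is any object of the ambient category, the exponential $B^A$ (computed in the ambient category) lies in the subcategory. Since the preceding proposition shows that $\mathcal{R}_\pi$ preserves finite products, this theorem applies directly to the reflection $\mathcal{R}_\pi : \textbf{CSet}_\mathcal{L} \to \textbf{CSet}_{\mathcal{L},\text{p},\pi}$, giving the corollary.

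For a concrete sanity check that requires no appeal to outside machinery, one can also verify the statement directly from the definition of the exponential in $\textbf{CSet}_\mathcal{L}$. Fix $\ell \in \pi$. Since $B$ is protected at $\pi$, the relation $\rel{\ell}$ on $B$ is the complete relation $\bars{B} \times \bars{B}$. By the definition of $B^A$, two morphisms $f, g : A \rightarrow B$ satisfy $f \rel{\ell} g$ iff $f(a) \rel{\ell} g(a')$ for all $a \rel{\ell} a'$ in $A$. But the conclusion $f(a) \rel{\ell} g(a')$ is automatic, since $\rel{\ell}$ on $B$ is complete. Hence every pair of morphisms is related at $\ell$, so $\rel{\ell}$ on $B^A$ is the complete relation, and $B^A$ is protected at $\pi$.

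There is no real obstacle here: all the substantive work has already been done in the previous proposition (that $\mathcal{R}_\pi$ preserves finite products, via the equivalence $\textbf{CSet}_{\mathcal{L},\text{p},\pi} \cong \textbf{CSet}_{\mathcal{L}-\pi}$ identifying $\mathcal{R}_\pi$ with $U_\pi$) and in the cited general theorem. The only mild point to keep in mind is that the notion of exponential ideal requires the exponential of the ambient category to restrict to the subcategory, which is exactly what the direct verification above confirms; in particular, exponentials in $\textbf{CSet}_{\mathcal{L},\text{p},\pi}$ agree with those in $\textbf{CSet}_\mathcal{L}$.
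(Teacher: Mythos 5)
Your proposal is correct and takes the same route as the paper, which likewise derives the corollary immediately from the product-preservation of $\mathcal{R}_\pi$ together with the cited exponential-ideal theorem of Johnstone. The additional direct verification that $\rel{\ell}$ on $B^A$ is complete for $\ell \in \pi$ is sound and a worthwhile sanity check, though the paper does not include it.
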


\noindent This is what underlies the definition of
\emph{codiscrete types} we gave in \S\ref{sec:dp},
and which was introduced as a unmotivated definition in
\cite{Abadi1999}.

\paragraph{Coreflection}

Dually, the second subcategory consists of classified sets $X$
over $\mathcal{L}$ such that $X \cong \Box_\pi X$. Each
$\rel{\ell}$ in $X$ for $\ell\in\pi$ is the diagonal relation. The
corresponding definition is:

\begin{defn}[Visibility]
  Let $\pi \subseteq \mathcal{L}$ be a set of labels. A classified
  set $X$ is \emph{visible at $\pi$} just if $R_\ell = \setcomp{
  (x, x) }{ x \in \bars{X} }$ for every $\ell \in \pi$.
\end{defn}

This full coreflective subcategory induced $\Box_\pi$ is notated
$\textbf{CSet}_{\mathcal{L}, \text{v}, \pi}$, and its objects are
those \emph{classified sets that are visible at $\pi$}. There is a
functor \[
  \mathcal{D}_\pi : \textbf{CSet}_\mathcal{L} \longrightarrow
  \textbf{CSet}_{\mathcal{L}, \text{v}, \pi}
\] which returns the classified set $\mathcal{D}_\pi X$ with the
same carrier, but if $\ell \in \pi$ then $R_\ell$ is the diagonal
relation. It is also the identity on morphisms. This functor is
right adjoint to the inclusion: \[
  \begin{tikzcd}
    \textbf{CSet}_\mathcal{L}
      \arrow[r, bend right, "D_\pi"{below}, ""{above, name=B}]
    & \textbf{CSet}_{\mathcal{L}, \text{v}, \pi}
      \arrow[l, bend right, "i"{above}, ""{below, name=A}, hook]
        \arrow[from=A, to=B, symbol=\dashv]
  \end{tikzcd}
\] and of course $\Box_\pi = i \circ \mathcal{D}_\pi$. In a manner
similar to previous one, $\Box_\pi$ is also a \emph{strictly
idempotent comonad}, as $X = \Box_\pi X$ whenever $X$ is visible
at $\pi$.

\subsection{Redaction is a strong monad}
  \label{sec:strong}

We record here a fact that we have already used and will use
again, namely that

\begin{prop} \label{prop:strong}
  $\blacklozenge_\pi$ is a \emph{strong monad}: that is, there
  exists a natural transformation \[
    t_{A, B} : A \times \blacklozenge_\pi B \rightarrow
    \blacklozenge_\pi (A \times B)
  \] such that the following diagrams commute: \[
    \begin{tikzcd}
      \mathbf{1} \times \blacklozenge_\pi A
        \arrow[r, "{t_{\mathbf{1}, A}}"]
        \arrow[d, "\pi_2", swap]
      & \blacklozenge_\pi(\mathbf{1} \times A)
        \arrow[dl, "\blacklozenge_\pi \pi_2"] \\
      \blacklozenge_\pi A
      & 
    \end{tikzcd}
    \begin{tikzcd}
      (A \times B) \times \blacklozenge_\pi C
        \arrow[r, "{t_{A \times B, C}}"]
        \arrow[d, "\cong", swap]
      & \blacklozenge_\pi\left((A \times B) \times C\right)
        \arrow[dr, "\cong"]
      & \\
      A \times (B \times \blacklozenge_\pi C)
        \arrow[r, "id_A \times t_{B, C}", swap]
      & A \times \blacklozenge_\pi(B \times C)
        \arrow[r, "{t_{A, B \times C}}", swap]
      & \blacklozenge_\pi\left(A \times (B \times C)\right)
    \end{tikzcd}
  \] \[
    \begin{tikzcd}
      A \times B
        \arrow[dr, "\eta_{A \times B}"]
        \arrow[d, "id_A \times \eta_B", swap]
      & & \\
      A \times \blacklozenge_\pi B
        \arrow[r, "{t_{A, B}}"]
      & \blacklozenge_\pi(A \times B)
      & \\
      A \times \blacklozenge^2_\pi B
        \arrow[u, "id_A \times \mu_B"]
        \arrow[r, "{t_{A, \blacklozenge_\pi B}}", swap]
      & \blacklozenge_\pi\left(A \times \blacklozenge_\pi B\right)
        \arrow[r, "{\blacklozenge_\pi t_{A, B}}", swap]
      & \blacklozenge^2_\pi (A \times B)
        \arrow[ul, "\mu_{A \times B}", swap]
    \end{tikzcd}
  \]
\end{prop}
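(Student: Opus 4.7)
The plan is to define the strength explicitly and then verify each of the required diagrams by exploiting the fact that, on underlying carriers, every morphism in sight is the identity or a standard set-theoretic projection/associator. This capitalises on the strict idempotence of $\blacklozenge_\pi$ established in \S\ref{sec:refl}.

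First, observe that $\blacklozenge_\pi = \nabla_\pi U_\pi$ leaves carriers unchanged: both $U_\pi$ and $\nabla_\pi$ act as the identity on underlying sets and only modify the relations indexed by $\ell \in \pi$. Hence $\bars{A \times \blacklozenge_\pi B} = \bars{A} \times \bars{B} = \bars{\blacklozenge_\pi(A \times B)}$, and I define $t_{A, B}$ to be the identity function on this common carrier, $(a, b) \mapsto (a, b)$.

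Next, I check that $t_{A, B}$ is actually a morphism of classified sets. Suppose $(a, b) \rel{\ell} (a', b')$ in $A \times \blacklozenge_\pi B$. If $\ell \in \pi$, then in $\blacklozenge_\pi(A \times B)$ the relation at $\ell$ is the complete relation on $\bars{A} \times \bars{B}$, so the conclusion is immediate. If $\ell \notin \pi$, then neither $\blacklozenge_\pi B$ nor $\blacklozenge_\pi(A \times B)$ has altered the relation at $\ell$, so $a \rel{\ell} a'$ in $A$ and $b \rel{\ell} b'$ in $B$ together imply $(a, b) \rel{\ell} (a', b')$ in $A \times B$, and hence in $\blacklozenge_\pi(A \times B)$. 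Naturality of $t_{A, B}$ in both arguments is automatic, since each component is set-theoretically the identity.

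It remains to verify the three coherence diagrams. Here the main simplification is strict idempotence: the multiplication $\mu_B : \blacklozenge_\pi^2 B \to \blacklozenge_\pi B$ is literally the identity morphism, the unit $\eta_B : B \to \blacklozenge_\pi B$ is set-theoretically the identity function on $\bars{B}$, and $\blacklozenge_\pi(f)$ agrees with $f$ on carriers. Consequently, every arrow in each of the three diagrams reduces on the underlying sets to an identity, a projection, or a product associator, and the diagrams commute for exactly the same reasons as their analogues in $\mathbf{Set}$. The only mild obstacle is the case analysis on $\ell \in \pi$ used to confirm that $t_{A, B}$ respects every $\rel{\ell}$; once that is in hand, the three strength axioms are cost-free.
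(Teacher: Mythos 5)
Your proposal is correct and matches the paper's proof essentially verbatim: both define $t_{A,B}$ as the identity on the common carrier $\bars{A}\times\bars{B}$, verify it is a morphism by the same case split on $\ell\in\pi$ (complete relation in the target) versus $\ell\notin\pi$ (relations unchanged), and conclude that the three diagrams commute because every arrow other than projections and associators is an identity on carriers. Nothing to add.
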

\begin{proof}
  The components $t_{A, B}$ are identity functions on $\bars{A
  \times \blacklozenge_\pi B} = \bars{A} \times \bars{B} =
  \bars{\blacklozenge_\pi (A \times B)}$. These preserve all the
  relations at $\ell \in \pi$, as $\blacklozenge_\pi(A \times B)$
  is protected at $\pi$. If $\ell \not\in\pi$, then $(a, b)
  \rel{\ell} (a', b')$ in $A \times \blacklozenge_\pi B$ means
  that $a \rel{\ell} a'$ in $A$ and $b \rel{\ell} b'$ in $B$, so
  $(a, b) \rel{\ell} (a', b')$ in $A \times B$ and hence in
  $\blacklozenge_\pi (A \times B)$.  The diagrams commute as all
  the arrows excluding projections and associativities (but
  including components of $t$, $\eta$ and $\mu$ and products
  thereof) are identities on carrier sets.
\end{proof}

\subsection{Stacking pre-cohesions and inter-level reasoning}
  \label{sec:stack}

In fact, it is not hard to generalise the results in the above
section to the fact that we can `stack' such pre-cohesions on top
of one another. For example, by applying Theorem \ref{thm:cohlev}
twice, if $\pi' \subseteq \pi \subseteq \mathcal{L}$, we can
obtain two pre-cohesions: \[
  \begin{tikzcd}
      \textbf{CSet}_\mathcal{L}
        \arrow[d, shift right=50pt, 
          "C_{\pi \subseteq \mathcal{L}}", swap]
        \arrow[d, shift left=20pt, 
          "U_{\pi \subseteq \mathcal{L}}", "\dashv"{name=A, left}]
      \\
      \textbf{CSet}_{\pi}
        \arrow[u, shift left=20pt, 
          "\Updelta_{\pi \subseteq \mathcal{L}}", swap, "\dashv"{name=B,left}]
        \arrow[u, shift right=60pt, 
          "\nabla_{\pi \subseteq \mathcal{L}}", swap, "\dashv"{name=B,left}]
        \arrow[d, shift right=50pt, 
          "C_{\pi' \subseteq \pi}", swap]
        \arrow[d, shift left=20pt, 
          "U_{\pi' \subseteq \pi}", "\dashv"{name=A, left}]
      \\
      \textbf{CSet}_{\pi'}
        \arrow[u, shift left=20pt,
          "\Updelta_{\pi' \subseteq \pi}", swap,
        "\dashv"{name=B,left}]
        \arrow[u, shift right=60pt, 
          "\nabla_{\pi' \subseteq \pi}", swap, "\dashv"{name=B,left}]
    \end{tikzcd}
\] which compose to what we previously denoted $C_{\mathcal{L} -
\pi'} \dashv \Updelta_{\mathcal{L} - \pi'} \dashv U_{\mathcal{L} -
\pi'} \dashv \nabla_{\mathcal{L} - \pi'}$. For example,
\begin{align*}
  U_{\mathcal{L} - \pi'} &= 
          U_{\pi' \subseteq \pi} \circ U_{\pi' \subseteq \mathcal{L}}
    : \mathbf{CSet}_\mathcal{L} \longrightarrow \mathbf{CSet}_{\pi'} \\
  \nabla_{\mathcal{L} - \pi'} &=
          \nabla_{\pi \subseteq \mathcal{L}}
    \circ \nabla_{\pi' \subseteq \pi}
    : \mathbf{CSet}_{\pi'} \longrightarrow \mathbf{CSet}_\mathcal{L}
\end{align*} and so forth. This forms a functor \[
  \mathcal{P}(\mathcal{L})^{\text{op}}
    \longrightarrow
  \mathbf{Precoh}
\] from the opposite powerset lattice of $\mathcal{L}$ to
$\mathbf{Precoh}$, whose morphisms are strings of adjoints $(C,
\Updelta, U, \nabla) : \mathcal{E} \rightarrow \mathcal{S}$ that
exhibit $\mathcal{E}$ to be pre-cohesive over $\mathcal{S}$.
The functor maps the unique arrow $\alpha : \pi \subseteq \pi'$ to
a pre-cohesion $(C_\alpha, \Updelta_\alpha, U_\alpha,
\nabla_\alpha) : \mathbf{CSet}_{\pi'} \rightarrow
\mathbf{CSet}_{\pi}$. Thus, each $\alpha : \pi \subseteq \pi'$
induces modalities $\Box_\alpha, \blacklozenge_\alpha :
\mathbf{CSet}_{\pi'} \longrightarrow \mathbf{CSet}_{\pi}$, and we
previously wrote $\Box_\pi$ for $\Box_{\alpha : \mathcal{L}
- \pi \subseteq \mathcal{L}}$.

This functorial structure satisfies a number of
strange-looking---yet very intuitive in terms of information
flow---properties. To begin, we recall the isomorphism $U_\alpha\
\Delta_\alpha \cong U_\alpha\ \nabla_\alpha \cong \textsf{Id}$,
which was a consequence of pre-cohesion (Proposition
\ref{cor:coh}), and holds \emph{on-the-nose} in classified sets.
This has the following consequences regarding the induced
modalities.

\begin{prop} 
  \label{prop:absorption}
    If $\gamma : \pi'' \subseteq \pi'$ and $\alpha : \pi'
    \subseteq \pi$, then:
  \begin{enumerate}
    \item
        $\blacklozenge_{\alpha \circ \gamma}\ \Box_\alpha =
        \blacklozenge_{\alpha \circ \gamma} : \mathbf{CSet}_\pi
        \longrightarrow \mathbf{CSet}_\pi$
    \item
        $\Box_{\alpha \circ \gamma}\ \blacklozenge_\alpha =
        \Box_{\alpha \circ \gamma} : \mathbf{CSet}_\pi
        \longrightarrow \mathbf{CSet}_\pi$
  \end{enumerate}
\end{prop}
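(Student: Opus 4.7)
The plan is to unfold each modality according to its definition in terms of the adjoint string, exploit the functoriality of the assignment $\pi \mapsto \mathbf{CSet}_\pi$ described in the preceding paragraph, and then collapse the composite using the on-the-nose identities $U_\alpha\Updelta_\alpha = \textsf{Id}$ and $U_\alpha\nabla_\alpha = \textsf{Id}$ flagged immediately before the statement.

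Concretely, since the functor $\mathcal{P}(\mathcal{L})^{\textrm{op}} \to \mathbf{Precoh}$ is functorial, the composite inclusion $\alpha \circ \gamma : \pi'' \subseteq \pi$ induces a pre-cohesion whose component functors are literally composites:
\begin{equation*}
  U_{\alpha\circ\gamma} = U_\gamma \circ U_\alpha,
  \quad
  \Updelta_{\alpha\circ\gamma} = \Updelta_\alpha \circ \Updelta_\gamma,
  \quad
  \nabla_{\alpha\circ\gamma} = \nabla_\alpha \circ \nabla_\gamma.
\end{equation*}
Hence $\Box_{\alpha\circ\gamma} = \Updelta_\alpha \Updelta_\gamma U_\gamma U_\alpha$ and $\blacklozenge_{\alpha\circ\gamma} = \nabla_\alpha \nabla_\gamma U_\gamma U_\alpha$, both as endofunctors on $\mathbf{CSet}_\pi$.

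For part (1), I would compute
\begin{equation*}
  \blacklozenge_{\alpha\circ\gamma}\,\Box_\alpha
    \;=\; \nabla_\alpha \nabla_\gamma U_\gamma U_\alpha \,\Updelta_\alpha U_\alpha
    \;=\; \nabla_\alpha \nabla_\gamma U_\gamma\, (U_\alpha \Updelta_\alpha)\, U_\alpha
    \;=\; \nabla_\alpha \nabla_\gamma U_\gamma U_\alpha
    \;=\; \blacklozenge_{\alpha\circ\gamma},
\end{equation*}
where the middle step uses the strict identity $U_\alpha \Updelta_\alpha = \textsf{Id}$, which holds on-the-nose because $\Updelta_\alpha$ only replaces the $\rel{\ell}$ for $\ell \in \pi \setminus \pi'$ by the diagonal and $U_\alpha$ then discards exactly those relations, recovering the original classified set. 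Part (2) is strictly dual: expand $\Box_{\alpha\circ\gamma}\,\blacklozenge_\alpha = \Updelta_\alpha \Updelta_\gamma U_\gamma U_\alpha \nabla_\alpha U_\alpha$ and collapse the middle $U_\alpha \nabla_\alpha$, which is again the identity on the nose since $\nabla_\alpha$ adds the complete relation at levels $\pi \setminus \pi'$ and $U_\alpha$ forgets them.

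There is essentially no obstacle here beyond bookkeeping: the content is entirely the strictness of $U_\alpha \Updelta_\alpha$ and $U_\alpha \nabla_\alpha$, which the paper has just established, together with the fact that the $(U, \Updelta, \nabla)$ assignments are strictly functorial in the inclusion order on subsets of $\mathcal{L}$. In a more general pre-cohesive setting one would have to track natural isomorphisms rather than equalities, and the result would hold only up to isomorphism; in $\mathbf{CSet}$, however, the construction is definitional, so the two equalities are honest equalities of functors.
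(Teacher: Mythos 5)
Your proposal is correct and follows essentially the same route as the paper: unfold $\blacklozenge_{\alpha\circ\gamma}$ and $\Box_\alpha$ via the adjoint string, split $U_{\alpha\circ\gamma}$ as $U_\gamma U_\alpha$ by functoriality, cancel the inner $U_\alpha\Updelta_\alpha$ (resp.\ $U_\alpha\nabla_\alpha$) using the on-the-nose identity, and recombine. The only cosmetic difference is that you also decompose $\nabla_{\alpha\circ\gamma}$ (resp.\ $\Updelta_{\alpha\circ\gamma}$), which the paper leaves intact; this changes nothing.
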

\begin{proof} \hfill
  \begin{enumerate}
    \item \begin{align*}
        \blacklozenge_{\alpha \circ \gamma}\ \circ \Box_\alpha
          &=
              \nabla_{\alpha \circ \gamma}\ U_{\alpha \circ \gamma}\ 
              \Updelta_{\alpha}\ U_{\alpha}
            & \text{ by definition }\\
          &=
              \nabla_{\alpha \circ \gamma}\ U_\gamma\ U_\alpha\
              \Updelta_{\alpha}\ U_{\alpha}
            & \text{ by functoriality and contravariance of $U$ }\\
          &=
              \nabla_{\alpha \circ \gamma}\ U_\gamma\ U_{\alpha}
            & \text{ by the fundamental corollary}\\
          &=
              \nabla_{\alpha \circ \gamma}\ U_{\alpha \circ \gamma}\
            & \text{ by functoriality and contravariance of $U$ }\\
          &=
              \blacklozenge_{\alpha \circ \gamma}
            & \text{ by definition}\\
      \end{align*}
    \item Similar to (1).
  \end{enumerate}
\end{proof}

\noindent Intuitively, the first of these results says that
declassifying things at some levels, and then redacting more than
what was declassified is exactly the same as redacting all at
once. Note that we used no particular properties about the model
of classified sets, apart from the strictness of $U\Updelta \cong
U\nabla \cong \textsf{Id}$.

Next, the forgetful functor $U$ and the discretisation functor
$\Updelta$ sometimes commute. More specifically, if we have a
pullback diagram in $\mathcal{P}(\mathcal{L})$ \[
  \begin{tikzcd}
     & \pi & \\
    \pi_1
      \arrow[ur, "\alpha"]
    &  &
    \pi_2
      \arrow[ul, "\beta" swap] \\
    & 
    \pi_1 \cap \pi_2
      \arrow[ul, "\gamma"] 
      \arrow[ur, "\delta", swap]
  \end{tikzcd}
\] then it is easy to check that the following functor diagram
commutes on-the-nose: \[
  \begin{tikzcd}
     & \textbf{CSet}_\pi
      \arrow[dl, "U_{\alpha}", swap]
      & \\
    \textbf{CSet}_{\pi_1}
    &  &
    \textbf{CSet}_{\pi_2}
      \arrow[ul, "\Updelta_{\beta}", swap]
      \arrow[dl, "U_{\delta}"]
      \\
    & 
    \textbf{CSet}_{\pi_1 \cap \pi_2}
      \arrow[ul, "\Updelta_{\gamma}"]
  \end{tikzcd}
\] That is: if, starting from $\pi_2$, we equip the extra labels
with discrete cohesion, and then forget everything down to
$\pi_1$, we have not changed any of the labels of $\pi_1 \cap
\pi_2$. Thus, we might first forget, and then discretise up to
$\pi_1$. The same holds of codiscretisation, so we obtain \[
  U_\alpha \nabla_\beta = \nabla_\gamma U_\delta
\] A sufficient condition for the above is that the components of
the pasting diagram \[
  \begin{tikzcd}
    \mathbf{CSet}_{\pi_1}
      \arrow[dr, "U_\delta", swap]
      \arrow[rr, equal, ""{name=U, below}]
  & 
  & \mathbf{CSet}_{\pi_1}
    \arrow[rd, "\Updelta_\beta"]
    \arrow[dd, equal]
  & 
  & \\
  & \mathbf{CSet}_{\pi_1 \cap \pi_2}
    \arrow["\epsilon", Rightarrow, to=U]
    \arrow[ur, "\Updelta_\delta", swap]
    \arrow[dr, "\Updelta_\gamma", swap]
  &
  & \mathbf{CSet}_\pi
    \arrow[rd, "U_\alpha"]
  & \\
  &
  & \mathbf{CSet}_{\pi_2}
    \arrow[ur, "\Updelta_\alpha"]
    \arrow[rr, equal, ""{name=W, above}]
    \arrow[ur, "\eta", Rightarrow, from=W, "\cong"{swap}]
  & 
  & \mathbf{CSet}_{\pi_2}
  \end{tikzcd}
\] are identities---and they indeed are in our case: the
components of $\eta$ and $\epsilon$ are identity functions, so
this reduces to simply checking $\Updelta_\gamma\ U_\delta =
U_\alpha\ \Delta_\beta$ again.

Secondly, if we discretise and codiscretise in disjoint
regions, then these operations can be swapped. Namely, given a
pullback diagram like above, we also have that \[
  \begin{tikzcd}
     & \mathbf{CSet}_{\pi} & \\
    \mathbf{CSet}_{\pi_1}
      \arrow[ur, "\Updelta_\alpha"]
    &  &
    \mathbf{CSet}_{\pi_2}
      \arrow[ul, "\nabla_\beta" swap] \\
    & 
    \mathbf{CSet}_{\pi_1 \cap \pi_2}
      \arrow[ul, "\nabla_\gamma"] 
      \arrow[ur, "\Updelta_\delta", swap]
  \end{tikzcd}
\] commutes. Again, it suffices that the following pasting
diagram composes to an identity: \[
  \begin{tikzcd}
    \mathbf{CSet}_{\pi_1 \cap \pi_2}
      \arrow[dr, "\Updelta_\delta", swap]
      \arrow[rr, equal, ""{name=U, below}]
  & 
  & \mathbf{CSet}_{\pi_1 \cap \pi_2}
    \arrow[rd, "\nabla_\gamma"]
    \arrow[dd, "\xi", Rightarrow]
  & 
  & \\
  & \mathbf{CSet}_{\pi_2}
    \arrow["\eta", "\cong"{swap}, Rightarrow, from=U]
    \arrow[ur, "U_\delta", swap]
    \arrow[dr, "\nabla_\beta", swap]
  &
  & \mathbf{CSet}_{\pi_1}
    \arrow[rd, "\Updelta_\alpha"]
    \arrow[dd, "\epsilon", Rightarrow, to=W]
  & \\
  &
  & \mathbf{CSet}_\pi
    \arrow[ur, "U_\alpha"]
    \arrow[rr, equal, ""{name=W, above}]
  & 
  & \mathbf{CSet}_\pi
  \end{tikzcd}
\] where $\xi$ is a vertion of the preceding pasting diagram for
$\nabla$.

These `basic laws' allow us to prove many more that relate the
different cohesive structures.

\begin{prop}
  \label{prop:multimodal}
  Let $\begin{tikzcd}
     & \pi & \\
    \pi_1
      \arrow[ur, "\alpha"]
    &  &
    \pi_2
      \arrow[ul, "\beta" swap] \\
    & 
    \pi_1 \cap \pi_2
      \arrow[ul, "\gamma"] 
      \arrow[ur, "\delta", swap]
  \end{tikzcd}$ be a pullback diagram. Then
  \begin{enumerate}
    \item
      $\Box_\gamma\ U_\alpha = U_\alpha\ \Box_\beta$
    \item
      $\blacklozenge_\gamma\ U_\alpha = U_\alpha\ \blacklozenge_\beta$
    \item
      $\Box_{\alpha \circ \gamma} = \Box_\alpha\  \Box_\beta$
    \item
      $\blacklozenge_{\alpha \circ \gamma} = \blacklozenge_\alpha\
      \blacklozenge_\beta$
    \item
      $\Box_\alpha\ \Box_\beta = \Box_\beta\ \Box_\alpha$
    \item
      $\blacklozenge_\alpha\ \blacklozenge_\beta =
      \blacklozenge_\beta\ \blacklozenge_\alpha$
  \end{enumerate}
\end{prop}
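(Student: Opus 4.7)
The plan is to bootstrap (3)--(6) from (1)--(2), which themselves follow from the pullback-commutation identities $\Updelta_\gamma\ U_\delta = U_\alpha\ \Updelta_\beta$ and $\nabla_\gamma\ U_\delta = U_\alpha\ \nabla_\beta$ established just above the proposition, together with the (contravariant) functoriality of $U$, $\Updelta$, and $\nabla$ on inclusions in $\mathcal{P}(\mathcal{L})$. Throughout, I will use that $\alpha \circ \gamma = \beta \circ \delta$ as arrows in $\mathcal{P}(\mathcal{L})$, since between any two objects of a poset there is at most one morphism.

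For (1) I would unfold $\Box_\gamma\ U_\alpha = \Updelta_\gamma\ U_\gamma\ U_\alpha$, rewrite the tail as $U_\gamma\ U_\alpha = U_{\alpha \circ \gamma} = U_\delta\ U_\beta$, and apply the commutation square $\Updelta_\gamma\ U_\delta = U_\alpha\ \Updelta_\beta$ to land at $U_\alpha\ \Updelta_\beta\ U_\beta = U_\alpha\ \Box_\beta$. Item (2) is the same calculation with $\nabla$ in place of $\Updelta$. For (3) I would unfold $\Box_{\alpha \circ \gamma} = \Updelta_\alpha\ \Updelta_\gamma\ U_\gamma\ U_\alpha = \Updelta_\alpha\ (\Box_\gamma\ U_\alpha)$ and then invoke (1) to rewrite the bracket as $U_\alpha\ \Box_\beta$, yielding $\Updelta_\alpha\ U_\alpha\ \Box_\beta = \Box_\alpha\ \Box_\beta$; item (4) is the analogous calculation using $\nabla$ and (2). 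Items (5) and (6) then fall out by symmetry: applying (3) to the pullback diagram and then again to its mirror image (swap $\alpha \leftrightarrow \beta$ and $\gamma \leftrightarrow \delta$) shows that both $\Box_\alpha\ \Box_\beta$ and $\Box_\beta\ \Box_\alpha$ equal $\Box_{\alpha \circ \gamma}$, and likewise for $\blacklozenge$ using (4).

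The only piece of bookkeeping that could trip one up is confirming $\Updelta_{\alpha \circ \gamma} = \Updelta_\alpha\ \Updelta_\gamma$ and its analogue for $\nabla$, which is what allows the outer $\Updelta$ (resp.\ $\nabla$) to be peeled off in (3) and (4). This is immediate from the concrete description---$\Updelta_\alpha$ installs the diagonal at every $\ell \in \pi \setminus \pi_1$, and $\nabla_\alpha$ installs the complete relation there---but it is also exactly the functoriality packaged into the functor $\mathcal{P}(\mathcal{L})^{\mathrm{op}} \to \mathbf{Precoh}$ discussed in the preceding paragraphs. Beyond this, the argument uses nothing except the on-the-nose strictness of $U\Updelta = U\nabla = \textsf{Id}$ and of the pullback commutations, both of which hold in $\mathbf{CSet}_\mathcal{L}$.
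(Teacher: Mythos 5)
Your proposal is correct and follows essentially the same route as the paper's own proof: (1) and (2) by unfolding the definitions, using functoriality/contravariance of $U$ to rewrite $U_\gamma\ U_\alpha$ as $U_\delta\ U_\beta$, and applying the basic commutation laws; (3) and (4) by peeling off the outer $\Updelta_\alpha$ (resp.\ $\nabla_\alpha$) and invoking (1) (resp.\ (2)); and (5), (6) from (3), (4) via $\alpha \circ \gamma = \beta \circ \delta$. The extra remark about $\Updelta_{\alpha \circ \gamma} = \Updelta_\alpha\ \Updelta_\gamma$ is exactly the functoriality the paper silently uses, so nothing is missing.
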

\begin{proof} \hfill
  \begin{enumerate}
    \item
      \begin{align*}
        \Box_\gamma\ U_\alpha
          &=
              \Updelta_\gamma\ U_\gamma\ U_\alpha\
            & \text{ by definition }\\
          &=
              \Updelta_\gamma\ U_\delta\ U_\beta\
            & \text{ by functoriality of $U$ }\\
          &=
              U_\alpha\ \Updelta_\beta U_\beta
            & \text{ by the first basic law above }\\
          &= 
              U_\alpha\ \Box_\beta
            & \text { by definition }
      \end{align*}
    \item
      Similar to (1), but with $\nabla$.
    \item
      \begin{align*}
        \Box_{\alpha \circ \gamma}
          &=
              \Updelta_\alpha\ \Updelta_\gamma\ U_\gamma\ U_\alpha
            & \text{ by definition ($U$ contravariant) }\\
          &=
              \Updelta_\alpha\ \Box_\gamma\ U_\alpha
            & \text{ by definition }\\
          &=
              \Updelta_\alpha\ U_\alpha\ \Box_\beta
            & \text{ by (1) }\\
          &=
              \Box_\alpha\ \Box_\beta
            & \text{ by definition }
      \end{align*}
    \item
      Similar to (3), but with $\nabla$.
    \item
      By applying (3) twice: $
        \Box_\alpha\ \Box_\beta
          = \Box_{\alpha \circ \gamma}
          = \Box_{\beta \circ \delta}
          = \Box_\beta\ \Box_\alpha
      $ 
    \item
      Similar to (5).
  \end{enumerate}
\end{proof}

\noindent All this structure allows us to show things about the
original modalities $\Box_\pi, \blacklozenge_\pi :
\mathbf{CSet}_\mathcal{L} \longrightarrow
\mathbf{CSet}_\mathcal{L}$.

\begin{prop} \hfill
  \label{prop:switch}
  \begin{enumerate}
    \item
      If $\pi \cap \pi' = \emptyset$, then $\Box_{\pi} \Box_{\pi'}
      = \Box_{\pi \cup \pi'}$.
    \item
      If $\pi \cap \pi' = \emptyset$, then $\blacklozenge_{\pi}
      \blacklozenge_{\pi'} = \blacklozenge_{\pi \cup \pi'}$.
    \item
      $\Box_{\pi} \Box_{\pi'} = \Box_{\pi \cup \pi'}$
    \item
      $\blacklozenge_{\pi} \blacklozenge_{\pi'} =
      \blacklozenge_{\pi \cup \pi'}$
    \item
      If $\pi \subseteq \pi'$, then $\Box_{\pi'}\
      \blacklozenge_\pi = \Box_{\pi'}$.
    \item
      If $\pi \subseteq \pi'$, then $\blacklozenge_{\pi'}\
      \Box_\pi = \blacklozenge_{\pi'}$.
    \item
      If $\pi \cap \pi' = \emptyset$, then $\Box_{\pi}\
      \blacklozenge_{\pi'} = \blacklozenge_{\pi'}\ \Box_{\pi}$.
    \item
      $\Box_\pi\ \blacklozenge_{\pi'} = \blacklozenge_{\pi' -
      \pi}\ \Box_\pi$.
    \item
      $\blacklozenge_\pi\ \Box_{\pi'} = \Box_{\pi' - \pi}\
      \blacklozenge_\pi$.
  \end{enumerate}
\end{prop}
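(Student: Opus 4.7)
The plan is to translate each $\Box_\pi$ and $\blacklozenge_\pi$ on $\mathbf{CSet}_\mathcal{L}$ back into the indexed form $\Box_\alpha, \blacklozenge_\alpha$ of \S\ref{sec:stack}, taking $\alpha : \mathcal{L} - \pi \subseteq \mathcal{L}$, so that Propositions \ref{prop:absorption} and \ref{prop:multimodal} can be applied to appropriate squares in $\mathcal{P}(\mathcal{L})$. The canonical square I invoke is the pullback of $\mathcal{L} - \pi \hookrightarrow \mathcal{L}$ and $\mathcal{L} - \pi' \hookrightarrow \mathcal{L}$ with apex $\mathcal{L} - (\pi \cup \pi')$ and legs $\gamma, \delta, \alpha, \beta$ as in the setup of \S\ref{sec:stack}; under this identification $\Box_\alpha = \Box_\pi$, $\Box_\beta = \Box_{\pi'}$, $\Box_{\alpha \circ \gamma} = \Box_{\pi \cup \pi'}$, and analogously for $\blacklozenge$.

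Parts (3) and (4) then follow immediately from Proposition \ref{prop:multimodal}(3) and (4) applied to this square; these require no disjointness assumption, so parts (1) and (2) are simple special cases. For parts (5) and (6), the hypothesis $\pi \subseteq \pi'$ gives the chain of inclusions $\mathcal{L} - \pi' \hookrightarrow \mathcal{L} - \pi \hookrightarrow \mathcal{L}$; Proposition \ref{prop:absorption}(2) then reads $\Box_{\pi'}\,\blacklozenge_\pi = \Box_{\pi'}$, yielding (5), and part (1) of the same proposition symmetrically yields (6).

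The main obstacle is part (7). Here the disjointness hypothesis $\pi \cap \pi' = \emptyset$ is equivalent to $(\mathcal{L} - \pi) \cup (\mathcal{L} - \pi') = \mathcal{L}$, which is precisely the extra condition turning the pullback square above into a pushout as well, and hence making the second basic law $\Updelta_\alpha \nabla_\gamma = \nabla_\beta \Updelta_\delta$ of \S\ref{sec:stack} applicable. Combining this with the first basic law in its $\nabla$-form $U_\alpha \nabla_\beta = \nabla_\gamma U_\delta$ and its $\Updelta$-form $U_\beta \Updelta_\alpha = \Updelta_\delta U_\gamma$ (the latter obtained by interchanging the two legs of the pullback), together with the contravariant functoriality $U_\gamma U_\alpha = U_\delta U_\beta$, I expect the chain
\[
  \Box_\pi\,\blacklozenge_{\pi'}
    = \Updelta_\alpha U_\alpha \nabla_\beta U_\beta
    = \Updelta_\alpha \nabla_\gamma U_\delta U_\beta
    = \nabla_\beta \Updelta_\delta U_\delta U_\beta
    = \nabla_\beta \Updelta_\delta U_\gamma U_\alpha
    = \nabla_\beta U_\beta \Updelta_\alpha U_\alpha
    = \blacklozenge_{\pi'}\,\Box_\pi
\]
to establish (7). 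The subtle point is recognising that the second basic law genuinely needs the pushout condition: without it, $\Updelta_\alpha$ and $\nabla_\beta$ would act on overlapping regions of labels and the swap would fail.

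For parts (8) and (9) I reduce to the previous cases. Decomposing $\pi' = (\pi \cap \pi') \sqcup (\pi' - \pi)$ into disjoint pieces, part (4) yields $\blacklozenge_{\pi'} = \blacklozenge_{\pi \cap \pi'}\,\blacklozenge_{\pi' - \pi}$; since $\pi \cap \pi' \subseteq \pi$, part (5) collapses $\Box_\pi\,\blacklozenge_{\pi \cap \pi'}$ to $\Box_\pi$; and finally (7), now applicable to the disjoint pair $\pi, \pi' - \pi$, swaps the remaining $\Box_\pi$ past $\blacklozenge_{\pi' - \pi}$, giving (8). A symmetric chain using (3), (6), and (7) yields (9).
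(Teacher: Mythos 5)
Your proposal is correct, and for parts (5)--(9) it coincides with the paper's proof essentially step for step: (5) and (6) come from Proposition \ref{prop:absorption} with the same choice of inclusions; your chain for (7) is the paper's chain with the appeal to Proposition \ref{prop:multimodal}(1) inlined as ``first basic law plus contravariant functoriality of $U$,'' and your diagnosis that disjointness is exactly what makes the square a pushout and hence activates the $\Updelta$--$\nabla$ swap law is precisely the point the paper is making; (8) and (9) use the same decomposition $\pi' = (\pi \cap \pi') \uplus (\pi' - \pi)$ and the same citations of (4)/(5)/(7) (resp.\ (3)/(6)/(7)). The one genuine divergence is in (1)--(4). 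The paper first proves (1) and (2) under the disjointness hypothesis via Proposition \ref{prop:multimodal}(3)--(4), and then derives the unrestricted (3) and (4) by splitting $\pi$ and $\pi'$ into disjoint pieces and invoking \emph{strict idempotence} of $\Box_{\pi\cap\pi'}$ and $\blacklozenge_{\pi\cap\pi'}$. You instead apply Proposition \ref{prop:multimodal}(3)--(4) directly to the square with apex $\mathcal{L} - (\pi \cup \pi')$ for arbitrary $\pi, \pi'$, observing that no disjointness is needed, and recover (1)--(2) as special cases. This is valid --- in the poset $\mathcal{P}(\mathcal{L})$ that square is always a pullback (its apex is the meet $(\mathcal{L}-\pi) \cap (\mathcal{L}-\pi')$), and the only basic laws used in Proposition \ref{prop:multimodal}(1)--(4) are the $U\Updelta$ and $U\nabla$ commutations, which hold for any such square without the pushout condition --- but since the paper's statement of the pullback property is phrased as a consequence of disjointness, you should make this observation explicit rather than leave it implicit. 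Your route buys a shorter proof of (3)--(4) that avoids the appeal to strict idempotence; the paper's route confines the pullback machinery to the disjoint case and is therefore more robust if one only trusts the basic laws there.
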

\begin{proof}
  Notice that $\pi \cap \pi' = \emptyset$ implies that
  $\tiny \begin{tikzcd}
     & \mathcal{L} & \\
    \mathcal{L} - \pi 
      \arrow[ur, "\alpha"]
    &  &
    \mathcal{L} - \pi'
      \arrow[ul, "\beta" swap] \\
    & 
    \mathcal{L} - (\pi \cup \pi')
      \arrow[ul, "\gamma"] 
      \arrow[ur, "\delta", swap]
  \end{tikzcd}$ is a pullback diagram. We can hence use
  Prop. \ref{prop:multimodal}: for (1), we have \[
    \Box_{\pi \cup \pi'}
    = \Box_{\alpha \circ \gamma}
    = \Box_\alpha\ \Box_\beta
    = \Box_\pi\ \Box_{\pi'}
  \] by definition and Prop. \ref{prop:multimodal}(3), and very
  similarly for (2). (3) follows by writing $\pi = \pi_1 \uplus
  (\pi \cap \pi')$ and $\pi' = \pi_2 \uplus (\pi \cap \pi')$ as
  disjoint unions, and then using (1) and strict idempotence to
  compute \[
    \Box_\pi\ \Box_{\pi'}
    = \Box_{\pi_1}\ \Box_{\pi \cap \pi'}\ \Box_{\pi \cap \pi'}\ \Box_{\pi_2}
    = \Box_{\pi_1}\ \Box_{\pi \cap \pi'}\ \Box_{\pi_2}
    = \Box_{\pi_1 \cup (\pi \cap \pi') \cup \pi_2}
  \] which is by definition equal to $\Box_{\pi \cup \pi'}$.
  Again, a similar story for (4).

  (5) and (6) follow from Prop. \ref{prop:absorption}
  (with $\gamma : \mathcal{L} - \pi' \subseteq \mathcal{L} - \pi$
  and $\alpha : \mathcal{L} - \pi \subseteq \mathcal{L}$).

  For (7), we calculate:
    \begin{align*}
      \Box_\pi\ \blacklozenge_{\pi'}
        &=
            \Box_\alpha\ \blacklozenge_{\beta}
          & \text{ by definition }\\
        &=
            \Updelta_\alpha\ U_\alpha\ \nabla_\beta\ U_\beta
          & \text{ by definition }\\
        &=
            \Updelta_\alpha\ \nabla_\gamma\ U_\delta\ U_\beta
          & \text{ by the first basic law }\\
        &=
            \nabla_\beta\ \Updelta_\delta\ U_\delta\ U_\beta
          & \text{ by the second basic law }\\
        &=
            \nabla_\beta\ \Box_\delta\ U_\alpha
          & \text{ by definition }\\
        &=
            \nabla_\beta\ U_\beta\ \Box_\alpha
          & \text{ by Prop. \ref{prop:multimodal}(1) }\\
        &=
            \blacklozenge_\beta\ \Box_\alpha
          & \text{ by definition }
    \end{align*}

  Finally, (8) follows easily by writing $\pi = (\pi \cap \pi')
  \uplus (\pi - \pi')$ and using (4), (7) and (5); and similarly
  for (9).
\end{proof}

All of the above equational laws express very intuitive
properties. For example, (3) says that if we declassify everything
at security levels $\pi \cup \pi'$, we could have done that in two
steps, with either $\pi$ or $\pi'$ first. (5) can be understood to
mean that protecting everything at levels $\pi$ and then
declassifying everything at a larger set of levels $\pi'$ is
exactly the same as declassifying $\pi'$ in one go. (7) allows us
to switch redaction and declassification if they act on disjoint
sets of labels. Finally, (8) and (9) show how to switch them even
when there is overlap.

We have therefore developed an armoury of results about information
flow. But notice that we have used no relational reasoning at all!
We have only relied on a functor $\mathcal{C}^\text{op}
\longrightarrow \mathbf{Precoh}$, and the three equations
\begin{align}
      U_\alpha \Updelta_\beta = \Updelta_\gamma\ U_\delta
        \label{eqn:1} \\
      U_\alpha \nabla_\beta = \nabla_\gamma\ U_\delta
        \label{eqn:2} \\
      \Updelta_\alpha\ \nabla_\gamma = \nabla_\beta\ \Updelta_\delta
        \label{eqn:3}
\end{align} for each pullback diagram $\begin{tikzcd}
  \cdot
    \arrow[r, "\gamma"]
    \arrow[d, "\delta", swap]
  & \cdot
    \arrow[d, "\alpha"]
  \\
  \cdot
    \arrow[r, "\beta", swap]
  & \cdot
\end{tikzcd}$ in $\mathcal{C}$. It is conceivable that these
equations could have a more general standing, especially if we
replace equality with natural isomorphism. All the results in this
section would then hold, but only up to natural isomorphism.

\section{Noninterference II: Multi-modal information flow}
  \label{sec:noninterference2}

The theory developed in the previous section enables us to model
\emph{multi-modal} information flow calculi, which feature
modalities that are \emph{indexed} in some way. The usual way to
do so is to index them over a poset $(\mathcal{L}, \sqsubseteq)$
of security levels, and which is often (but not always) a lattice.
We will focus on two main examples: the \emph{dependency core
calculus} (DCC) of \citet{Abadi1999}, and the \emph{sealing
calculus} of \citet{Shikuma2008}.

We have now moved on to levelled cohesion over an arbitrary set
$\mathcal{L}$ of labels. The first thing we want to note is that
codiscrete contractibility is still satisfied, as long as $\pi
\neq \emptyset$: if $X$ is non-empty, we redact it at some levels
$\pi$, and then take the `view from $\pi$,' we end up with almost
nothing, namely a single connected component, i.e.
$C_\pi(\nabla_\pi X) \cong \mathbf{1}$ For that reason, a levelled
version of Prop. \ref{prop:cocoint} from
\S\ref{sec:noninterference1} holds:

\begin{prop} \label{prop:cocoint2}
  If $\pi \neq \emptyset$ and $A$ is non-empty, then morphisms
  $\blacklozenge_\pi A \rightarrow \Updelta_\pi B$ naturally
  correspond to points $\mathbf{1} \rightarrow B$, and are hence
  constant functions in $\mathbf{CSet}_\mathcal{L}$.
\end{prop}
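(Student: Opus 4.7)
The plan is to mirror the proof of Proposition \ref{prop:cocoint}(1) essentially verbatim, but relative to the levelled pre-cohesion $C_\pi \dashv \Updelta_\pi \dashv U_\pi \dashv \nabla_\pi$ established by Theorem \ref{thm:cohlev}. The only genuine content beyond bookkeeping is the levelled analogue of contractible codiscreteness: I need to verify that $C_\pi(\nabla_\pi X) \cong \mathbf{1}$ in $\mathbf{CSet}_{\mathcal{L}-\pi}$ whenever $X$ is non-empty and $\pi \neq \emptyset$, and then use this as the linchpin of the hom-set chain.

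First I would dispatch the codiscrete contractibility lemma. Concretely, given non-empty $X \in \mathbf{CSet}_{\mathcal{L}-\pi}$, the classified set $\nabla_\pi X$ has $\bars{\nabla_\pi X} = \bars{X}$ with the complete relation at every $\ell \in \pi$. Since $\pi \neq \emptyset$, pick any $\ell_0 \in \pi$; then any two elements of $\bars{X}$ stand in $R_{\ell_0}$, so they are identified under $R_\pi^\star$, the reflexive-symmetric-transitive closure of $\bigcup_{\ell \in \pi} R_\ell$. Hence $\bars{\nabla_\pi X}/R_\pi^\star$ has exactly one element. Equipping it with the relations inherited from the $\rel{\ell}$ for $\ell \not\in \pi$ is forced (there is only one element), so $C_\pi(\nabla_\pi X) \cong \mathbf{1}$ in $\mathbf{CSet}_{\mathcal{L}-\pi}$. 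Equivalently, one can rerun the proof of Proposition \ref{prop:cc} inside the levelled pre-cohesion: a point $\mathbf{1} \rightarrow X$ produces via $C_\pi \nabla_\pi$ (which preserves products by Corollary on levelled cohesion) a point $\mathbf{1} \rightarrow C_\pi \nabla_\pi X$, and the uniqueness part follows once we note that $C_\pi(\nabla_\pi X) \rightarrow \mathbf{1}$ is monic, which is the contractible-codiscreteness axiom in this setting.

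With this in hand, I assemble the natural chain
\begin{align*}
  \mathbf{CSet}_\mathcal{L}(\blacklozenge_\pi A, \Updelta_\pi B)
    &\cong \mathbf{CSet}_\mathcal{L}(\nabla_\pi U_\pi A, \Updelta_\pi B)
    && \text{by definition of } \blacklozenge_\pi \\
    &\cong \mathbf{CSet}_{\mathcal{L}-\pi}(C_\pi \nabla_\pi U_\pi A, B)
    && \text{since } C_\pi \dashv \Updelta_\pi \\
    &\cong \mathbf{CSet}_{\mathcal{L}-\pi}(\mathbf{1}, B)
    && \text{by the above, as } U_\pi A \text{ is non-empty}
\end{align*}
which gives the first claim. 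The non-emptiness of $U_\pi A$ is immediate from that of $A$, since $U_\pi$ acts as the identity on underlying carriers.

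For the ``constant functions'' clause, I would simply unwind the isomorphisms above: a morphism $f : \blacklozenge_\pi A \rightarrow \Updelta_\pi B$ is determined by a point $b : \mathbf{1} \rightarrow B$, and tracing the adjunction units back shows that $f$ sends every element of $\bars{A}$ to the single chosen $b \in \bars{B}$. One can also argue this directly: in $\nabla_\pi U_\pi A$ any two elements are related at every $\ell \in \pi$, while in $\Updelta_\pi B$ the relations at those levels are reflexivity; hence $f$ must send any pair of elements to the same element of $\bars{B}$, which is the definition of a constant function. I do not anticipate any real obstacle here: the only substantive step is the contractibility computation, and that reduces to the elementary observation that a single complete relation at one level in $\pi$ suffices to collapse the whole carrier under $R_\pi^\star$.
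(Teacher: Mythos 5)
Your proposal is correct and follows essentially the same route as the paper, which simply observes that contractible codiscreteness still holds for the levelled pre-cohesion when $\pi \neq \emptyset$ (i.e.\ $C_\pi(\nabla_\pi X) \cong \mathbf{1}$ for non-empty $X$) and then declares the proof ``the same as before,'' meaning the hom-set chain of Proposition~\ref{prop:cocoint}(1). Your explicit verification of the contractibility step and the direct constancy argument are just the details the paper leaves implicit.
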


The proof is the same as before.

\subsection{Dependency Core Calculus}

The \emph{dependency core calculus} of \citet{Abadi1999} is at its
core a version of Moggi's computational metalanguage with multiple
monads $T_\ell$ indexed over $\ell \in \mathcal{L}$, where
$(\mathcal{L}, \sqsubseteq)$ is a \emph{lattice}, also called an
\emph{information flow lattice}.\footnote{It is worth noting that,
curiously, the meet and join operations are never used in the
study of the DCC.} The introduction rule is exactly that of Moggi:
\[
  \begin{prooftree}
      \Gamma \vdash M : A
    \justifies
      \Gamma \vdash [M]_\ell : T_\ell A
  \end{prooftree}
\] The elimination rule is modified slightly: \[
  \begin{prooftree}
      \Gamma \vdash M : T_\ell A
    \quad
      \Gamma , x : A \vdash N : B
    \quad
      $B$ \text{ is protected at $\ell$ }
    \justifies
      \Gamma \vdash \textsf{let } x = M \textsf{ in } N : B
  \end{prooftree}
\] Following \citet{Abadi1999}, we say that the type $B$ is
protected at $\ell$ whenever: (I) if $\ell' \sqsubseteq \ell$ then
$T_{\ell}(A)$ is protected at $\ell'$; (II) if $A$ is protected at
$\ell$, then so is $T_{\ell'}(A)$ for any $\ell'$; and (III) if
$A, B$ are protected at $\ell$, then so are $A \times B$ and $C
\rightarrow A$ for any type $C$.

To interpret the DCC\footnote{More specifically: a version of the
DCC without fixpoints, which the original included.} all we need
is \emph{strong, strictly idempotent monad} $T_\ell$ on a CCC
$\mathcal{C}$ for each $\ell \in \mathcal{L}$, such that if $B$ is
a protected type, then $\sem{B}{} = T_\ell \sem{B}{}$ strictly,
i.e. $\sem{B}{}$ is in the reflective subcategory induced by
$T_\ell$. The elimination rule then reduces to Moggi's
interpretation, and we can straightforwardly adapt the soundness
proof, as well as the canonicity proof for $\textsf{Bool}$.

It is now easy to see that the levelled structure on
$\mathbf{CSet}_\mathcal{L}$ from \S\ref{sec:coh2} is a model of
the DCC, with \[
  T_\ell \myeq \blacklozenge_{\downarrow \ell}
\] where $\downarrow \ell \myeq \setcomp{ \ell' \in \mathcal{L} }{
\ell' \sqsubseteq \ell}$ is the \emph{principal lower set} of
$\ell$. It is not hard to show that if $B$ is protected at $\ell$
(as a type) then $\sem{B}{}$ is protected at $\downarrow \ell$ (as
a classified set). As shown in
\S\ref{sec:(co)refl}--\ref{sec:strong}, $\blacklozenge_\pi$ is a
strong, strictly idempotent monad, so that $\sem{B}{} =
\blacklozenge_{\downarrow\ell} \sem{B}{}$.

There is no explicit noninterference theorem stated for DCC itself
in \cite{Abadi1999}. Instead, there are six translations from
various calculi into DCC, which are used to prove noninterference
for each of these `source' calculi. The technique is always the
same: first, show that the translation $(-)^\dagger$ is
\emph{adequate}, in the sense that $E$ has a canonical form in the
source calculus if and only if the semantics of the translation
satisfies $\sem{E^\dagger}{} \neq \bot$. Then, some argument
similar to our Prop. \ref{prop:cocoint2} is used to show constancy
of the `DCC-induced' semantics $\sem{E^\dagger}{}$ of a term $E :
\textsf{Bool}$ with a single free variable of an appropriately
`secure' type. Then $\sem{(E[M/x])^\dagger}{} =
\sem{(E[M'/x])^\dagger}{}$ for all $M, M'$, and a single use of
adequacy suffices to complete the argument. 

We will attempt to capture the essence common to these proofs by
the following proposition.

\begin{prop}
  \label{prop:mmnoninterference}
  If $\pi - \pi' \neq \emptyset$, $A$ is non-empty, and $B$ is
  visible at $\pi - \pi'$, then all morphisms \[
    \blacklozenge_\pi A \rightarrow \blacklozenge_{\pi'} B
  \] are constant functions.
\end{prop}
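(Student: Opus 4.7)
The plan is to reduce the statement to the codiscrete-contractibility argument of Proposition~\ref{prop:cc}, applied now to the levelled pre-cohesion of $\mathbf{CSet}_\mathcal{L}$ over $\mathbf{CSet}_{\mathcal{L}-\sigma}$, where $\sigma \myeq \pi - \pi'$. The hypothesis $\sigma \neq \emptyset$ is exactly what guarantees contractible codiscreteness at this level (as noted at the start of this section), while the two remaining hypotheses on $A$ and $B$ should conspire to exhibit $\blacklozenge_\pi A$ as protected at $\sigma$ and $\blacklozenge_{\pi'} B$ as visible at $\sigma$. Once this is done, the chain of adjoints $C_\sigma \dashv \Updelta_\sigma \dashv U_\sigma \dashv \nabla_\sigma$ will collapse the hom-set just as in Proposition~\ref{prop:cocoint}.

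For the source, since $\sigma \subseteq \pi$, Proposition~\ref{prop:switch}(4) gives $\blacklozenge_\sigma \blacklozenge_\pi = \blacklozenge_{\sigma \cup \pi} = \blacklozenge_\pi$, so that $\blacklozenge_\pi A \cong \nabla_\sigma A''$ where $A'' \myeq U_\sigma \blacklozenge_\pi A$ inherits the non-empty carrier $\bars{A}$. For the target, visibility of $B$ at $\sigma$ gives $B \cong \Box_\sigma B$; since $\sigma \cap \pi' = \emptyset$, Proposition~\ref{prop:switch}(7) then yields $\blacklozenge_{\pi'} B \cong \blacklozenge_{\pi'} \Box_\sigma B \cong \Box_\sigma \blacklozenge_{\pi'} B = \Updelta_\sigma B'$ with $B' \myeq U_\sigma \blacklozenge_{\pi'} B$.

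Chaining now gives the sequence of natural bijections
\begin{align*}
  \mathrm{Hom}(\blacklozenge_\pi A, \blacklozenge_{\pi'} B)
    &\cong \mathrm{Hom}(\nabla_\sigma A'', \Updelta_\sigma B')
      & \text{by the reformulations above} \\
    &\cong \mathrm{Hom}(C_\sigma \nabla_\sigma A'', B')
      & \text{by } C_\sigma \dashv \Updelta_\sigma \\
    &\cong \mathrm{Hom}(\mathbf{1}, B')
      & \text{by Prop.~\ref{prop:cc}, as $A''$ is non-empty}
\end{align*}
and, unwinding as in Proposition~\ref{prop:cocoint2}, the underlying function of any such morphism $f$ is $a \mapsto \hat{f}([a]) = \hat{f}(\ast)$, which is independent of $a$; so $f$ is a constant function.

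The main obstacle is essentially bookkeeping: one must verify $\sigma \subseteq \pi$ and $\sigma \cap \pi' = \emptyset$ so that the two applications of Proposition~\ref{prop:switch} are legal --- both are immediate from $\sigma = \pi - \pi'$. A fully concrete alternative bypasses even this: pick any $\ell \in \sigma$; on $\blacklozenge_\pi A$ the relation $\rel{\ell}$ is complete (as $\ell \in \pi$), while on $\blacklozenge_{\pi'} B$ it is the diagonal (as $\ell \notin \pi'$ and $B$ is visible at $\sigma \ni \ell$), so any morphism must send every pair in $\bars{A} \times \bars{A}$ to the diagonal, giving constancy at once.
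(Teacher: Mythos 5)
Your main argument is correct and is essentially the paper's own proof: both rewrite the target as $\Updelta_{\pi-\pi'}\,U_{\pi-\pi'}\,\blacklozenge_{\pi'}B$ using visibility of $B$ and a commutation law from Prop.~\ref{prop:switch} (you use (7), the paper uses (9) --- equivalent here since $(\pi-\pi')\cap\pi'=\emptyset$), rewrite the source as $\blacklozenge_{\pi-\pi'}$ applied to something with non-empty carrier, and then invoke the levelled contractible-codiscreteness argument of Prop.~\ref{prop:cocoint2}. Your concrete alternative at the end --- picking $\ell\in\pi-\pi'$ so that $\rel{\ell}$ is the complete relation on $\blacklozenge_\pi A$ but the diagonal on $\blacklozenge_{\pi'}B$ --- is also sound and is a genuinely more elementary shortcut, though it is specific to $\mathbf{CSet}_\mathcal{L}$, whereas the modal-algebra route applies to any pre-cohesive setting satisfying the stated equations.
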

\begin{proof}
  If $B$ is visible at $\pi - \pi'$ then $B = \Box_{\pi  - \pi'}
  B$, so \[
      \blacklozenge_{\pi'} B
    =
      \blacklozenge_{\pi'} \Box_{\pi - \pi'} B
    =
      \Box_{(\pi-\pi')-\pi'} \blacklozenge_{\pi'} B
    =
      \Box_{\pi-\pi'} \blacklozenge_{\pi'} B
    =
      \Delta_{\pi-\pi'} U_{\pi-\pi'} \blacklozenge_{\pi'} B
  \] by Prop. \ref{prop:switch}(9), set theory, and the definition of
  $\Box_\pi$. But also \[
      \blacklozenge_\pi A
    =
      \blacklozenge_{\pi-\pi'} \blacklozenge_{\pi \cap \pi'} A
  \] by Prop. \ref{prop:switch}(4) and set theory. As $\pi - \pi'
  \neq \emptyset$ and $A$ is non-empty, Prop. \ref{prop:cocoint2}
  applies.
\end{proof}

\noindent We also have the following analogue of Lemma
\ref{lem:adeq}.

\begin{lem}[DCC Adequacy]
  \label{lem:dccadeq}
  Suppose we have a sound categorical interpretation for the DCC,
  as described before. Suppose that, as per the definition in
  \citep{Moggi1991}, each monad $T_\ell$ satisfies the \emph{mono
  requirement}, i.e. each component $A \rightarrow T_\ell A$ of
  the unit is mono. Let $G$ be a ground type that satisfies
  canonicity, with an injective interpretation, so that \[
    \sem{ \vdash \textsf{c}_i : G}{}
      =
    \sem{ \vdash \textsf{c}_j : G}{}
      \quad\Longrightarrow\quad
    \textsf{c}_i \equiv \textsf{c}_j
  \] Then this interpretation is \emph{adequate for $T_\ell G$},
  in the sense that \[
    \sem{\ \vdash M : T_\ell G}{}
    = \sem{\ \vdash [\textsf{c}_i]_\ell : T_\ell G}{}
      \quad\Longrightarrow\quad
    \vdash M = [\textsf{c}_i]_\ell : T_\ell G
  \]
\end{lem}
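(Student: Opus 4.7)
The plan is to mirror the proofs of Lemma \ref{lem:adeq} and Lemma \ref{lem:dpadeq}, adjusting for the shape of canonical forms at $T_\ell G$ and for the fact that our `constant' on the right of the hypothesis is already wrapped under the monad. The argument breaks into three ingredients: a canonicity step, a soundness step that reduces the equation of semantics to an equation between the underlying constants composed with the unit, and finally a cancellation step that uses the mono requirement.

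First, I would invoke canonicity for $T_\ell G$. Since $T_\ell$ comes with the introduction rule $\Gamma \vdash M : A / \Gamma \vdash [M]_\ell : T_\ell A$ and the reduction $\textsf{let } x = [M]_\ell \textsf{ in } N \rightarrow N[M/x]$, every closed term $\vdash M : T_\ell G$ reduces to a canonical form of the shape $[M']_\ell$, and by canonicity at $G$ we may take $M'$ to be some constant $\textsf{c}_j$ of $G$. Hence $\vdash M = [\textsf{c}_j]_\ell : T_\ell G$ for some $j$.

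Next, by soundness, $\sem{[\textsf{c}_j]_\ell}{} = \sem{M}{} = \sem{[\textsf{c}_i]_\ell}{}$ as morphisms $\mathbf{1} \to T_\ell \sem{G}{}$. In the standard categorical semantics of the monadic metalanguage (which DCC inherits), the denotation of an introduced term is $\sem{[\textsf{c}_k]_\ell}{} = \eta_{\sem{G}{}} \circ \sem{\textsf{c}_k}{}$, where $\eta$ is the unit of $T_\ell$. Therefore we have
\[
  \eta_{\sem{G}{}} \circ \sem{\textsf{c}_i}{}
    \;=\;
  \eta_{\sem{G}{}} \circ \sem{\textsf{c}_j}{}
    \;:\; \mathbf{1} \longrightarrow T_\ell \sem{G}{}.
\]
By the mono requirement on $T_\ell$, the component $\eta_{\sem{G}{}}$ is monic and may therefore be cancelled on the left, yielding $\sem{\textsf{c}_i}{} = \sem{\textsf{c}_j}{} : \mathbf{1} \to \sem{G}{}$. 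Injectivity of the interpretation at $G$ then gives $\textsf{c}_i \equiv \textsf{c}_j$, and combining with the canonicity step we conclude $\vdash M = [\textsf{c}_i]_\ell : T_\ell G$.

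The only nontrivial step is the cancellation of $\eta_{\sem{G}{}}$, which is precisely what the mono requirement is designed to supply; this is why the hypothesis is indispensable, and it is why the statement restricts the conclusion to $T_\ell G$ rather than an arbitrary type (for more complex types we would require compound injectivity data beyond what the hypothesis provides). Everything else is a routine unfolding of the standard interpretation, and the argument is structurally identical to that of Lemma \ref{lem:adeq}, modulo this single cancellation.
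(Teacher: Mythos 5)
Your proof is correct and follows essentially the same route as the paper: normalise $M$ to a canonical form $[\textsf{c}_j]_\ell$ by canonicity, use soundness to equate $\eta_{\sem{G}{}} \circ \sem{\textsf{c}_j}{}$ with $\eta_{\sem{G}{}} \circ \sem{\textsf{c}_i}{}$, cancel the unit via the mono requirement, and conclude by injectivity of the interpretation at $G$. No differences worth noting.
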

\begin{proof}
  The canonical forms at $T_\ell G$ are exactly $
  [\textsf{c}_i]_\ell$. Let $M$ normalise to
  $[\textsf{c}_j]_\ell$.  Then \[
    \eta_G \circ \sem{\textsf{c}_j}{}
      =
    \sem{[\textsf{c}_j]_\ell}{}
      =
    \sem{M}{} 
      =
    \sem{[\textsf{c}_i]_\ell}{}
      =
    \eta_G \circ \sem{\textsf{c}_i}{}
  \] As $\eta_G$ is mono, $\sem{\textsf{c}_j}{} =
  \sem{\textsf{c}_i}{}$, and hence $\textsf{c}_j \equiv
  \textsf{c}_i$.
\end{proof}

\noindent We can now interpret DCC+booleans into
$\textsf{CSet}_\mathcal{L}$, with $\sem{\textsf{Bool}}{} \myeq
\Updelta\mathbb{B} \cong \mathbf{1} + \mathbf{1}$. This
interpretation satisfies all the requirements of Lemma
\ref{lem:dccadeq}---as every component $A \rightarrow
\blacklozenge_\pi A$ is a mono---so it is adequate.

\begin{thm}[Noninterference for DCC]
  Let $A$ be a non-empty type, and let $x : T_\ell A \vdash M :
  T_{\ell'} \textsf{Bool}$ with $\ell \not\sqsubseteq \ell'$.
  Then, for any $\vdash E, E' : T_\ell$, we have \[
    \vdash M[E/x] = M[E'/x] : T_{\ell'} \textsf{Bool}
  \]
\end{thm}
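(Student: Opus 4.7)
The plan is to reduce the claim to an application of Proposition \ref{prop:mmnoninterference} together with DCC Adequacy (Lemma \ref{lem:dccadeq}), using the interpretation $T_\ell \myeq \blacklozenge_{\downarrow \ell}$ and $\sem{\textsf{Bool}}{} \myeq \Updelta\mathbb{B}$ in $\mathbf{CSet}_\mathcal{L}$. Under this interpretation, the typing $x : T_\ell A \vdash M : T_{\ell'} \textsf{Bool}$ becomes a morphism
\[
  f \myeq \sem{x : T_\ell A \vdash M : T_{\ell'} \textsf{Bool}}{} :
    \blacklozenge_{\downarrow \ell} \sem{A}{}
      \longrightarrow
    \blacklozenge_{\downarrow \ell'} \Updelta\mathbb{B}.
\]

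Next I would verify the three hypotheses of Proposition \ref{prop:mmnoninterference} with $\pi \myeq \downarrow \ell$, $\pi' \myeq \downarrow \ell'$, and $B \myeq \Updelta\mathbb{B}$. Since $\ell \not\sqsubseteq \ell'$, the level $\ell$ itself witnesses $\pi - \pi' \neq \emptyset$. The object $\sem{A}{}$ is non-empty, because $A$ is non-empty by hypothesis, so there is a closed term $\vdash a : A$, which soundly interprets as a point $\mathbf{1} \rightarrow \sem{A}{}$. Finally, $\Updelta\mathbb{B}$ is visible at every set of labels---each $R_\ell$ on it is the diagonal---so in particular it is visible at $\pi - \pi'$. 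Applying Proposition \ref{prop:mmnoninterference} then yields that $f$ is a constant function.

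Constancy of $f$ gives, for any $\vdash E, E' : T_\ell A$,
\[
  \sem{M[E/x]}{} = f \circ \sem{E}{} = f \circ \sem{E'}{} = \sem{M[E'/x]}{}.
\]
To move this semantic equality back into the calculus, I would appeal to DCC Adequacy. By canonicity, $M[E/x]$ and $M[E'/x]$ both normalise to canonical forms of $T_{\ell'}\textsf{Bool}$, which are necessarily of the shape $[\texttt{b}]_{\ell'}$ and $[\texttt{b}']_{\ell'}$ with $\texttt{b}, \texttt{b}' \in \{\texttt{tt}, \texttt{ff}\}$. Soundness forces $\sem{[\texttt{b}]_{\ell'}}{} = \sem{[\texttt{b}']_{\ell'}}{}$, and then Lemma \ref{lem:dccadeq}---whose mono requirement holds because each component of the unit of $\blacklozenge_{\downarrow \ell'}$ is an identity on carriers and hence injective, and whose injectivity hypothesis at \textsf{Bool} is witnessed by the coproduct decomposition $\Updelta\mathbb{B} \cong \mathbf{1} + \mathbf{1}$---yields $\texttt{b} \equiv \texttt{b}'$, so $\vdash M[E/x] = [\texttt{b}]_{\ell'} = [\texttt{b}']_{\ell'} = M[E'/x]$.

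The only place where care is required is the verification that the chosen $\pi, \pi', B$ meet the preconditions of Proposition \ref{prop:mmnoninterference}: the translation of the syntactic noninterference constraint $\ell \not\sqsubseteq \ell'$ into the semantic condition $\downarrow \ell - \downarrow \ell' \neq \emptyset$ is the crucial bridge, and visibility of $\sem{\textsf{Bool}}{}$ is what lets the argument run uniformly over \emph{any} $\ell'$. Everything else is a routine splicing together of the adequacy and levelled-cohesion machinery already developed.
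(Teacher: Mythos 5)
Your proposal is correct and follows essentially the same route as the paper's own proof: interpret the judgement as a morphism $\blacklozenge_{\downarrow\ell}\sem{A}{} \rightarrow \blacklozenge_{\downarrow\ell'}\Updelta\mathbb{B}$, translate $\ell \not\sqsubseteq \ell'$ into $\downarrow\ell - \downarrow\ell' \neq \emptyset$, invoke Proposition \ref{prop:mmnoninterference} (using visibility of $\Updelta\mathbb{B}$ everywhere) to get constancy, and finish with DCC adequacy. Your extra care in spelling out the mono requirement and the canonical forms at $T_{\ell'}\textsf{Bool}$ is consistent with what the paper delegates to Lemma \ref{lem:dccadeq}.
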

\begin{proof}
  We have that \[
    \sem{x : T_\ell A \vdash M : T_{\ell'} \textsf{Bool}}{}
      : \blacklozenge_{\downarrow\ell} \sem{A}{}
          \rightarrow \blacklozenge_{\downarrow\ell'}
            \Updelta\mathbb{B}
  \] But $\ell \not\sqsubseteq \ell'$ if and only if
  $\downarrow\ell \not\subseteq \downarrow\ell'$, so
  $\downarrow\ell\ - \downarrow\ell' \neq \emptyset$.  As
  $\Updelta\mathbb{B}$ is visible everywhere, it follows by
  Proposition \ref{prop:mmnoninterference} that $\sem{M}{} :
  \blacklozenge_{\downarrow \ell} \sem{A}{} \rightarrow
  \blacklozenge_{\downarrow \ell'} \Updelta\mathbb{B}$ is a
  constant function, so \[
      \sem{M[E/x]}{}
    =
      \sem{M}{} \circ \sem{E}{}
    =
      \sem{M}{} \circ \sem{E'}{}
    =
      \sem{M[E'/x]}{}
  \] for any $\vdash E, E' : T_{\ell} A$. By adequacy, it follows
  that $\vdash M[E/x] = M[E'/x] : T_{\ell'} \textsf{Bool}$.
\end{proof}

\subsection{The sealing calculus}

The \emph{sealing calculus} was introduced by \citet{Shikuma2008}.
Its history is complicated: it is a simplification of a calculus
introduced by \citet{Tse2004} as a refinement of DCC. The authors
originally hoped to prove noninterference for DCC not through
denotational methods---as in \cite{Abadi1999}---but by translating
it to System F and using parametricity. However, there was a
technical issue in their work. \citet{Shikuma2008} carried out a
similar programme by translating their sealing calculus to simple
types, thus proving noninterference through parametricity for
simple types. Subsequently, \citet{Bowman2015} carried out the
original programme to completion, by translating DCC itself to
System $\text{F}_\omega$.

The sealing calculus is also based on a partial order
$(\mathcal{L}, \sqsubseteq)$ of security levels. It augments the
context of the simply typed $\lambda$-calculus with a finite set
$\pi$ of \emph{observer levels}. Typing judgements are of the form
\[
  \ctxt{\Gamma}{\pi} \vdash M : A
\] The idea is that an observer can only read data the
classification of which is below their observer status. We write
$\ell \sqsubseteq \pi$ to mean that $\ell$ is below some level in
$\pi$.

The introduction rule specifies that a term obtained using
observer access $\ell$ can be \emph{sealed}, thus becoming a term
that can be handled \emph{possibly without} (but not necessarily
without) access $\ell$: \[
  \begin{prooftree}
      \ctxt{\Gamma}{\pi \cup \{\ell\}} \vdash M : A
    \justifies
      \ctxt{\Gamma}{\pi} \vdash [M]_\ell : [A]_\ell
  \end{prooftree}
\] $[A]_\ell$ is a \emph{type sealed at $\ell$}. Conversely, if
the observer level dominates $\ell$, terms can be \emph{unsealed}:
\[
  \begin{prooftree}
      \ctxt{\Gamma}{\pi} \vdash M : [A]_\ell
        \quad
      \ell \sqsubseteq \pi
    \justifies
      \ctxt{\Gamma}{\pi} \vdash M^\ell : A
  \end{prooftree}
\] and, naturally, $\left([M]_\ell\right)^\ell = M$. The rest of
the system is just that of simple types.

Our levelled cohesion can be used quite directly to provide
semantics for the sealing calculus. To do so, notice that the only
rules that interact with the observer context $\pi$ are the
modal/sealing rules: if we forget those for a moment, everything
else is simply-typed $\lambda$-calculus. Thus, we can interpret
the sealing-free/`constant $\pi$' part of the calculus in any
cartesian closed category. We will choose to do so in the
\emph{co-Kleisli category} $\text{CoKl}(\Box_{\downarrow\pi})$ of
$\Box_{\downarrow\pi}$, which has the same objects as
$\mathbf{CSet}_\mathcal{L}$, but whose morphisms $A \rightarrow B$
are the morphisms $\Box_{\downarrow\pi} A \rightarrow B$ of
$\mathbf{CSet}_\mathcal{L}$. This is standard, see e.g. \cite[Ex.
11, \S 10.6]{Awodey2010}. Also standard is the following theorem,
which is considered `folk' by \citet{Brookes1992}, and mentioned
in passing by \citet{Uustalu2008}:
\begin{thm}
  If $Q : \mathcal{C} \longrightarrow \mathcal{C}$ is a
  product-preserving comonad on a cartesian closed category
  $\mathcal{C}$, then its co-Kleisli category $\text{CoKl}(Q)$
  is also cartesian closed.
\end{thm}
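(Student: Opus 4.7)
The plan is to exhibit finite products and exponentials in $\text{CoKl}(Q)$ directly, using the structure already available in $\mathcal{C}$. Since $\text{CoKl}(Q)$ and $\mathcal{C}$ share the same objects, I take the cartesian products of $\mathcal{C}$ as the products in $\text{CoKl}(Q)$; the projections are the co-Kleisli arrows represented by $\pi_i \circ \epsilon_{A \times B}$, and pairing is transported from $\mathcal{C}$ via the identification $\text{Hom}_{\text{CoKl}(Q)}(C, A \times B) = \text{Hom}_\mathcal{C}(QC, A \times B)$. The terminal object of $\mathcal{C}$ remains terminal in $\text{CoKl}(Q)$ because $Q$, being product-preserving, also preserves the empty product.

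For exponentials I propose $B^A_{\text{CoKl}} \myeq B^{QA}$, where the right-hand side is the exponential in $\mathcal{C}$. The required natural bijection follows from the chain
\begin{align*}
  \text{Hom}_{\text{CoKl}(Q)}(C \times A, B)
    &= \text{Hom}_\mathcal{C}(Q(C \times A), B) \\
    &\cong \text{Hom}_\mathcal{C}(QC \times QA, B) \\
    &\cong \text{Hom}_\mathcal{C}(QC, B^{QA}) \\
    &= \text{Hom}_{\text{CoKl}(Q)}(C, B^{QA})
\end{align*}
where the outermost equalities are by definition, the first isomorphism invokes product preservation by $Q$, and the second is currying in $\mathcal{C}$. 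Unpacking this yields an evaluation morphism in $\text{CoKl}(Q)$, represented by the $\mathcal{C}$-arrow obtained by composing the product-preservation isomorphism $Q(B^{QA} \times A) \cong QB^{QA} \times QA$, then the counit $\epsilon_{B^{QA}}$ on the left factor, and finally $\mathcal{C}$'s evaluation $B^{QA} \times QA \to B$.

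The main obstacle will be checking that this bijection is natural in $C$ with respect to $\text{CoKl}(Q)$-morphisms, not merely $\mathcal{C}$-morphisms: precomposition in $\text{CoKl}(Q)$ crucially involves the comultiplication $\delta$, which is invisible in the middle two lines of the chain. The verification reduces to a diagram chase combining naturality of the product-preservation isomorphism $Q(- \times -) \cong Q(-) \times Q(-)$ in each argument, naturality of currying in $\mathcal{C}$, and the comonad laws (counit and coassociativity). Each square is routine, but stacking them together is where the bookkeeping lies; once they all commute, cartesian closure of $\text{CoKl}(Q)$ follows immediately, with the adjunction $(- \times A) \dashv (-)^{QA}$ lifted from $\mathcal{C}$ to $\text{CoKl}(Q)$.
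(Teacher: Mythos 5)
Your proposal is correct. The paper itself gives no proof of this theorem---it cites it as a folklore result of \citet{Brookes1992}---and your argument is exactly the standard one: products are inherited from $\mathcal{C}$ (the identity-on-objects functor $\mathcal{C} \to \text{CoKl}(Q)$ is a right adjoint), the exponential is $B^{QA}$, and the hom-set chain you give, using product preservation of $Q$ and currying in $\mathcal{C}$, establishes the representability. Your identification of the one nontrivial point---naturality in $C$ with respect to co-Kleisli precomposition, which involves $\delta$---is also right, and the chase does close up as you say: it comes down to the naturality of $\delta$ and of the comparison map $\langle Q\pi_1, Q\pi_2\rangle$, together with the counit law $Q\epsilon_A \circ \delta_A = \mathrm{id}$, after which the Yoneda lemma gives the exponential structure.
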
 Thus, a sequent $\ctxt{x_1 : A_1, \dots, x_n : A_n
}{\pi} \vdash M : A$ is interpreted as an arrow \[
  \Box_{\downarrow\pi}(\sem{A_1}{} 
    \times \dots \times 
  \sem{A_n}{})
    \rightarrow 
  \sem{A}{}
\] The idea is that the observer context $\pi$ declassifies
everything at levels below some level in $\pi$. But recall that $\Box_\pi$
is product-preserving, and that in the particular example of
$\mathbf{CSet}_\mathcal{L}$ it is \emph{strictly} product
preserving. Thus, morphisms of the above type are of the form \[
  \Box_{\downarrow\pi}\sem{A_1}{} 
    \times \dots \times 
  \Box_{\downarrow\pi}\sem{A_n}{}
    \rightarrow 
  \sem{A}{}
\] Sealing is uniformly interpreted on types by the redaction
functors: \[
  \sem{[A]_\ell}{} \myeq \blacklozenge_{\downarrow \ell}\
  \sem{A}{}
\] On terms, the sealing and unsealing rules will be interpreted
by using the adjunction $\Box_{\downarrow\pi} \dashv
\blacklozenge_{\downarrow\pi}$ to move between the different
co-Kleisli categories of the comonads $\Box_{\downarrow \pi}$. For
simplicity we explain the case $\ell \not\in \pi$ which discards
$\ell$ from the observer levels. Recall that, by Prop.
\ref{prop:switch}, we have \[
  \Box_{\downarrow(\pi \cup \{\ell\})}
  = \Box_{\downarrow\pi \cup \downarrow\ell}
  = \Box_{\downarrow\pi}\ \Box_{\downarrow\ell}
  = \Box_{\downarrow\ell}\ \Box_{\downarrow\pi}
\] So the interpretation of a term $\ctxt{\Gamma}{\pi \cup
\{\ell\}} \vdash M : A$, which is a morphism $\sem{\Gamma}{}
\rightarrow \sem{A}{}$ in the co-Kleisli category of
$\Box_{\downarrow\pi \cup \{\ell\}}$, is, by the above, really a
morphism of type \[
  \Box_{\downarrow\ell}\left(
    \Box_{\downarrow\pi}\ \sem{A_1}{}
      \times \dots \times
    \Box_{\downarrow\pi}\ \sem{A_n}{}
  \right) \rightarrow \sem{A}{}
\] in $\mathbf{CSet}_\mathcal{L}$. Moving across the adjunction
yields a morphism \[
    \Box_{\downarrow\pi}\ \sem{A_1}{}
      \times \dots \times
    \Box_{\downarrow\pi}\ \sem{A_n}{}
  \rightarrow \blacklozenge_{\downarrow\ell}\sem{A}{}
\] which is now a morphism in the co-Kleisli category of
$\Box_{\pi}$, and we take that to be the interpretation of
$\ctxt{\Gamma}{\pi} \vdash M : [A]_\ell$. Unsealing is obtained
by moving in the opposite direction: doing so, we obtain a
morphism \[
  \Box_{\downarrow\ell}\ \Box_{\downarrow\pi} \sem{A_1}{}
    \times \dots \times
  \Box_{\downarrow\ell}\ \Box_{\downarrow\pi} \sem{A_n}{}
    \rightarrow
  \sem{A}{}
\] But $\ell \sqsubseteq \pi$ implies $\downarrow \pi =\
\downarrow \pi\ \cup \downarrow\ell$, and so
$\Box_{\downarrow\ell}\ \Box_{\downarrow\pi}$ is just
$\Box_{\downarrow\pi}$. Again, it would suffice for these
equalities of modalities to be mere natural isomorphisms.
Soundness of the equations of the sealing calculus then follow
from the fact the adjunction induces a natural isomorphism between
the appropriate hom-sets.

\citet{Shikuma2008} prove the following noninterference theorem.
Say that terms $\ctxt{\cdot}{\pi} \vdash M_1, M_2 : A$ at observer
level $\pi$ are \emph{contextually equivalent} just if there is no
term of ground type at the same observer level that can
distinguish them: that is, if $\ctxt{x : A}{\pi} \vdash N :
\textsf{Bool}$, then $N[M_1/x] = N[M_2/x]$. This defines an
equivalence relation $\approx_\pi$, which extends to substitutions
$\ctxt{\cdot}{\pi} \vdash \sigma : \Gamma$. The main theorem
states that if $\sigma \approx_\pi \sigma' : \Gamma $ and
$\ctxt{\Gamma}{\pi} \vdash E : A$, then $E[\sigma] \approx_\pi
E[\sigma']$. That is: substituting terms indistinguishable at
$\pi$ yields results indistinguishable at $\pi$.

This is a particularly strong noninterference theorem, which takes
many pages of very beautiful---but also painfully
elaborate!---work to show. We will content ourselves with using
the model to provide the following direct corollary: \footnote{It
is easy to show that any $M, N : [A]_\ell$ are contextually
equivalent at $\pi \not\sqsupseteq \ell$ by a logical relations
argument: see \cite[Theorem 2.18]{Shikuma2008}.  Then this result
is the special case of the noninterference theorem, once we
observe that equality and observational equivalence coincide at
the ground type $\textsf{Bool}$.}

\begin{thm}[Noninterference for Sealing Calculus]
  Let $A$ be a non-empty type.  If $\ell \not\sqsubseteq \pi$,
  then for any $\ctxt{\cdot}{\pi} \vdash M, N : [A]_\ell$ and
  $\ctxt{x : [A]_\ell}{\pi} \vdash E : \textsf{Bool}$, we have
  $\ctxt{\cdot}{\pi} \vdash E[M/x] = E[N/x] : \textsf{Bool}$.
\end{thm}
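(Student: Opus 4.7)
The plan is to lift the statement into the categorical semantics of the sealing calculus in $\mathbf{CSet}_\mathcal{L}$, discharge it there using the levelled modal calculus developed in Section \ref{sec:coh2}, and then return to syntax through an adequacy argument at $\textsf{Bool}$. Unfolding the co-Kleisli interpretation described immediately before the statement, the judgement $\ctxt{x : [A]_\ell}{\pi} \vdash E : \textsf{Bool}$ denotes a morphism
\[
  f \myeq \sem{\ctxt{x : [A]_\ell}{\pi} \vdash E : \textsf{Bool}}{}
    : \Box_{\downarrow\pi}\ \blacklozenge_{\downarrow\ell}\sem{A}{}
      \rightarrow \Updelta\mathbb{B}
\]
in $\mathbf{CSet}_\mathcal{L}$, where I take $\sem{\textsf{Bool}}{}$ to be $\Updelta\mathbb{B}$ as in the previous subsections.

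Next I would use Prop.~\ref{prop:switch}(8) to commute the two modalities, rewriting the domain as $\blacklozenge_{\downarrow\ell - \downarrow\pi}\ \Box_{\downarrow\pi}\sem{A}{}$. The assumption $\ell \not\sqsubseteq \pi$ says precisely that $\ell \not\in \downarrow\pi$, so $\ell \in \downarrow\ell - \downarrow\pi$ and in particular this set is non-empty. Since $A$ is non-empty, the carrier $\bars{\sem{A}{}}$ is non-empty, and as $\Box_{\downarrow\pi}$ leaves carriers intact, $\Box_{\downarrow\pi}\sem{A}{}$ is non-empty as well. Moreover, $\Updelta\mathbb{B}$ is visible at every subset of $\mathcal{L}$, so it equals $\Updelta_{\downarrow\ell-\downarrow\pi}Y$ for $Y \myeq U_{\downarrow\ell-\downarrow\pi}\Updelta\mathbb{B}$. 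Prop.~\ref{prop:cocoint2} now applies and yields that $f$ is a constant function.

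From constancy of $f$ I would conclude, for the given closed terms $\vdash M, N : [A]_\ell$, that
\[
  \sem{E[M/x]}{} = f \circ \sem{M}{} = f \circ \sem{N}{} = \sem{E[N/x]}{}
    : \mathbf{1} \rightarrow \Updelta\mathbb{B}.
\]
To translate this semantic equality back into syntax I would invoke an adequacy lemma of exactly the same shape as Lemma~\ref{lem:adeq}: the sealing calculus is confluent and strongly normalising, hence satisfies canonicity at $\textsf{Bool}$, and the two constants $\texttt{tt}, \texttt{ff}$ have distinct denotations as the two coproduct injections into $\Updelta\mathbb{B} \cong \mathbf{1} + \mathbf{1}$, so equality of denotations of closed $\textsf{Bool}$-terms forces syntactic equality.

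The main obstacle is bookkeeping rather than depth. One has to verify that the co-Kleisli interpretation really does package the observer context $\pi$ into the outer $\Box_{\downarrow\pi}$ that Prop.~\ref{prop:switch}(8) requires, that the interpretation of sealing as $\blacklozenge_{\downarrow\ell}$ is correctly composed through that structure, and that the codomain $\Updelta\mathbb{B}$ is presented in the $\Updelta_\pi$-form demanded by Prop.~\ref{prop:cocoint2}. Once these routine identifications are settled, the theorem follows entirely from the abstract calculus of modalities developed in Section~\ref{sec:coh2}, with no direct manipulation of indistinguishability relations.
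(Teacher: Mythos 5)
Your proposal is correct and follows essentially the same route as the paper: interpret $E$ as a morphism $\Box_{\downarrow\pi}\,\blacklozenge_{\downarrow\ell}\sem{A}{} \rightarrow \Updelta\mathbb{B}$, commute the modalities via Prop.~\ref{prop:switch}(8), use $\ell \not\sqsubseteq \pi$ to get $\downarrow\ell - \downarrow\pi \neq \emptyset$ and conclude constancy, then finish by adequacy at $\textsf{Bool}$ from confluence and strong normalisation. The only cosmetic difference is that you invoke Prop.~\ref{prop:cocoint2} directly where the paper cites Prop.~\ref{prop:mmnoninterference} with $\pi' = \emptyset$ (which is the same argument unwound by one step), and the paper is explicit that the composites $\sem{E}{} \circ \sem{M}{}$ are co-Kleisli composites, though this makes no difference since $\Box_{\downarrow\pi}$ is the identity on carriers.
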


\noindent First, we notice that \citet{Shikuma2008} show
confluence and strong normalisation for the sealing calculus
(including unit and coproducts, which subsume $\textsf{Bool}$).
Thus canonicity holds, and Lemma \ref{lem:adeq} applies to yield
adequacy at $\textsf{Bool}$. The domain of $\sem{\ctxt{x :
[A]_\ell}{\pi} \vdash E : \textsf{Bool}}{}$ is
$\blacklozenge_{\downarrow\ell} \sem{A}{}$ as an object of
$\text{CoKl}(\Box_{\downarrow\pi})$, and hence as an object of
$\textbf{CSet}_\mathcal{L}$ it is \[
  \Box_{\downarrow\pi}\ \blacklozenge_{\downarrow\ell} \sem{A}{}
  = \blacklozenge_{\downarrow\ell - \downarrow\pi}\
  \Box_{\downarrow\pi} \sem{A}{}
\] by Prop. \ref{prop:switch}(8). Its codomain is
$\Delta\mathbb{B}$, which is visible everywhere. But, as $\ell
\not\sqsubseteq \pi$, we have that $\downarrow\ell\ -
\downarrow\pi \neq \emptyset$, so Prop.
\ref{prop:mmnoninterference} applies (with $\pi' = \emptyset$) to
show that it is a constant function. Then, \[
  \sem{E[M/x]}{} 
    = \sem{E}{} \circ_\text{CoKl} \sem{M}{}
    = \sem{E}{} \circ_\text{CoKl} \sem{N}{} 
    = \sem{E[N/x]}{}
\] where $\circ_\text{CoKl}$ is composition
$\text{CoKl}(\Box_{\downarrow\pi})$, and by using adequacy the
proof is complete.

\section{Conclusion}
  \label{sec:conc}

To recapitulate: we have defined the model of classified sets, and
shown that it forms a pre-cohesion. This led us to the generation
of modalities $\int \dashv \Box \dashv \blacklozenge$, and the
proof of noninterference properties for Moggi's monadic
metalanguage and the Davies-Pfenning comonadic calculus. Next, we
took a levelled view of cohesion, and showed that this generates a
multi-modal framework $\int_\pi \dashv \Box_\pi \dashv
\blacklozenge_\pi$. These modalities satisfy many algebraic laws,
which we then used to prove noninterference for two multi-modal
information flow calculi, the dependency core calculus, and the
sealing calculus. 

We discuss two aspects of our work that we believe may lead to
interesting future developments. 

\paragraph{Cohesion as a theory of information flow}

Our results demonstrate that the very general and abstract
framework of pre-cohesion in fact has very concrete applications
in analysing information flow. Our noninterference proofs rely on
very general lemmas about pre-cohesion---in the case of
\S\ref{sec:noninterference2}, with some additional, slightly
mysterious equations---which are then applied to each calculus by
using some usually very simple form of adequacy. We believe
these to be a simplification compared to previous work on
noninterference, which required quite a bit of hard work in terms
of fully abstract/fully complete translations, e.g.
\cite{Shikuma2008, Bowman2015}.

However, we believe this to be the tip of the iceberg: cohesion
can tell us much more about information flow. However, this cannot
happen unless we replace adequacy, which is a very weak form of
completeness, with something stronger. It would be very
interesting to see whether there are information flow calculi for
which classified sets are \emph{fully abstract}, in the sense that
$M$ and $N$ are contextually equivalent if and only if $\sem{M}{}
= \sem{N}{}$ in $\textbf{CSet}_\mathcal{L}$. That would grant us
far more power to use ideas from and properties of cohesion to
prove theorems about information flow.

Conversely, we can seek information flow calculi with likeness to
cohesion-based models: we can look at what structure is available
in the multi-modal setting of classified sets, and then try to
formulate a calculus from that. This would most likely lead to a
multi-modal version of the \emph{spatial type theory} of
\citet{Shulman2018}. Such a calculus would also have much to offer
in terms of resolving the debate between, for example,
coarse-grained and fine-grained formulations, as discussed by
\citet{Rajani2018}. In a sense, its formulation would provide a
mathematical justification for \emph{canonical}, type-theoretic
choices of modalities and constructs. (That is, if we of course
accept classified sets as a canonical model of information flow.)
Additionally, the calculus would include the \emph{shape at $\pi$}
modality ($\int_\pi$), which has never appeared before in papers
on information flow type systems, and which might have interesting
applications as a `security quotient' or `secure view' type
constructor. Finally, there seem to be close connections between
this work and \emph{graded monads and comonads}, which also have
applications in information flow: see \cite{Gaboardi2016}.

\paragraph{Cohesion as a basis for multi-modal type theories}

The other side of the coin in the present development is that
information flow---which is a garden-variety application for
multi-modal types---can be quite eloquently spoken about in this
language of cohesion. In particular, the formulation of a functor
$\mathcal{P}(\mathcal{L})^\text{op} \longrightarrow
\textbf{Precoh}$ enabled very short and conceptual proofs of
results that would ordinarily require a lot of uninteresting
relational reasoning. It is thus natural to ask whether there
might be other functors $\mathcal{C}^\text{op} \longrightarrow
\textbf{Precoh}$ of \emph{$\textbf{Precoh}$-valued presheaves}
that yield interesting analyses of multi-modal logical systems. In
fact, some of the notation we developed in \S\ref{sec:coh2} bears
a striking similarity to the adjoint logic of \citet{Licata2016},
and the framework of \citet{Licata2017}. Could there be a closer
connection between these developments?

The combination of the parametricity-style reasoning and cohesion
has also recently appeared in the multi-modal type theories of
\citet{Nuyts2017} and \citet{Nuyts2018}. There is certainly
potential for a very interesting connection to be made there, e.g.
by devising a dependent type theory for information flow, for
which the noninterference theorems can be proven
\emph{internally}.

\paragraph{Related work}

As mentioned in the introduction, the problem of information flow
is almost as old as Computer Science itself, dating at least as
far back as Bell and LaPadula's report of 1973 \cite{Bell1996}.
The notion of \emph{noninterference} itself was introduced by
\citet{Goguen1982}. The use of types to guarantee information flow
control, and hence some form of noninterference, appears to have
begun in the 1990s, with the first work on higher-order functional
programming being that of \citet{Heintze1998} on the SLam
calculus; see \emph{op. cit.} for a useful list of references to
approaches that preceded it. \citet{Rajani2018} provide a good
overview and a largely complete list of references to the
literature thereafter.

Even the first higher-order noninterference results, such as those
for SLam \cite{Heintze1998}, use some form of logical relations,
who directly state that they are ``borrowing ideas from
Reynolds.'' In particular, \citet{Heintze1998} use logical
relations on top of a denotational model, which then led to the
\emph{dependency category} of \citet{Abadi1999}, which we have
refined into classified sets. Other than that, there appears to be
very little other work on denotational models of information flow;
even if situated outside the higher-order functional setting, we
ought to mention the work of \citet{Sabelfeld2001}.

Furthermore, we should note that this paper is the first
\emph{categorical} approach to information flow. Even though we
mostly use the motivating example of classified sets, all our
theorems are rather general, and apply to all pre-cohesive
settings indexed over subsets of $\mathcal{L}$ for which the
equations \ref{eqn:1}, \ref{eqn:2} and \ref{eqn:3} apply. It is
also worth noting that we did not use any particular structure on
$\mathcal{L}$, even though most of the related work cited above
asks for some kind of lattice---even if they do not use it either:
we merely made use of indexing over $\mathcal{P}(\mathcal{L})$.

\begin{acks}
  I would like to thank Dan Licata for numerous observations that
  led to the material in this paper. The pasting diagrams in
  \S\ref{sec:stack} are due to Amar Hadzihasanovic. Thanks are due
  to Mario Alvarez-Picallo, Mike Shulman, and the anonymous
  reviewers for their many useful suggestions, corrections, and
  careful reading. Finally, I would also like to thank Dan
  Licata's cat, Otto, for keeping me company during the writing of
  this paper.

  This material is based upon work supported by the
  \grantsponsor{}{Air Force Office of Scientific Research}{} under
  award number \grantnum{}{FA9550-16-1-0292}. Any opinions,
  finding, and conclusions or recommendations expressed in this
  material are those of the author(s) and do not necessarily
  reflect the views of the United States Air Force.
\end{acks}



\begin{thebibliography}{43}


\ifx \showCODEN    \undefined \def \showCODEN     #1{\unskip}     \fi
\ifx \showDOI      \undefined \def \showDOI       #1{#1}\fi
\ifx \showISBNx    \undefined \def \showISBNx     #1{\unskip}     \fi
\ifx \showISBNxiii \undefined \def \showISBNxiii  #1{\unskip}     \fi
\ifx \showISSN     \undefined \def \showISSN      #1{\unskip}     \fi
\ifx \showLCCN     \undefined \def \showLCCN      #1{\unskip}     \fi
\ifx \shownote     \undefined \def \shownote      #1{#1}          \fi
\ifx \showarticletitle \undefined \def \showarticletitle #1{#1}   \fi
\ifx \showURL      \undefined \def \showURL       {\relax}        \fi
\providecommand\bibfield[2]{#2}
\providecommand\bibinfo[2]{#2}
\providecommand\natexlab[1]{#1}
\providecommand\showeprint[2][]{arXiv:#2}

\bibitem[\protect\citeauthoryear{Abadi, Banerjee, Heintze, and Riecke}{Abadi
  et~al\mbox{.}}{1999}]%
        {Abadi1999}
\bibfield{author}{\bibinfo{person}{Mart{\'{i}}n Abadi},
  \bibinfo{person}{Anindya Banerjee}, \bibinfo{person}{Nevin Heintze}, {and}
  \bibinfo{person}{Jon~G Riecke}.} \bibinfo{year}{1999}\natexlab{}.
\newblock \showarticletitle{{A core calculus of dependency}}. In
  \bibinfo{booktitle}{\emph{Proceedings of the 26th ACM SIGPLAN-SIGACT
  symposium on Principles of programming languages - POPL '99}}.
  \bibinfo{publisher}{ACM Press}, \bibinfo{address}{New York, New York, USA},
  \bibinfo{pages}{147--160}.
\newblock
\urldef\tempurl%
\url{https://doi.org/10.1145/292540.292555}
\showDOI{\tempurl}


\bibitem[\protect\citeauthoryear{Abramsky and Tzevelekos}{Abramsky and
  Tzevelekos}{2011}]%
        {Abramsky2011a}
\bibfield{author}{\bibinfo{person}{Samson Abramsky} {and}
  \bibinfo{person}{Nikos Tzevelekos}.} \bibinfo{year}{2011}\natexlab{}.
\newblock \showarticletitle{{Introduction to Categories and Categorical
  Logic}}.
\newblock In \bibinfo{booktitle}{\emph{New Structures for Physics}},
  \bibfield{editor}{\bibinfo{person}{Bob Coecke}} (Ed.).
  \bibinfo{publisher}{Springer-Verlag}, \bibinfo{pages}{3--94}.
\newblock
\urldef\tempurl%
\url{https://doi.org/10.1007/978-3-642-12821-9_1}
\showDOI{\tempurl}
\showeprint[arxiv]{1102.1313}


\bibitem[\protect\citeauthoryear{Awodey}{Awodey}{2010}]%
        {Awodey2010}
\bibfield{author}{\bibinfo{person}{Steve Awodey}.}
  \bibinfo{year}{2010}\natexlab{}.
\newblock \bibinfo{booktitle}{\emph{{Category Theory}}}.
\newblock \bibinfo{publisher}{Oxford University Press}.
\newblock
\showISBNx{9780191612558}


\bibitem[\protect\citeauthoryear{Benton, Bierman, and de~Paiva}{Benton
  et~al\mbox{.}}{1998}]%
        {Benton1998}
\bibfield{author}{\bibinfo{person}{Nick Benton}, \bibinfo{person}{Gavin~M.
  Bierman}, {and} \bibinfo{person}{Valeria de Paiva}.}
  \bibinfo{year}{1998}\natexlab{}.
\newblock \showarticletitle{{Computational types from a logical perspective}}.
\newblock \bibinfo{journal}{\emph{Journal of Functional Programming}}
  \bibinfo{volume}{8}, \bibinfo{number}{2} (\bibinfo{year}{1998}),
  \bibinfo{pages}{177--193}.
\newblock
\showISSN{09567968}
\urldef\tempurl%
\url{https://doi.org/10.1017/S0956796898002998}
\showDOI{\tempurl}


\bibitem[\protect\citeauthoryear{Borceux}{Borceux}{1994}]%
        {Borceux1994}
\bibfield{author}{\bibinfo{person}{Francis Borceux}.}
  \bibinfo{year}{1994}\natexlab{}.
\newblock \bibinfo{booktitle}{\emph{{Handbook of Categorical Algebra}}}.
\newblock \bibinfo{publisher}{Cambridge University Press},
  \bibinfo{address}{Cambridge}.
\newblock
\showISBNx{9780511525865}
\urldef\tempurl%
\url{https://doi.org/10.1017/CBO9780511525865}
\showDOI{\tempurl}


\bibitem[\protect\citeauthoryear{Bowman and Ahmed}{Bowman and Ahmed}{2015}]%
        {Bowman2015}
\bibfield{author}{\bibinfo{person}{William~J. Bowman} {and}
  \bibinfo{person}{Amal Ahmed}.} \bibinfo{year}{2015}\natexlab{}.
\newblock \showarticletitle{{Noninterference for free}}. In
  \bibinfo{booktitle}{\emph{Proceedings of the 20th ACM SIGPLAN International
  Conference on Functional Programming - ICFP 2015}}. \bibinfo{publisher}{ACM
  Press}, \bibinfo{address}{New York, New York, USA},
  \bibinfo{pages}{101--113}.
\newblock
\showISBNx{9781450336697}
\showISSN{0362-1340}
\urldef\tempurl%
\url{https://doi.org/10.1145/2784731.2784733}
\showDOI{\tempurl}


\bibitem[\protect\citeauthoryear{Brookes and Geva}{Brookes and Geva}{1992}]%
        {Brookes1992}
\bibfield{author}{\bibinfo{person}{Stephen Brookes} {and} \bibinfo{person}{Shai
  Geva}.} \bibinfo{year}{1992}\natexlab{}.
\newblock \showarticletitle{{Computational comonads and intensional
  semantics}}.
\newblock In \bibinfo{booktitle}{\emph{Applications of Categories in Computer
  Science}}, \bibfield{editor}{\bibinfo{person}{M.~P. Fourman},
  \bibinfo{person}{Peter~T Johnstone}, {and} \bibinfo{person}{Andrew~M Pitts}}
  (Eds.). Vol.~\bibinfo{volume}{177}. \bibinfo{publisher}{Cambridge University
  Press}, \bibinfo{address}{Cambridge}, \bibinfo{pages}{1--44}.
\newblock
\showISBNx{9780521427265}
\urldef\tempurl%
\url{https://doi.org/10.1017/CBO9780511525902.003}
\showDOI{\tempurl}


\bibitem[\protect\citeauthoryear{Clouston, Bizjak, {Bugge Grathwohl}, and
  Birkedal}{Clouston et~al\mbox{.}}{2016}]%
        {Clouston2016}
\bibfield{author}{\bibinfo{person}{Ranald Clouston},
  \bibinfo{person}{Al{\v{e}}s Bizjak}, \bibinfo{person}{Hans {Bugge
  Grathwohl}}, {and} \bibinfo{person}{Lars Birkedal}.}
  \bibinfo{year}{2016}\natexlab{}.
\newblock \showarticletitle{{The guarded lambda calculus: Programming and
  reasoning with guarded recursion for coinductive types}}.
\newblock \bibinfo{journal}{\emph{Logical Methods in Computer Science}}
  \bibinfo{volume}{12}, \bibinfo{number}{3} (\bibinfo{year}{2016}),
  \bibinfo{pages}{1--39}.
\newblock
\urldef\tempurl%
\url{https://doi.org/10.2168/LMCS-12(3:7)2016}
\showDOI{\tempurl}


\bibitem[\protect\citeauthoryear{Crole}{Crole}{1993}]%
        {Crole1993}
\bibfield{author}{\bibinfo{person}{Roy~L. Crole}.}
  \bibinfo{year}{1993}\natexlab{}.
\newblock \bibinfo{booktitle}{\emph{{Categories for Types}}}.
\newblock \bibinfo{publisher}{Cambridge University Press}.
\newblock
\showISBNx{0 521 45701 7}


\bibitem[\protect\citeauthoryear{Curien, Fiore, and Munch-Maccagnoni}{Curien
  et~al\mbox{.}}{2016}]%
        {Curien2016}
\bibfield{author}{\bibinfo{person}{Pierre-Louis Curien},
  \bibinfo{person}{Marcelo Fiore}, {and} \bibinfo{person}{Guillaume
  Munch-Maccagnoni}.} \bibinfo{year}{2016}\natexlab{}.
\newblock \showarticletitle{{A theory of effects and resources: adjunction
  models and polarised calculi}}. In \bibinfo{booktitle}{\emph{Proceedings of
  the 43rd Annual ACM SIGPLAN-SIGACT Symposium on Principles of Programming
  Languages - POPL 2016}}. \bibinfo{publisher}{ACM Press},
  \bibinfo{address}{New York, New York, USA}, \bibinfo{pages}{44--56}.
\newblock
\showISBNx{9781450335492}
\showISSN{15232867}
\urldef\tempurl%
\url{https://doi.org/10.1145/2837614.2837652}
\showDOI{\tempurl}


\bibitem[\protect\citeauthoryear{Davies and Pfenning}{Davies and
  Pfenning}{1996}]%
        {Davies1996}
\bibfield{author}{\bibinfo{person}{Rowan Davies} {and} \bibinfo{person}{Frank
  Pfenning}.} \bibinfo{year}{1996}\natexlab{}.
\newblock \showarticletitle{{A modal analysis of staged computation}}. In
  \bibinfo{booktitle}{\emph{Proceedings of the 23rd ACM SIGPLAN-SIGACT
  Symposium on Principles of Programming Languages (POPL'96)}}.
  \bibinfo{pages}{258--270}.
\newblock
\urldef\tempurl%
\url{https://doi.org/10.1145/382780.382785}
\showDOI{\tempurl}


\bibitem[\protect\citeauthoryear{Davies and Pfenning}{Davies and
  Pfenning}{2001}]%
        {Davies2001a}
\bibfield{author}{\bibinfo{person}{Rowan Davies} {and} \bibinfo{person}{Frank
  Pfenning}.} \bibinfo{year}{2001}\natexlab{}.
\newblock \showarticletitle{{A modal analysis of staged computation}}.
\newblock \bibinfo{journal}{\emph{J. ACM}} \bibinfo{volume}{48},
  \bibinfo{number}{3} (\bibinfo{year}{2001}), \bibinfo{pages}{555--604}.
\newblock
\urldef\tempurl%
\url{https://doi.org/10.1145/382780.382785}
\showDOI{\tempurl}


\bibitem[\protect\citeauthoryear{Denning}{Denning}{1976}]%
        {Denning1976}
\bibfield{author}{\bibinfo{person}{Dorothy~E Denning}.}
  \bibinfo{year}{1976}\natexlab{}.
\newblock \showarticletitle{{A lattice model of secure information flow}}.
\newblock \bibinfo{journal}{\emph{Commun. ACM}} \bibinfo{volume}{19},
  \bibinfo{number}{5} (\bibinfo{year}{1976}), \bibinfo{pages}{236--243}.
\newblock
\showISBNx{0001-0782}
\showISSN{00010782}
\urldef\tempurl%
\url{https://doi.org/10.1145/360051.360056}
\showDOI{\tempurl}


\bibitem[\protect\citeauthoryear{Gaboardi, Katsumata, Orchard, Breuvart, and
  Uustalu}{Gaboardi et~al\mbox{.}}{2016}]%
        {Gaboardi2016}
\bibfield{author}{\bibinfo{person}{Marco Gaboardi}, \bibinfo{person}{Shin-ya
  Katsumata}, \bibinfo{person}{Dominic Orchard}, \bibinfo{person}{Flavien
  Breuvart}, {and} \bibinfo{person}{Tarmo Uustalu}.}
  \bibinfo{year}{2016}\natexlab{}.
\newblock \showarticletitle{{Combining effects and coeffects via grading}}. In
  \bibinfo{booktitle}{\emph{Proceedings of the 21st ACM SIGPLAN International
  Conference on Functional Programming - ICFP 2016}}. \bibinfo{publisher}{ACM
  Press}, \bibinfo{address}{New York, New York, USA},
  \bibinfo{pages}{476--489}.
\newblock
\showISBNx{9781450342193}
\urldef\tempurl%
\url{https://doi.org/10.1145/2951913.2951939}
\showDOI{\tempurl}


\bibitem[\protect\citeauthoryear{Goguen and Meseguer}{Goguen and
  Meseguer}{1982}]%
        {Goguen1982}
\bibfield{author}{\bibinfo{person}{J.~A. Goguen} {and} \bibinfo{person}{J.
  Meseguer}.} \bibinfo{year}{1982}\natexlab{}.
\newblock \showarticletitle{{Security Policies and Security Models}}. In
  \bibinfo{booktitle}{\emph{1982 IEEE Symposium on Security and Privacy}}.
  \bibinfo{publisher}{IEEE}, \bibinfo{pages}{11--11}.
\newblock
\showISBNx{0-8186-0410-7}
\urldef\tempurl%
\url{https://doi.org/10.1109/SP.1982.10014}
\showDOI{\tempurl}


\bibitem[\protect\citeauthoryear{Heintze and Riecke}{Heintze and
  Riecke}{1998}]%
        {Heintze1998}
\bibfield{author}{\bibinfo{person}{Nevin Heintze} {and} \bibinfo{person}{Jon~G
  Riecke}.} \bibinfo{year}{1998}\natexlab{}.
\newblock \showarticletitle{{The SLam calculus: programming with secrecy and
  integrity}}. In \bibinfo{booktitle}{\emph{Proceedings of the 25th ACM
  SIGPLAN-SIGACT symposium on Principles of programming languages - POPL '98}}.
  \bibinfo{publisher}{ACM Press}, \bibinfo{address}{New York, New York, USA},
  \bibinfo{pages}{365--377}.
\newblock
\showISBNx{0897919793}
\urldef\tempurl%
\url{https://doi.org/10.1145/268946.268976}
\showDOI{\tempurl}


\bibitem[\protect\citeauthoryear{Hermida, Reddy, and Robinson}{Hermida
  et~al\mbox{.}}{2014}]%
        {Hermida2014}
\bibfield{author}{\bibinfo{person}{Claudio Hermida}, \bibinfo{person}{Uday~S.
  Reddy}, {and} \bibinfo{person}{Edmund~P. Robinson}.}
  \bibinfo{year}{2014}\natexlab{}.
\newblock \showarticletitle{{Logical relations and parametricity - A Reynolds
  Programme for category theory and programming languages}}.
\newblock \bibinfo{journal}{\emph{Electronic Notes in Theoretical Computer
  Science}}  \bibinfo{volume}{303} (\bibinfo{year}{2014}),
  \bibinfo{pages}{149--180}.
\newblock
\showISSN{15710661}
\urldef\tempurl%
\url{https://doi.org/10.1016/j.entcs.2014.02.008}
\showDOI{\tempurl}


\bibitem[\protect\citeauthoryear{Hofmann}{Hofmann}{1999}]%
        {Hofmann1999}
\bibfield{author}{\bibinfo{person}{Martin Hofmann}.}
  \bibinfo{year}{1999}\natexlab{}.
\newblock \emph{\bibinfo{title}{{Type Systems for Polynomial-Time
  Computation}}}.
\newblock Habilitation thesis. \bibinfo{school}{Technischen Universit{\"{a}}t
  Darmstadt}.
\newblock
\urldef\tempurl%
\url{http://www.lfcs.inf.ed.ac.uk/reports/99/ECS-LFCS-99-406/}
\showURL{%
\tempurl}


\bibitem[\protect\citeauthoryear{Johnstone}{Johnstone}{2003}]%
        {Johnstone2003}
\bibfield{author}{\bibinfo{person}{Peter~T. Johnstone}.}
  \bibinfo{year}{2003}\natexlab{}.
\newblock \bibinfo{booktitle}{\emph{{Sketches of an Elephant: A Topos Theory
  Compendium}}}.
\newblock \bibinfo{publisher}{Clarendon Press}.
\newblock


\bibitem[\protect\citeauthoryear{Kavvos}{Kavvos}{2017a}]%
        {Kavvos2017b}
\bibfield{author}{\bibinfo{person}{G.~A. Kavvos}.}
  \bibinfo{year}{2017}\natexlab{a}.
\newblock \showarticletitle{{Dual-context calculi for modal logic}}. In
  \bibinfo{booktitle}{\emph{2017 32nd Annual ACM/IEEE Symposium on Logic in
  Computer Science (LICS)}}. \bibinfo{publisher}{IEEE}.
\newblock
\showISBNx{978-1-5090-3018-7}
\urldef\tempurl%
\url{https://doi.org/10.1109/LICS.2017.8005089}
\showDOI{\tempurl}


\bibitem[\protect\citeauthoryear{Kavvos}{Kavvos}{2017b}]%
        {Kavvos2017c}
\bibfield{author}{\bibinfo{person}{G.~A. Kavvos}.}
  \bibinfo{year}{2017}\natexlab{b}.
\newblock \bibinfo{booktitle}{\emph{{Dual-context calculi for modal logic
  (technical report)}}}.
\newblock \bibinfo{type}{{T}echnical {R}eport}.
  \bibinfo{institution}{University of Oxford}.
\newblock
\urldef\tempurl%
\url{http://www.lambdabetaeta.eu/papers/dualcalc.pdf}
\showURL{%
\tempurl}


\bibitem[\protect\citeauthoryear{Krishnaswami}{Krishnaswami}{2013}]%
        {Krishnaswami2013}
\bibfield{author}{\bibinfo{person}{Neelakantan~R. Krishnaswami}.}
  \bibinfo{year}{2013}\natexlab{}.
\newblock \showarticletitle{{Higher-order functional reactive programming
  without spacetime leaks}}. In \bibinfo{booktitle}{\emph{Proceedings of the
  18th ACM SIGPLAN international conference on Functional programming - ICFP
  '13}}. ACM, \bibinfo{publisher}{ACM Press}, \bibinfo{address}{New York, New
  York, USA}, \bibinfo{pages}{221}.
\newblock
\showISBNx{9781450323260}
\urldef\tempurl%
\url{https://doi.org/10.1145/2500365.2500588}
\showDOI{\tempurl}


\bibitem[\protect\citeauthoryear{LaPadula and Bell}{LaPadula and Bell}{1996}]%
        {Bell1996}
\bibfield{author}{\bibinfo{person}{Leonard~J. LaPadula} {and}
  \bibinfo{person}{D.~Elliott Bell}.} \bibinfo{year}{1996}\natexlab{}.
\newblock \showarticletitle{{Secure Computer Systems: Mathematical
  Foundations}}.
\newblock \bibinfo{journal}{\emph{Journal of Computer Security}}
  \bibinfo{volume}{4}, \bibinfo{number}{2-3} (\bibinfo{year}{1996}),
  \bibinfo{pages}{239--263}.
\newblock
\showISBNx{M74-244}
\showISSN{18758924}
\urldef\tempurl%
\url{https://doi.org/10.3233/JCS-1996-42-308}
\showDOI{\tempurl}


\bibitem[\protect\citeauthoryear{Lawvere}{Lawvere}{2007}]%
        {Lawvere2007}
\bibfield{author}{\bibinfo{person}{F.~William Lawvere}.}
  \bibinfo{year}{2007}\natexlab{}.
\newblock \showarticletitle{{Axiomatic cohesion}}.
\newblock \bibinfo{journal}{\emph{Theory and Applications of Categories}}
  \bibinfo{volume}{19}, \bibinfo{number}{3} (\bibinfo{year}{2007}),
  \bibinfo{pages}{41--49}.
\newblock
\showISSN{1201561X}
\urldef\tempurl%
\url{http://www.tac.mta.ca/tac/volumes/19/3/19-03.pdf}
\showURL{%
\tempurl}


\bibitem[\protect\citeauthoryear{Lawvere and Menni}{Lawvere and Menni}{2015}]%
        {Lawvere2015}
\bibfield{author}{\bibinfo{person}{F.~William Lawvere} {and}
  \bibinfo{person}{M. Menni}.} \bibinfo{year}{2015}\natexlab{}.
\newblock \showarticletitle{{Internal choice holds in the discrete part of any
  cohesive topos satisfying stable connected codiscreteness}}.
\newblock \bibinfo{journal}{\emph{Theory and Applications of Categories}}
  \bibinfo{volume}{30}, \bibinfo{number}{26} (\bibinfo{year}{2015}),
  \bibinfo{pages}{909--932}.
\newblock
\showISSN{1201561X}
\urldef\tempurl%
\url{http://www.tac.mta.ca/tac/volumes/30/26/30-26.pdf}
\showURL{%
\tempurl}


\bibitem[\protect\citeauthoryear{Licata and Shulman}{Licata and
  Shulman}{2016}]%
        {Licata2016}
\bibfield{author}{\bibinfo{person}{Daniel~R. Licata} {and}
  \bibinfo{person}{Michael Shulman}.} \bibinfo{year}{2016}\natexlab{}.
\newblock \showarticletitle{{Adjoint Logic with a 2-Category of Modes}}.
\newblock In \bibinfo{booktitle}{\emph{Proceedings of LFCS 2016}}.
  \bibinfo{pages}{219--235}.
\newblock
\urldef\tempurl%
\url{https://doi.org/10.1007/978-3-319-27683-0_16}
\showDOI{\tempurl}


\bibitem[\protect\citeauthoryear{Licata, Shulman, and Riley}{Licata
  et~al\mbox{.}}{2017}]%
        {Licata2017}
\bibfield{author}{\bibinfo{person}{Daniel~R. Licata}, \bibinfo{person}{Michael
  Shulman}, {and} \bibinfo{person}{Mitchell Riley}.}
  \bibinfo{year}{2017}\natexlab{}.
\newblock \showarticletitle{{A Fibrational Framework for Substructural and
  Modal Logics}}. In \bibinfo{booktitle}{\emph{2nd International Conference on
  Formal Structures for Computation and Deduction (FSCD 2017)}}
  \emph{(\bibinfo{series}{Leibniz International Proceedings in Informatics
  (LIPIcs)})}, \bibfield{editor}{\bibinfo{person}{Dale Miller}} (Ed.),
  Vol.~\bibinfo{volume}{84}. \bibinfo{publisher}{Schloss
  Dagstuhl--Leibniz-Zentrum fuer Informatik}, \bibinfo{pages}{25:1----25:22}.
\newblock
\urldef\tempurl%
\url{https://doi.org/10.4230/LIPIcs.FSCD.2017.25}
\showDOI{\tempurl}


\bibitem[\protect\citeauthoryear{{Mac Lane}}{{Mac Lane}}{1978}]%
        {MacLane1978}
\bibfield{author}{\bibinfo{person}{Saunders {Mac Lane}}.}
  \bibinfo{year}{1978}\natexlab{}.
\newblock \bibinfo{booktitle}{\emph{{Categories for the Working
  Mathematician}}}. \bibinfo{series}{Graduate Texts in Mathematics},
  Vol.~\bibinfo{volume}{5}.
\newblock \bibinfo{publisher}{Springer New York}, \bibinfo{address}{New York,
  NY}.
\newblock
\showISBNx{978-1-4419-3123-8}
\urldef\tempurl%
\url{https://doi.org/10.1007/978-1-4757-4721-8}
\showDOI{\tempurl}


\bibitem[\protect\citeauthoryear{Miyamoto and Igarashi}{Miyamoto and
  Igarashi}{2004}]%
        {Miyamoto2004}
\bibfield{author}{\bibinfo{person}{Kenji Miyamoto} {and}
  \bibinfo{person}{Atsushi Igarashi}.} \bibinfo{year}{2004}\natexlab{}.
\newblock \showarticletitle{{A Modal Foundation for Secure Information Flow}}.
  In \bibinfo{booktitle}{\emph{Proceedings of the Workshop on Foundations of
  Computer Security (FCS'04)}}. \bibinfo{pages}{187--203}.
\newblock


\bibitem[\protect\citeauthoryear{Moggi}{Moggi}{1991}]%
        {Moggi1991}
\bibfield{author}{\bibinfo{person}{Eugenio Moggi}.}
  \bibinfo{year}{1991}\natexlab{}.
\newblock \showarticletitle{{Notions of computation and monads}}.
\newblock \bibinfo{journal}{\emph{Information and Computation}}
  \bibinfo{volume}{93}, \bibinfo{number}{1} (\bibinfo{year}{1991}),
  \bibinfo{pages}{55--92}.
\newblock
\urldef\tempurl%
\url{https://doi.org/10.1016/0890-5401(91)90052-4}
\showDOI{\tempurl}


\bibitem[\protect\citeauthoryear{Nuyts and Devriese}{Nuyts and
  Devriese}{2018}]%
        {Nuyts2018}
\bibfield{author}{\bibinfo{person}{Andreas Nuyts} {and}
  \bibinfo{person}{Dominique Devriese}.} \bibinfo{year}{2018}\natexlab{}.
\newblock \showarticletitle{{Degrees of Relatedness}}. In
  \bibinfo{booktitle}{\emph{Proceedings of the 33rd Annual ACM/IEEE Symposium
  on Logic in Computer Science - LICS '18}}. \bibinfo{publisher}{ACM Press},
  \bibinfo{address}{New York, New York, USA}, \bibinfo{pages}{779--788}.
\newblock
\showISBNx{9781450355834}
\urldef\tempurl%
\url{https://doi.org/10.1145/3209108.3209119}
\showDOI{\tempurl}


\bibitem[\protect\citeauthoryear{Nuyts, Vezzosi, and Devriese}{Nuyts
  et~al\mbox{.}}{2017}]%
        {Nuyts2017}
\bibfield{author}{\bibinfo{person}{Andreas Nuyts}, \bibinfo{person}{Andrea
  Vezzosi}, {and} \bibinfo{person}{Dominique Devriese}.}
  \bibinfo{year}{2017}\natexlab{}.
\newblock \showarticletitle{{Parametric quantifiers for dependent type
  theory}}.
\newblock \bibinfo{journal}{\emph{Proceedings of the ACM on Programming
  Languages}} \bibinfo{volume}{1}, \bibinfo{number}{ICFP}
  (\bibinfo{year}{2017}).
\newblock
\showISSN{24751421}
\urldef\tempurl%
\url{https://doi.org/10.1145/3110276}
\showDOI{\tempurl}


\bibitem[\protect\citeauthoryear{Petricek, Orchard, and Mycroft}{Petricek
  et~al\mbox{.}}{2014}]%
        {Petricek2014}
\bibfield{author}{\bibinfo{person}{Tomas Petricek}, \bibinfo{person}{Dominic
  Orchard}, {and} \bibinfo{person}{Alan Mycroft}.}
  \bibinfo{year}{2014}\natexlab{}.
\newblock \showarticletitle{{Coeffects: A calculus of context-dependent
  computation}}.
\newblock \bibinfo{journal}{\emph{Proceedings of the 19th ACM SIGPLAN
  international conference on Functional programming - ICFP '14}}
  (\bibinfo{year}{2014}), \bibinfo{pages}{123--135}.
\newblock
\showISBNx{9781450328739}
\showISSN{0362-1340}
\urldef\tempurl%
\url{https://doi.org/10.1145/2628136.2628160}
\showDOI{\tempurl}


\bibitem[\protect\citeauthoryear{Pierce}{Pierce}{2002}]%
        {Pierce2002}
\bibfield{author}{\bibinfo{person}{Benjamin~C. Pierce}.}
  \bibinfo{year}{2002}\natexlab{}.
\newblock \bibinfo{booktitle}{\emph{{Types and Programming Languages}}}.
\newblock \bibinfo{publisher}{The MIT Press}.
\newblock
\showISBNx{0-262-16209-1}


\bibitem[\protect\citeauthoryear{Plotkin}{Plotkin}{1977}]%
        {Plotkin1977}
\bibfield{author}{\bibinfo{person}{Gordon~D. Plotkin}.}
  \bibinfo{year}{1977}\natexlab{}.
\newblock \showarticletitle{{LCF considered as a programming language}}.
\newblock \bibinfo{journal}{\emph{Theoretical Computer Science}}
  \bibinfo{volume}{5}, \bibinfo{number}{3} (\bibinfo{year}{1977}),
  \bibinfo{pages}{223--255}.
\newblock
\showISBNx{03043975}
\showISSN{03043975}
\urldef\tempurl%
\url{https://doi.org/10.1016/0304-3975(77)90044-5}
\showDOI{\tempurl}


\bibitem[\protect\citeauthoryear{Rajani and Garg}{Rajani and Garg}{2018}]%
        {Rajani2018}
\bibfield{author}{\bibinfo{person}{Vineet Rajani} {and} \bibinfo{person}{Deepak
  Garg}.} \bibinfo{year}{2018}\natexlab{}.
\newblock \showarticletitle{{Types for Information Flow Control: Labeling
  Granularity and Semantic Models}}. In \bibinfo{booktitle}{\emph{31st IEEE
  Symposium on Computer Security Foundations (CSF 2018)}}.
\newblock
\showeprint[arxiv]{1805.00120}


\bibitem[\protect\citeauthoryear{Rushby}{Rushby}{1986}]%
        {Rushby1986}
\bibfield{author}{\bibinfo{person}{John Rushby}.}
  \bibinfo{year}{1986}\natexlab{}.
\newblock \showarticletitle{{The Bell and La Padula Security Model}}.
\newblock \bibinfo{journal}{\emph{Draft report, Computer Science Laboratory,
  SRI}} (\bibinfo{year}{1986}), \bibinfo{pages}{1--19}.
\newblock
\urldef\tempurl%
\url{https://doi.org/10.1007/978-1-4419-5906-5_811}
\showDOI{\tempurl}


\bibitem[\protect\citeauthoryear{Sabelfeld and Sands}{Sabelfeld and
  Sands}{2001}]%
        {Sabelfeld2001}
\bibfield{author}{\bibinfo{person}{Andrei Sabelfeld} {and}
  \bibinfo{person}{David Sands}.} \bibinfo{year}{2001}\natexlab{}.
\newblock \showarticletitle{{A per model of secure information flow in
  sequential programs}}.
\newblock \bibinfo{journal}{\emph{Higher-Order and Symbolic Computation}}
  \bibinfo{volume}{14} (\bibinfo{year}{2001}), \bibinfo{pages}{59--91}.
\newblock
\showISBNx{3540656995}
\showISSN{16113349}
\urldef\tempurl%
\url{https://doi.org/10.1023/A:1011553200337}
\showDOI{\tempurl}


\bibitem[\protect\citeauthoryear{Shikuma and Igarashi}{Shikuma and
  Igarashi}{2008}]%
        {Shikuma2008}
\bibfield{author}{\bibinfo{person}{Naokata Shikuma} {and}
  \bibinfo{person}{Atsushi Igarashi}.} \bibinfo{year}{2008}\natexlab{}.
\newblock \showarticletitle{{Proving Noninterference by a Fully Complete
  Translation to the Simply Typed lambda-calculus}}.
\newblock \bibinfo{journal}{\emph{Logical Methods in Computer Science}}
  \bibinfo{volume}{4}, \bibinfo{number}{3} (\bibinfo{year}{2008}),
  \bibinfo{pages}{10}.
\newblock
\showISSN{18605974}
\urldef\tempurl%
\url{https://doi.org/10.2168/LMCS-4(3:10)2008}
\showDOI{\tempurl}


\bibitem[\protect\citeauthoryear{Shulman}{Shulman}{2018}]%
        {Shulman2018}
\bibfield{author}{\bibinfo{person}{Michael Shulman}.}
  \bibinfo{year}{2018}\natexlab{}.
\newblock \showarticletitle{{Brouwer's fixed-point theorem in real-cohesive
  homotopy type theory}}.
\newblock \bibinfo{journal}{\emph{Mathematical Structures in Computer Science}}
  \bibinfo{volume}{28}, \bibinfo{number}{6} (\bibinfo{year}{2018}),
  \bibinfo{pages}{856--941}.
\newblock
\showISSN{09601295}
\urldef\tempurl%
\url{https://doi.org/10.1017/S0960129517000147}
\showDOI{\tempurl}
\showeprint[arxiv]{1509.07584}


\bibitem[\protect\citeauthoryear{Streicher}{Streicher}{2006}]%
        {Streicher2006}
\bibfield{author}{\bibinfo{person}{Thomas Streicher}.}
  \bibinfo{year}{2006}\natexlab{}.
\newblock \bibinfo{booktitle}{\emph{{Domain-theoretic Foundations of Functional
  Programming}}}.
\newblock \bibinfo{publisher}{World Scientific}.
\newblock


\bibitem[\protect\citeauthoryear{Tse and Zdancewic}{Tse and Zdancewic}{2004}]%
        {Tse2004}
\bibfield{author}{\bibinfo{person}{Stephen Tse} {and} \bibinfo{person}{Steve
  Zdancewic}.} \bibinfo{year}{2004}\natexlab{}.
\newblock \showarticletitle{{Translating dependency into parametricity}}. In
  \bibinfo{booktitle}{\emph{Proceedings of the ninth ACM SIGPLAN international
  conference on Functional programming - ICFP '04}}. \bibinfo{publisher}{ACM
  Press}, \bibinfo{address}{New York, New York, USA}, \bibinfo{pages}{115}.
\newblock
\showISBNx{1581139055}
\urldef\tempurl%
\url{https://doi.org/10.1145/1016850.1016868}
\showDOI{\tempurl}


\bibitem[\protect\citeauthoryear{Uustalu and Vene}{Uustalu and Vene}{2008}]%
        {Uustalu2008}
\bibfield{author}{\bibinfo{person}{Tarmo Uustalu} {and} \bibinfo{person}{Varmo
  Vene}.} \bibinfo{year}{2008}\natexlab{}.
\newblock \showarticletitle{{Comonadic Notions of Computation}}.
\newblock \bibinfo{journal}{\emph{Electronic Notes in Theoretical Computer
  Science}} \bibinfo{volume}{203}, \bibinfo{number}{5} (\bibinfo{year}{2008}),
  \bibinfo{pages}{263--284}.
\newblock
\showISBNx{15710661}
\showISSN{15710661}
\urldef\tempurl%
\url{https://doi.org/10.1016/j.entcs.2008.05.029}
\showDOI{\tempurl}


\end{thebibliography}

\end{document}